\documentclass[a4paper]{article}

\usepackage[utf8]{inputenc}
\usepackage[T1]{fontenc}
\usepackage{textcomp}
\usepackage[british]{babel}
\usepackage[left=1in,right=1in,top=1in,bottom=0.5in,includefoot=true]{geometry}
\usepackage{amsmath, amsfonts, amsthm, amssymb}
\usepackage[shortlabels]{enumitem}
\usepackage{mathrsfs}
\usepackage{bbm}
\usepackage{accents}
\usepackage[backend=biber,doi=false,isbn=false,maxbibnames=99]{biblatex}
\usepackage{csquotes}
\usepackage{microtype}
\usepackage{xcolor}
\usepackage{booktabs}
\usepackage{graphicx}
\usepackage{zref-clever}
\usepackage{hyperref}

\newcommand{\cref}[1]{\zcref{#1}}
\newcommand{\Cref}[1]{\zcref[S]{#1}}
\zcsetup{cap,nameinlink=false}
\zcRefTypeSetup{equation}{abbrev}
\AddToHook{env/definition/begin}{%
	\zcsetup{countertype={definition=definition}}}
\AddToHook{env/assumption/begin}{%
	\zcsetup{countertype={definition=assumption}}}
\AddToHook{env/example/begin}{%
	\zcsetup{countertype={definition=example}}}
\AddToHook{env/theorem/begin}{%
	\zcsetup{countertype={definition=theorem}}}
\AddToHook{env/corollary/begin}{%
	\zcsetup{countertype={definition=corollary}}}
\AddToHook{env/proposition/begin}{%
	\zcsetup{countertype={definition=proposition}}}
\AddToHook{env/lemma/begin}{%
	\zcsetup{countertype={definition=lemma}}}
\AddToHook{env/remark/begin}{%
	\zcsetup{countertype={definition=remark}}}

\DeclareFieldFormat{eprint:ssrn}{%
  SSRN\addcolon\space
  \ifhyperref
    {\href{https://ssrn.com/abstract=#1}{\nolinkurl{#1}}}
    {\nolinkurl{#1}}%
}
\DeclareFieldAlias{eprint:SSRN}{eprint:ssrn}

\newcommand{\N}{\mathbb{N}}
\newcommand{\R}{\mathbb{R}}
\newcommand{\E}{\mathbb{E}}
\newcommand{\Q}{\mathbb{Q}}

\makeatletter \renewcommand\d[1]{\ensuremath{%
  \;\mathrm{d}#1\@ifnextchar\d{\!}{}}}
\makeatother

\newcommand{\condbar}{\;\middle|\;}
\newcommand{\ubar}[1]{\underaccent{\bar}{#1}}
\newcommand{\one}{\textbf{1}}

\theoremstyle{definition}
\newtheorem{definition}{Definition}[section]
\newtheorem{assumption}[definition]{Assumption}
\newtheorem{example}[definition]{Example}
\theoremstyle{plain}
\newtheorem{theorem}[definition]{Theorem}
\newtheorem{corollary}[definition]{Corollary}
\newtheorem{proposition}[definition]{Proposition}
\newtheorem{lemma}[definition]{Lemma}
\theoremstyle{remark}
\newtheorem{remark}[definition]{Remark}

\zcRefTypeSetup{assumption}{
Name-sg = Assumption ,
name-sg = assumption ,
Name-pl = Assumptions ,
name-pl = assumptions ,
}

\numberwithin{table}{section}
\numberwithin{figure}{section}

\numberwithin{equation}{section}

\definecolor{linkcolor}{cmyk}{.2,.98,1.,.18}
\definecolor{citecolor}{cmyk}{.49,.0,.85,.38}
\hypersetup{colorlinks=true,linkcolor=linkcolor,citecolor=citecolor}

\allowdisplaybreaks[1]

\usepackage{authblk}

\title{Optimal Investment and Consumption in Financial Markets with Integrated Variance Clocks}

\author[1]{Eduardo Abi Jaber}
\author[2]{Florian Gutekunst}
\author[2,3]{Martin Herdegen}
\author[2]{David Hobson}
\affil[1]{CMAP, École Polytechnique}
\affil[2]{Department of Statistics, University of Warwick}
\affil[3]{Institut für Stochastik und Anwendungen, University of Stuttgart}

\date{22nd September 2026}

\makeatletter
\def\blfootnote{\xdef\@thefnmark{}\@footnotetext}
\makeatother

\begin{document}
		\maketitle
		\blfootnote{Eduardo Abi Jaber is grateful for the financial support from the Chaires FiME-FDD and  Financial Risks at Ecole Polytechnique.  Florian Gutekunst gratefully acknowledges funding from the Statistics Centre for Doctoral Training at the University of Warwick.}

		\begin{abstract}
                We  study the infinite-horizon optimal investment and consumption problem in a general class of continuous financial markets, where uncertainty is driven by a continuous non-decreasing stochastic clock representing accumulated variance. This framework encompasses classical Markovian and non-Markovian stochastic volatility models as well as singular realized-variance models in which no spot volatility process exists. We characterize the value process and optimal investment and consumption strategies in terms of a non-linear infinite-horizon backward stochastic differential equation driven jointly by calendar time and the stochastic clock. We develop a general well-posedness theory for this new class of IVC-BSDEs based on the method of sub- and supersolutions, establishing existence, uniqueness, and stability under natural conditions that might be of independent interest beyond the financial application at hand. We are moreover able to identify the sign of the $Z$-component of the solution using Malliavin calculus. We then apply our results to Volterra Heston models with locally integrable kernels, covering both rough and hyper-rough regimes. Exploiting the affine structure of the model, we verify the optimality of the candidate strategies in incomplete markets and obtain an explicit representation of the solution in the complete market case. Owing to the generality of the framework and the weak assumptions imposed on the stochastic clock, our results unify and extend several existing results for optimal investment and consumption, including  classical Markovian stochastic volatility models.
		\end{abstract}

\section{Introduction}

Continuous-time portfolio optimization traditionally models uncertainty through an instantaneous volatility process. In this paradigm, the asset price evolves in calendar time, while its local variability is described by a spot volatility process $\sigma$, so that the continuous quadratic variation of the logarithm of the asset price is given by $\int_0^{\cdot} \sigma_t^2dt.$ 
This viewpoint underlies virtually all continuous-time models for optimal portfolio allocation, ranging from classical Markovian stochastic volatility models (\textcite{Zariphopoulou2001,  lim2004quadratic, kraft2005optimal, liu2007portfolio, Gutekunst2025}) to more recent non-Markovian (and rough) Volterra volatility models (\textcite{AbiJaber2021a, Aichinger2023, Baeuerle2020,  dupret2021portfolio, Han2020, Han2021}). Even in these more general settings, the primitive object is the spot volatility process, from which accumulated variance is subsequently obtained by integration.

This modeling philosophy is largely driven by mathematical tractability. Indeed, spot volatility is not directly observable and is notoriously difficult to estimate accurately from market data. By contrast, cumulative quantities such as realized variance are more easily inferred from directly observable quantities and statistically more robust. This suggests that it may be more natural to take accumulated variance “$\int_0^{\cdot} \sigma^2_t dt$” itself as the primitive source of randomness and to view asset prices as evolving in stochastic market time rather than deterministic calendar time. This observation motivates the central modelling principle of this paper: rather than specifying a spot volatility process, we take integrated variance itself as the primitive state variable. Such perspective has been formalized by \Textcite{clark1973subordinated} through subordinated Brownian models, where returns are indexed by an increasing stochastic clock representing the flow of market activity. In mathematical finance, this viewpoint evolved into the theory of time-changed Lévy processes (\Textcite{carr2004time}), see also \Textcite{mendoza2010time}.

Motivated by this approach, we study optimal investment and consumption in a generic continuous class of models within the market time paradigm, where the stock price evolves according to a Brownian motion $W$ under a general continuous non-decreasing stochastic clock $U$. More precisely, under the physical measure $\mathbb P$, the random fluctuations of the stock price are driven by the time-changed Brownian motion $W_U$, while the risk premium is assumed to accumulate absolutely continuously with respect to the stochastic clock through a term of the form $
\int_0^{\cdot}\Lambda_sdU_s$,
for some predictable process $\Lambda$. This reflects the idea that excess returns scale with realized variance rather than calendar time. The stock price is therefore modeled as
\begin{equation*}
\frac{dS_t}{S_t} = r_t dt + 
\Lambda_t dU_t + \rho dW_{U_t} + \sqrt{1-\rho^2} dW^\perp_{U_t},
\end{equation*}
where $r$ denotes the short-rate process. Unlike the classical independent subordination framework, the stochastic clock $U$ is allowed to depend on the Brownian motion $W$ (but not $W^\perp$), creating a dependence between the asset price and its realized variance, the so-called leverage effect. Such a formulation naturally encodes market incompleteness whenever $|\rho| \neq 1$.

This framework contains the classical stochastic volatility setting as a particular case. Indeed, whenever the clock is absolutely continuous with respect to Lebesgue measure, one may write $
U_t=\int_0^t \sigma_s^2 ds$
for some process $\sigma$, in which case the time-changed Brownian motion admits the representation
$dW_{U_t}=\sigma_t dB_t$ for some Brownian motion $B$, thereby recovering the usual stochastic volatility form. Typically, the volatility process $\sigma$ is specified as a function of an underlying factor $Y$, that is $\sigma_t=\sigma(Y_t)$, where $Y$ follows either Markovian or non-Markovian dynamics. 

A prototypical Markovian example is when $Y$ is given by an autonomous diffusion driven by a Brownian motion $B$.
A prototypical non-Markovian example is given by stochastic Volterra models of the form
\begin{equation}\label{eq:introvolterra}
Y_t
= Y_0 + \int_0^t K(t-s)\mu(Y_s)ds + \int_0^t K(t-s)\eta(Y_s)dB_s.
\end{equation}
with kernels $K\in L^2$, where again $B$ is a Brownian motion.
In this setting the factor process $Y$ may fail to be Markovian and may even fail to be a semimartingale, for instance, for the fractional kernel $K(t)=t^{H-1/2}$, with $H \in (0,1)$ and $H \neq 0.5$. This places the corresponding portfolio optimization problem outside the scope of classical dynamic programming methods and in this Volterra setting, optimal investment and consumption problems remain largely unexplored. 

We emphasise that our framework not only encompasses the Markovian and non-Markovian Volterra settings described above, it also goes beyond them by allowing for a general clock $U$, which need not be absolutely continuous with respect to Lebesgue measure. In particular, we allow for singular realized-variance dynamics, where there may be no process $\sigma$ such that $U_t=\int_0^t \sigma_s^2ds$. Financially, this makes realized variance the primitive object of the model, rather than an instantaneous volatility process, which is typically unobservable and difficult to estimate. Mathematically, such singular clocks naturally arise in highly irregular volatility models, including hyper-rough Volterra models with kernels in $L^1$ but not in $L^2$ as studied by \Textcite{AbiJaber2021, Jusselin2020}. From this perspective, continuous but singular clocks can be viewed as an intermediate regime between classical stochastic volatility models, where the clock is absolutely continuous, and jump-type time changes, where the clock itself is discontinuous. In fact, certain  continuous Volterra clocks converge to pure-jump processes, when the kernel becomes sufficiently singular and falls outside $L^1$, see for instance \Textcite{abijaber2026volterra}.

The singular and non-Markovian nature of the clock creates serious challenges for optimal investment and consumption. Classical Hamilton--Jacobi--Bellman methods are not available without a Markovian state variable. Standard semimartingale techniques are also difficult to apply directly, since the natural control and utility accumulation may take place with respect to the random measure $dU$ rather than the Lebesgue measure. The purpose of this paper is to develop a tractable approach to optimal investment and consumption in this general incomplete time-changed market. Our method is based on a new class of non-linear infinite-horizon backward stochastic differential equations formulated directly with respect to the integrated variance clock
\begin{equation}\label{eq:introIVCBSDE}
dF_s = (a_s F_s + g(F_s)) ds + b_s F_s dU_s + Z_s dW_{U_s}
\end{equation}
where $a, b$ are some processes, $g$ is an increasing (but not necessarily Lipschitz) function, and the finite-variation measure $dU$ may be singular with respect to Lebesgue measure.

\subsection*{Our contributions}

Our first contribution is to develop a general theory for solving a new class of non-linear infinite-horizon BSDEs.
Since the driver of the BSDE is non-Lipschitz in the $Y$-component, standard existence results based on the contraction mapping theorem do not apply here.
However, by exploiting the monotonicity of the driver, we develop a theory of sub- and supersolutions.
We prove that a solution always lies between a pair of ordered sub- and supersolutions, and we establish the existence of maximal and minimal solutions, which are instrumental for establishing uniqueness.
Under a uniform lower boundedness condition, we obtain existence of a solution as well as uniqueness (Theorem~\ref{existence:uniformly_wellposed}) and stability (Theorem~\ref{stability:uniformly_wellposed}).
Moreover, under an additional uniform upper boundedness condition, we provide a necessary and sufficient criterion for the existence of a solution (Theorem~\ref{existence:eta_bounded}).
Under a sign condition on the Malliavin derivative of one of the driving processes in the BSDE, we are able to identify the sign of the $Z$-component, which seems to be a novel contribution to BSDE theory (\cref{properties:volatility:non-positive}).

Our second contribution is to connect the solution of the BSDE to the value process and optimal controls of the infinite-horizon investment-consumption problem. 
This yields a generic framework for optimal consumption and investment under general realized-variance dynamics, including singular, non-Markovian, and non-semimartingale volatility structures.
We prove a general verification result (Theorem~\ref{verification:verification}) relying on a duality argument, extending classical methods to our non-Markovian time-changed setting.
As an immediate application, this yields verification in strongly myopically well-posed models with bounded myopic consumption rate (\cref{verification:example:bounded_eta}).
Here, the myopic consumption rate is the optimal consumption rate in the Black-Scholes model that arises from holding the stochastic factor constant in its current state, and the model is called strongly myopically well-posed if the myopic consumption rate is positive and bounded away from zero.

Our third contribution in Section~\ref{section:heston} is to demonstrate that the general integrated variance clock framework applies to Volterra Heston models with locally integrable kernels as in \textcite{AbiJaber2021}.
This encompasses both locally square-integrable kernels and hyper-rough regimes with locally integrable kernels, where the integrated variance is not absolutely continuous and the spot variance does not exist (see \cite{Jusselin2020}). 
We show that our verification theorem applies in both complete and incomplete market settings, using different strategies for each case. The key technical step is to establish the martingale property of a certain Doléans--Dade exponential for which the standard Novikov criterion does not apply. In the complete setting, we show that the BSDE admits an explicit representation in terms of a Riccati-Volterra equation, allowing us to establish the required martingale property using relatively standard arguments from the theory of affine Volterra processes (Theorem~\ref{heston:complete:verification}). In the incomplete setting, no such explicit representation is available. Instead we derive tight upper and lower bounds for the BSDE solution and prove that its $Z$-component is non-positive. 
This latter property provides a key ingredient that enables us to establish the required martingale property, and the analysis of a BSDE via study of the sign of the $Z$-component is 
a novel contribution to the BSDE literature in its own right (Theorem~\ref{heston:verification}).

\subsection*{Related literature}
        Optimal investment and consumption is a long-studied classical problem in mathematical finance.
        In a Black-Scholes market, i.e. a model with constant coefficients, this was first studied by \textcite{Merton1969,Merton1971} and can be treated by both primal and dual methods, see e.g. \textcite{Karatzas1986,Davis1990,Herdegen2021}.

        The situation becomes more difficult when one introduces an additional stochastic factor that drives the coefficients of the asset price.
		Explicit solutions to the infinite-horizon inverstment-consumption problem are then only known when the market is complete or under logarithmic utility; in general the optimal controls are given in terms of the solution to some non-linear equation.
		Even once the existence of a solution to this equation is shown, the lack of an explicit expression makes it hard to verify the properties of the solution that are needed to give a rigorous verification argument.

		Classically, it is assumed that the stochastic factor can be represented as an Itô diffusion.
        Over the finite horizon, the problem without intermediate consumption in an incomplete market was studied by \textcite{Karatzas1991,Zariphopoulou2001}. 
		The consumption problem in a complete market was analysed by \textcite{Wachter2002}.
		Over the infinite-horizon, \textcite{Fleming2003} considered the consumption problem in an incomplete market in a specific stochastic factor model.
		\Textcite{Hata2012,Hata2012a} study the problem under uniform ellipticity constraints. 
		Importantly, none of the above papers' settings includes a Heston-type model.
        Recently, \Textcite{Ferrari2026} investigated the related problem of optimal consumption in the presence of labour income in the Kim-Omberg model in a complete market.

        \Textcite{Guasoni2020} study the incomplete-market consumption problem in a very general diffusion setting.
		Using the method of sub- and supersolutions, they provide existence results for the candidate solution and prove a very general verification result.
		They use their results to prove the existence of a candidate solution in the Heston model, but do not verify the assumptions of the verification theorem in this model.
		In a follow-up paper, \textcite{Guasoni2019} consider the Vasicek stochastic interest rate model. 
		They show the existence of a candidate solution using the method of sub- and supersolutions, and prove optimality using a specifically tailored verification theorem.
		Through a variational formulation, \textcite{Guasoni2025} are able to prove existence and verification for strongly myopically well-posed models if the stochastic factor is ergodic and its scale function diverges at the boundary of the state space. 
		In the complete market setting, their results apply for general relative risk aversion $R$, but are restricted to $R \in (0, 1)$ for incomplete markets.
		The variational formulation also leads to a numerical scheme for computing the optimal consumption rate.
		
		Recently, \textcite{Gutekunst2025} developed methods that allow for proving existence and verification in a large class of models.
		Using sub- and supersolutions that are proportional to the myopic consumption rate (i.e. the optimal consumption rate in the Black-Scholes model that arises from holding the stochastic factor constant in its current state), they show the existence of a global positive solution for a large class of models.
		Using the specific form of these sub- and supersolutions, they are then able to characterise the asymptotic behaviour of the solution and its logarithmic derivative.
		This then allows for showing that the candidate controls are indeed optimal.
		In particular, these methods allow them to show existence and verification in the Heston model.
		Moreover, they give a full characterisation of the well-posedness of the investment-consumption problem when the stochastic factor is modelled as a continuous-time Markov chain with a finite number of states.
		In this finite regime setting, the optimal consumption rate can be approximated numerically through a fixed point iteration.
		By discretising, this yields an efficient numerical scheme when the stochastic factor is an Itô diffusion.

        While the aforementioned literature assumes that the stochastic factor can be represented as an Itô diffusion, there has recently been considerable interest in non-Markovian volatility models. In particular, the class of Volterra volatility models as in \eqref{eq:introvolterra} has attracted significant attention because of its flexibility and strong empirical performance. The choice of the kernel allows one to capture a wide range of temporal dependence structures, including fractional (\textcite{comte1998long}) and  rough volatility (\textcite{Gatheral2018}),  path-dependent volatility dynamics (\textcite{guyon2023volatility}), and multifactor specifications (\textcite{AbiJaber2019a}). Extensive empirical studies, such as \textcite{abi2025volatility}, have shown that Volterra volatility models provide  excellent fits to S\&P 500 option prices over a wide range of maturities and market regimes. From a mathematical perspective, however, moving to a non-Markovian setting substantially complicates portfolio optimization, since one can no longer rely on the Hamilton--Jacobi--Bellman equation.

        Despite these challenges, several portfolio optimization problems have been solved over finite horizons mostly within the tractable affine Volterra framework of \textcite{AbiJaber2019}. In particular, the Volterra Heston model introduced for the fractional kernel by  \textcite{ElEuch2019}, has become the canonical example, and has been studied in the context of mean-variance portfolio selection.
        \Textcite{Han2020} study the Markowitz mean-variance portfolio selection problem in a one-asset Volterra Heston market.
		\Textcite{AbiJaber2021a} extend their results to multivariate affine and quadratic Volterra models.
		\Textcite{Fouque2018,Fouque2019} consider the Merton problem in a stochastic factor model in which the stochastic factor is a general stochastic process adapted to the Brownian filtration and the market price of risk is bounded.
		\Textcite{Baeuerle2020} treat the Merton problem in a fractional Heston-type model.
		\Textcite{Han2021} consider the Merton problem in a single-asset Volterra Heston model with locally square-integrable kernel and relative risk aversion $R \in (0, 1)$, which was extended by \textcite{Aichinger2023} to multivariate Volterra-Wishart models (which include multivariate Volterra-Heston models).
		Under Epstein-Zin stochastic differential utility, the investment-consumption problem in a non-Markovian stochastic factor model was treated by \textcite{Feng2026} under an exponential integrability condition on the interest rate and market price of risk.

        Our work is also related to the literature on (backward) stochastic differential equations driven by random clocks of the type \eqref{eq:introIVCBSDE}. 
        Forward SDEs of this type were studied by \textcite{Kobayashi2010}.
        A smaller literature considers the corresponding backward theory, see e.g.\ \textcite{Chen2026,DiNunno2014}.
        The works on BSDEs typically formulate the equation solely w.r.t.\ the stochastic clock.
        By contrast, the investment-consumption problem considered here leads to a BSDE whose finite variation part is driven both by calendar time and the stochastic clock.

        Thus, all of these works consider finite-horizon problems or require additional structural assumptions on the volatility dynamics. The present paper develops a general infinite-horizon theory formulated directly in terms of the integrated variance clock, allowing both absolutely continuous and singular variance clocks within a unified framework.

        \subsection*{Outline}

        The remainder of the paper is organized as follows. \Cref{section:setting} introduces the problem setting.
        \Cref{section:hjb} studies the associated infinite-horizon BSDE, develops the method of sub- and supersolutions, and establishes the existence, uniqueness, stability, and sign identification results. 
        \Cref{section:verification} provides the general verification theorem and its application to models with bounded myopic consumption rate. 
        Finally, \cref{section:heston} applies the theory to the Volterra Heston model. 
        \Cref{appendix:auxiliary} collects some auxiliary technical results.
        
		\section{Integrated Variance Clock  Models}
		\label{section:setting} 
        
        We work on a filtered probability space $(\Omega, \mathcal{F}, \mathbb{F} = (\mathcal{F}_t)_{t\ge 0}, \mathbb{P})$, where the filtration $\mathbb{F} = (\mathcal{F}_t)_{t\ge 0}$ is assumed to be right-continuous and to satisfy the extension property that whenever $(\Q_t)_{t \geq 0}$ is a coherent family of probability measures on $(\Omega, \mathcal{F})$, i.e., $\Q_t$ and $\Q_s$ coincide on $\mathcal{F}_{t \wedge s}$ for all $t, s \geq 0$, then there exists a probability measure $\Q$ on $(\Omega, \mathcal{F})$ such that $\Q$ coincides with $\Q_t$ on $\mathcal{F}_t$ for all $t \geq 0$. Note that this implies that $\mathbb{F}$ cannot satisfy the completeness assumption made in the usual conditions. For this reason, we assume instead that $\mathbb{F}$ satisfies the so-called natural augmentation; see \textcite[Section 2]{najnude2011}. Then
         virtually all results needed for stochastic calculus carry over verbatim but the extension property is preserved; see \textcite[Section 3 and Corollary 4.10]{najnude2011}. Also note that the extension property is always satisfied if we work on the (naturally augmented) path spaces of continuous or c\`adl\`ag functions in $\R^d$ for some $d > 0$; see \textcite[Proposition 4.4]{najnude2011}.

        Throughout the paper, we denote by $\mathbb{F}^{X}$ the right-continuous and naturally augmented filtration generated by an $\mathbb{F}$-adapted process $X$.

        We assume that the interest rate $r$, excess return per unit variance $\Lambda$ and integrated variance process $U$ are all progressively measurable w.r.t.\ $\mathbb{F}$.
        We assume that $U$ is continuous and non-decreasing with $U_0 = 0$, and that \[
                \int_0^t |r_s| \d s + \int_0^t \Lambda_s^2 \d U_s < \infty \text{ a.s.}
        \] for all $t > 0$.
        Note that none of the processes $r, \Lambda, U$ are assumed to be Markovian.
        
        We consider the following model for the bank account $S^0$ and risky asset price process $S$:
       \begin{equation}
                \label{eq:IVC} 
                \begin{aligned}
				        \frac{dS^0_t}{S^0_t} &= r_t dt, \\
				        \frac{dS_t}{S_t} &= r_t dt + \Lambda_t d U_t + \rho dM_t + \sqrt{1-\rho^2} dM^{\perp}_t,
		          \end{aligned} 
        \end{equation}
        where $M$ and $M^\perp$ are strongly orthogonal continuous $\mathbb{F}$-local martingales with quadratic variation $\langle M \rangle = \langle M^\perp \rangle = U$ and $\rho \in [-1, 1]$ is constant.

        By the Dambis–Dubins–Schwarz theorem, possibly after enlarging the probability space, \eqref{eq:IVC} can alternatively be written as 
        \begin{equation}
                \label{eq:IVC_DDS}
                \frac{dS_t}{S_t} = r_t dt + \Lambda_t d U_t + \rho dW_{U_t} + \sqrt{1-\rho^2} dW^{\perp}_{U_t},
        \end{equation}
        where $(W, W^\perp)$ is the Dambis–Dubins–Schwarz Brownian motion associated to $(M, M^\perp)$.
       
        We call models of the form \eqref{eq:IVC}/\eqref{eq:IVC_DDS} integrated variance clock (IVC) models. The key modeling principle is to regard the integrated variance process $U$ as the primitive object, rather than an instantaneous variance process. Classical stochastic volatility models are recovered when $U$ is absolutely continuous with respect to the Lebesgue measure, i.e.~$U=\int_0^{\cdot} \sigma_s^2 ds$ where $\sigma$ is the spot volatility process, while singular clocks naturally accommodate  models for which no spot variance exists. Both these cases are illustrated in the following examples.

        \begin{example}[Stochastic volatility models]
                \label{setting:typical_factor_model}
                A typical stochastic volatility factor model is of the form
        \begin{equation}
                \label{setting:typical_factor_model_dynamics}
                \begin{aligned}
                        \frac{dS^0_t}{S^0_t} &= r(Y_t)\,dt, \\
                        \frac{dS_t}{S_t}
                        &= \bigl(r(Y_t)+\lambda(Y_t)\sigma(Y_t)\bigr)\,dt
                        +\sigma(Y_t)\left(\rho\,dB_t+\sqrt{1-\rho^2}\,dB_t^\perp\right),
                \end{aligned}
        \end{equation}
        where $B$ and $B^\perp$ are independent Brownian motions and the stochastic factor $Y$ is adapted to the filtration generated by $B$. This class of models forms the absolutely continuous subclass of IVC models \eqref{eq:IVC} by choosing $r_t=r(Y_t),\Lambda_t=\frac{\lambda(Y_t)}{\sigma(Y_t)}$, together with
        \[
                U_t=\int_0^t\sigma(Y_s)^2\,ds,\qquad
                M_t=\int_0^t\sigma(Y_s)\,dB_s,\qquad
                M_t^\perp=\int_0^t\sigma(Y_s)\,dB_s^\perp
        .\]
        The factor process $Y$ may be either Markovian or non-Markovian. For example, taking $\sigma(y)=\sqrt{y}, \lambda(y)=\Lambda \sqrt{y}$, yields the affine Heston class: 
        \begin{itemize}
                \item The classical Markovian \textcite{heston1993closed}  model is obtained by letting \[
                    dY_t
                    =
                    -\kappa(Y_t-\theta)\,dt
                    +
                    \nu\sqrt{Y_t}\,dB_t.
                \]
\item  The class of Volterra Heston models of \textcite[Section~7]{AbiJaber2019} is obtained by taking
        \begin{align}\label{eq:exvolterra}        Y_t
                =
                Y_0
                +
                \int_0^t
                K(t-s)\bigl(-\kappa(Y_s-\theta)\bigr)\,ds
                +
                \int_0^t
                K(t-s)\nu\sqrt{Y_s}\,dB_s,
        \end{align}
        where $K$ is a locally square-integrable kernel. For the constant kernel $K(t)=1$, we recover the classical Heston model above, choosing $K(t)=\sum_{i=1}^N c_ie^{-\lambda_i t}$
yields the lifted Heston model of \textcite{AbiJaber2019a},  while the fractional kernel
\begin{align}\label{eq:kernelfrac}
K(t)=\frac{t^{h-\frac12}}{\Gamma\left(h+\frac12\right)}, 
\end{align} with $h\in(0,\tfrac12)$
gives the rough Heston model of \textcite{ElEuch2019}. In general, the factor process $Y$ need not be Markovian and, for sufficiently singular kernels such as the fractional kernel, may even fail to be a semimartingale. For such kernels, the existence of a strong solution remains an open problem. In this case, one may instead assume that $Y$ and $B$ are adapted to a common filtration $\mathbb{G}$, while $B^\perp$ remains independent of $\mathbb{G}$. For continuously differentiable kernels $K$, strong existence and uniqueness holds for $Y$, see \textcite[Proposition B.3]{abi2019multifactor}.
        \end{itemize}
\end{example}

        \begin{example}[Singular variance clock models]
                \label{setting:singular_clocks}
                The IVC framework also includes singular variance clocks. 
                A notable example arises from Volterra Heston models whose kernel $K$ belongs to $L^1_{\mathrm{loc}}$ but not $L^2_{\mathrm{loc}}$. 
                In this case, \eqref{eq:exvolterra} fails to be well-defined.
                Still, the equation \[
                        U_t = \int_{0}^{t} \left(Y_0 + \int_{0}^{s} K(s-u) \kappa \theta \d u\right) \d s + \int_0^t K(t-s) \left(-\kappa U_s + \nu M_s \right) \d s
                \] that arises from \eqref{eq:exvolterra} by the stochastic Fubini theorem remains well-posed (see \textcite{AbiJaber2021}).
                The (weak) solution to this equation is a tuple $(U, M)$, where $M$ is a continuous $\mathbb{F}^{M}$-local martingale with quadratic variation $U$.
                For the orthogonal term, let $W^\perp$ be a Brownian motion independent of $\mathbb{F}^{M}$ and set $M^\perp_t = W^\perp_{U_t}$.
                Setting $\mathbb{F} = \mathbb{F}^{M} \vee \mathbb{F}^{M^\perp}$, $M$ and $M^\perp$ are strongly orthogonal $\mathbb{F}$-local martingales and the risky asset price process is defined by \eqref{eq:IVC}, i.e.\ \[
                        \frac{dS_t}{S_t} = r dt + \Lambda dU_t + \rho dM_T + \sqrt{1-\rho^2} dM^\perp_t
                \] for some $\mathbb{F}^M$-progressively measurable processes $r, \Lambda$.
                When $K$ lies in $L^2_{\mathrm{loc}}$, $U$ is absolutely continuous w.r.t.\ the Lebesgue measure and can be written as $U_t = \int_0^t Y_s \d s$ with $Y$ given by \eqref{eq:exvolterra}.
                When $K$ lies in $L^1_{\mathrm{loc}}$ but not $L^2_{\mathrm{loc}}$, the integrated variance process $U$ is singular with respect to the Lebesgue measure and no spot variance process exists.
                In particular, for the fractional kernel \eqref{eq:kernelfrac}, the model extends naturally to negative Hurst indices $h\in(-1/2,0]$, thereby recovering the hyper-rough Heston model introduced by \textcite{Jusselin2020}. 
                Such models are not covered by the stochastic volatility framework of \cref{setting:typical_factor_model}, but do fall in the general IVC framework provided in \eqref{eq:IVC} as we illustrate in \cref{section:heston}.
                More generally, one may consider singular Volterra clock models beyond the affine Heston setting as recently developed in \textcite{abijaber2026volterra}.
                In this class of models, the integrated variance process is given by $U = f(A)$, where $A$ is defined by the equation \[
                        A_t = G_0(t) + \int_0^t K(t-s) \left(-\kappa A_s + M_s \right) \d s
                \] and $M$ is a local martingale with $\langle M \rangle_t = f(A_t) $.
                Here, $f$ and $G_0$ are suitable functions.
                For the affine specification $f(a) = \nu a$, this reduces to the Volterra Heston model, but can model more general path-dependent diffusivity structures for different choices of $f$.
        \end{example}
        
		The agent allocates a fraction of wealth $\Pi = (\Pi_t)_{t\ge 0}$ into the risky asset $S$, and consumes at a rate equal to the fraction of wealth $\Xi = (\Xi_t)_{t\ge 0}$.
		The wealth process $X^{\Pi, \Xi}$ has dynamics 
        \begin{equation}
            \label{setting:wealth_dynamics}
            \frac{dX^{\Pi, \Xi}_t}{X^{\Pi, \Xi}_t} = (r_t - \Xi_t) dt + \Pi_t \Lambda_t dU_t + \Pi_t \rho dM_t + \Pi_t \sqrt{1-\rho^2} dM^{\perp}_t
        .\end{equation}

		The optimal investment and consumption problem is to find \begin{equation}
		\label{setting:control_problem}
				V = \sup_{(\Pi, \Xi) \in \mathscr{A}} \E\left[\int_{0}^{\infty} \exp\left(- \int_{0}^{t} \delta_u \d u\right) \frac{(\Xi_t X^{\Pi, \Xi}_t)^{1-R}}{1-R} \d t \right] 
		,\end{equation} where $\delta$ is the impatience rate (which we assume to be $\mathbb{F}$-progressively measurable with $\int_0^t |\delta_s| \d s < \infty$ a.s.\ for all $t > 0$), $R \in (0, \infty) \setminus \{ 1 \} $ is the relative risk aversion and $\mathscr{A}$ the set of admissible strategies.
		We call a strategy $(\Pi, \Xi)$ admissible if $\Pi$ and $\Xi$ are $\mathbb{F}$-progressively measurable, $\Xi_t \ge 0$ a.s.\ for all $t\ge 0$, and \[
				\int_{0}^{t} \Xi_s \d s + \int_{0}^{t} \Pi_s^2 \d U_s < \infty 
		\] a.s.\ for all $t \ge 0$.\footnote{Note that non-negativity of $X$ follows automatically.}
		We call problem \eqref{setting:control_problem} well-posed if $|V| < \infty$ for all $X_0 > 0$; otherwise the problem is called ill-posed.

        \begin{definition}
                \label{setting:def:strongly_wellposed}
                We call the process \begin{equation}\label{eq:frozenH}
		                  H_t = \int_{0}^{t} \left(\frac{1}{R} \delta_s + \frac{R-1}{R} r_s\right) \d s + \frac{R-1}{2R^2} \int_{0}^{t} \Lambda_s^2 \d U_s 
		          \end{equation}
                the \emph{cumulative myopic consumption}. 
                We call the model \emph{strongly myopically well-posed} if there exists a constant $C > 0$ s.t.\ $H_s - H_t \ge C(s-t)$.
        \end{definition}

		The names are motived by the fact that in the absolutely continuous setting of \cref{setting:typical_factor_model}, we have \[
				H_t = \int_0^{t} \eta(Y_s) \d s 
		,\] where \begin{align}\label{eq:frozencons}
				\eta(y) = \frac{1}{R} \left(\delta(y) - (1-R) \left(r(y) + \frac{\lambda(y)^2}{2R} \right)\right)
		\end{align} is the \emph{myopic consumption rate} and the model is called strongly myopically well-posed if there exists a constant $C > 0$ s.t.\ $\eta \ge C$, see \cite{Gutekunst2025}. 
        In turn, the terminology \enquote{myopic consumption rate} is motivated by the fact that if the stochastic factor $Y$ is frozen at the value $y$ (which turns the market into a Black-Scholes model), $\eta(y)$ is the optimal consumption rate in the frozen market (provided that $\eta(y) > 0$).
        
		We denote by $R_{\rho} = (1-\rho^2) R + \rho^2$ the correlation-adjusted risk aversion.
        Notice that $R > 1$ if and only if $R_{\rho} > 1$ (unless $\rho^2 = 1$, in which case $R_{\rho} = 1$).
        $R_{\rho}$ is a convex combination of $R$ and $1$, getting pulled towards $1$ more strongly the closer the market is to being complete.
		
		We make the following three standing assumptions on the market model.
		Firstly, we assume that the minimal distortion measure is well-defined. 
		\begin{assumption}
				\label{setting:assumption:Q}
				The process $\mathcal{E}\left(\frac{1-R}{R} \rho \Lambda \cdot M\right)$ is an $\mathbb{F}$-martingale.	
		\end{assumption}
		
		By the extension property of $\mathbb{F}$, there exists a probability measure $\Q$ with $\frac{\text{d}\Q}{\text{d}\mathbb{P}} \Bigr|_{\mathcal{F}_t} = \mathcal{E}\left( \frac{1-R}{R} \rho \Lambda \cdot M \right)_t $.

		Secondly, we assume that the process 
        \begin{align}\label{eq:tildeM}
             \tilde{M} = M + \int_0^{\cdot} \frac{R-1}{R} \rho \Lambda_s \d U_s
        \end{align}
        has the predictable representation property over $(\mathbb{F}^{M}, \Q)$.
        Note that $\tilde{M}$ is a $\Q$-local martingale by Girsanov's theorem.

		\begin{assumption}
				\label{setting:assumption:M-PRP}
				For every $(\mathbb{F}^{M}, \Q)$-local martingale $N$ there exists an $\mathbb{F}^{M}$-predictable process $Z$ s.t.\ \[
						N_t = N_0 + \int_{0}^{t} Z_s \d{} \tilde{M}_s
				.\] 
		\end{assumption}

		In particular, this means that every $(\mathbb{F}^{M}, \Q)$-local martingale is also an $(\mathbb{F}, \Q)$-local martingale, i.e.\ $\mathbb{F}^{M}$ is immersed in $\mathbb{F}$ (under $\Q$).
		This means that we have $ \E^{\Q}\left[ Y \condbar \mathcal{F}_t \right] = \E^{\Q}\left[ Y \condbar \mathcal{F}^{M}_t \right] $ for all $Y \in L^{1}(\mathcal{F}^{M}_{\infty})$ (see \cite[Thm.~3.2]{Aksamit2017}).

		Finally, we assume that the cumulative myopic consumption is determined by $M$.
        This enforces the usual correlation structure of the model coefficients being determined by $M$, while $M^\perp$ is uncorrelated noise.
		\begin{assumption}
				\label{setting:assumption:M-adapted}
				The processes $H$ and $\Lambda$ are $\mathbb{F}^{M}$-adapted.
		\end{assumption}

		\begin{remark}
				\Cref{setting:assumption:M-PRP,setting:assumption:M-adapted} are not needed if $\mathbb{F}^M$ can be embedded in the Brownian filtration, see \cref{verification:brownian_filtration}.
				We work in a more general non-Brownian framework to accommodate the Volterra Heston model, where the existence of a strong solution for singular kernels is an open problem, as well as models with singular integrated variance clocks (since in the Brownian filtration the quadratic variation of a local martingale is always absolutely continuous).
		\end{remark}

		\section{General BSDE theory associated with the IVC framework}
		\label{section:hjb}

        In this section, we introduce and study the non-standard backward stochastic differential equation associated with the investment and consumption problem in the general IVC framework \eqref{eq:IVC}. We first motivate its form and establish general well-posedness and stability results. In Section~\ref{section:verification}, we use the BSDE to derive the verification theorem for the optimal control problem.

		\subsection{Motivation and Preliminaries of Infinite-Horizon IVC-BSDEs}
		\label{subsection:hjb:derivation}

        We first motivate the backward equation studied in this section by
considering the classical  Markovian stochastic volatility setting of
\cref{setting:typical_factor_model}. Suppose that the factor process
$Y$ in \eqref{setting:typical_factor_model_dynamics}  is an Itô diffusion of the form 
$$ dY_t = a(Y_t) dt + b(Y_t) dB_t,$$
for given functions $a$ and $b$. In this case, the value function $(x,y)\mapsto V(x,y)$ associated with the control problem \eqref{setting:control_problem} is
formally expected to solve the Hamilton--Jacobi--Bellman (HJB) equation
	\begin{multline}
				\label{hjb:classical}
				\sup_{(\pi, \xi) \in \R \times (0, \infty)} \bigg\{ \frac{(\xi x)^{1-R}}{1-R} + (r + \pi \lambda \sigma - \xi) x \frac{\partial V}{\partial x} + \frac{1}{2} \pi^2 \sigma^2 x^2 \frac{\partial ^2 V}{\partial x^2} \\ + a \frac{\partial V}{\partial y} + \frac{1}{2} b^2 \frac{\partial ^2 V}{\partial y^2} + \rho b \pi \sigma x \frac{\partial ^2 V}{\partial x \partial y} - \delta V \bigg\} = 0
		.\end{multline} 
		After solving the first-order conditions and using the distortion transformation $V(x, y) = \frac{x^{1-R}}{1-R} f(y)^{\frac{R}{R_{\rho}}}$, where $R_{\rho} = (1-\rho^2) R + \rho^2$ is the correlation-adjusted risk aversion, the HJB equation becomes \[
				\frac{1}{2} b^2 f'' + \left(a + \frac{1-R}{R} \rho \lambda b \right) f' - R_{\rho} \eta f + R_{\rho} f^{1-\frac{1}{R_{\rho}}} = 0
		.\] 
		By Feynman-Kac, $f$ has the (implicit) stochastic representation \begin{align}\label{eq:storepf}
				f(Y_t) = \E^{\Q}\left[ \int_{t}^{\infty} \exp\left(- \int_{t}^{s} R_{\rho} \eta(Y_u) \d u \right) R_{\rho} f(Y_s)^{1-\frac{1}{R_{\rho}}} \condbar Y_t \right], 
		\end{align}
        with $\eta$ the myopic consumption rate defined in \eqref{eq:frozencons}.
		
        In the general IVC setting \eqref{eq:IVC}, an underlying factor process $Y$ need not
exist. Even when such a process exists, it is generally non-Markovian, as is the case in stochastic Volterra models of the form \eqref{eq:exvolterra}. Consequently, the value function cannot, in general, be characterized by a finite-dimensional HJB equation. This motivates a direct backward formulation in terms of the integrated variance clock. More precisely, the preceding stochastic representation \eqref{eq:storepf} suggests the following infinite-horizon implicit stochastic equation:
\begin{equation}
				\label{hjb}
				F_t = \E^{\Q}\left[\int_{t}^{\infty} \exp\left( -R_{\rho} (H_s - H_t) \right) R_{\rho} F_s^{1-\frac{1}{R_{\rho}}} \condbar \mathcal{F}_t \right], 
		\end{equation}
        with $H$ the cumulative myopic consumption defined in \eqref{eq:frozenH}. 	We note that in the complete market setting, i.e.~$\rho=\pm 1$, we have that  $R_{\rho} = 1$, so that \eqref{hjb} explicitly defines $F$. More generally, the stochastic representation \eqref{hjb} corresponds formally to the infinite-horizon BSDE
\begin{align}\label{eq:IVCBSDE}
	dF_t = -R_{\rho} F_t^{1-\frac{1}{R_{\rho}}} \, dt + R_{\rho} F_t \, dH_t + Z^F_t \, d\tilde{M}_t,
\end{align}
where $\tilde{M}$ is the $\Q$-local martingale defined in \eqref{eq:tildeM}. Observe that both the finite-variation and martingale components are naturally expressed in terms of both calendar time and the integrated variance clock: the process $H$ contains a finite-variation term driven by $U$, while $\tilde{M}$ is driven by the time-changed Brownian motion $W_U$. This means that we recover non-standard BSDEs of the form \eqref{eq:introIVCBSDE} where the finite-variation measure $dU$ may be singular with respect to Lebesgue measure. It is not surprising that $dt$ and $dU$ terms both enter in the BSDE: The market uncertainty is driven by $U$, whereas consumption happens w.r.t.\ the calendar clock. We refer to equations of the form \eqref{hjb} or \eqref{eq:IVCBSDE} as \emph{integrated variance clock BSDEs} (IVC-BSDEs). The equivalence between the stochastic representation \eqref{hjb} and the differential formulation \eqref{eq:IVCBSDE} is established in Lemma~\ref{existence:dynamics} below.

The relevance of the IVC-BSDE is that its solution completely characterizes the investment--consumption problem \eqref{setting:control_problem} in the IVC framework. Indeed, if $(F,Z^F)$ solves the IVC-BSDE \eqref{eq:IVCBSDE}, then the candidate value process is given by
\begin{align}
	V_t = \frac{X_t^{1-R}}{1-R} F_t^{\frac{R}{R_{\rho}}},
\end{align}
while the candidate optimal consumption and investment strategies are
\begin{align}
	\hat{\Xi}_t = F_t^{-\frac{1}{R_{\rho}}}, \qquad
	\hat{\Pi}_t = \frac{1}{R}\Lambda_t + \frac{\rho}{R_{\rho}} \frac{Z^F_t}{F_t}.
\end{align}
Thus, solving the IVC-BSDE is the key step in solving the investment--consumption problem. The rigorous justification of these expressions is deferred to the verification argument in Theorem~\ref{verification:verification}.

We now develop the theory of the IVC-BSDE \eqref{eq:IVCBSDE}. For this, we denote by $\mathscr{P}_{> 0}$ the set of positive $\mathbb{F}$-progressively measurable processes, and we
		define the (non-linear) operator $T: \mathcal{D}(T) \to \mathscr{P}_{>0} $ by\footnote{The conditional expectation here is to be understood as the optional projection.} \begin{align}\label{eq:defT}
                F \mapsto TF, \quad (TF)_t = \E^{\Q}\left[ \int_{t}^{\infty} \exp\left( -R_{\rho} (H_s - H_t) \right)  R_{\rho} F_s^{1-\frac{1}{R_{\rho}}} \d s \condbar \mathcal{F}_t \right]
        ,\end{align} where $\mathcal{D}(T)$ is the convex cone \[
        		\mathcal{D}(T) = \left\{F \in \mathscr{P}_{>0}: \E^{\Q}\left[ \int_{t}^{\infty} \exp\left( -R_{\rho} (H_s - H_t) \right) R_{\rho} F_s^{1-\frac{1}{R_{\rho}}} \d s \right] < \infty \text{ for all } t \ge 0 \right\} 
        .\] 
		Note that the positive solutions to \eqref{hjb} are precisely the fixed points of $T$.

We begin by establishing the equivalence between the integrated formulation \eqref{hjb} and its differential counterpart \eqref{eq:IVCBSDE}. Specifically, a process is a solution to the integrated equation if and only if it satisfies the differential formulation together with appropriate martingality and transversality conditions.
        The proof of \cref{existence:dynamics} can be found in \cref{appendix:auxiliary}.

        \begin{lemma}
            \label{existence:dynamics}
            Set $D_t = \exp(-R_{\rho} H_t ) $.
            An $\mathbb{F}^{M}$-adapted process $F \in \mathcal{D}(T)$ is a solution to \eqref{hjb} if and only if there exists an $\mathbb{F}^{M}$-predictable process $Z^F$ s.t.\ 
            \begin{enumerate}[(i)]
                    \item \label{existence:dynamics:drift} Equation \eqref{eq:IVCBSDE} holds for $(F,Z^F)$,
                    \item \label{existence:dynamics:martingale} $\int_0^{\cdot } D_s Z^F_s \d{} \tilde{M}_s$ is an $(\mathbb{F}, \Q)$-martingale, and
                    \item \label{existence:dynamics:transversality} $\liminf_{T\to \infty} \E^{\Q}[D_T F_T | \mathcal{F}_t ] = 0$ for all $t \ge 0$.
            \end{enumerate}
		\end{lemma}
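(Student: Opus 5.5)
The natural object is the discounted process
\[
N_t = D_t F_t + \int_0^t D_s R_{\rho} F_s^{1-\frac{1}{R_{\rho}}} \d s ,
\]
where $D$ is continuous, of finite variation, $\mathbb{F}^M$-adapted by \cref{setting:assumption:M-adapted}, and satisfies $dD_t = -R_{\rho} D_t \d H_t$. If $F$ solves \eqref{hjb}, then multiplying \eqref{hjb} at time $t$ by the $\mathcal{F}_t$-measurable $D_t$ gives
\[
N_t = \E^{\Q}\Bigl[\int_0^\infty D_s R_\rho F_s^{1-\frac{1}{R_\rho}} \d s \condbar \mathcal{F}_t\Bigr].
\]
Since $F\in\mathcal{D}(T)$ (take $t=0$), this integrand is $\Q$-integrable, so $N$ is a uniformly integrable $(\mathbb{F},\Q)$-martingale. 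The integrand is $\mathcal{F}^M_\infty$-measurable because $F$ and $H$ are $\mathbb{F}^M$-adapted. By the immersion property following \cref{setting:assumption:M-PRP}, $N$ is then also an $(\mathbb{F}^M,\Q)$-martingale. \Cref{setting:assumption:M-PRP} then yields an $\mathbb{F}^M$-predictable $\zeta$ with $N = N_0 + \int_0^{\cdot} \zeta_s \d\tilde M_s$, and this stochastic integral is a true martingale.

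Define $Z^F = \zeta / D$, which is legitimate since $D>0$ and $D$ is continuous. To obtain (i), apply integration by parts to $F = D^{-1}(N - \int_0^\cdot D_s R_\rho F_s^{1-1/R_\rho}\d s)$. We have $d(D^{-1})_t = R_\rho D_t^{-1} \d H_t$, and $D^{-1}$ has finite variation, so there is no covariation term. This gives
\[
dF_t = R_\rho F_t \d H_t - R_\rho F_t^{1-\frac{1}{R_\rho}} \d t + Z^F_t \d\tilde M_t ,
\]
which is \eqref{eq:IVCBSDE}. Condition (ii) is exactly the martingale property of $\int_0^\cdot D_s Z^F_s\d\tilde M_s = N - N_0$. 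For (iii), take conditional expectations:
\[
\E^{\Q}[D_T F_T \mid \mathcal{F}_t] = \E^{\Q}\Bigl[\int_T^\infty D_s R_\rho F_s^{1-\frac{1}{R_\rho}} \d s \condbar \mathcal{F}_t\Bigr].
\]
This tends to $0$ as $T\to\infty$ by conditional dominated (or monotone) convergence, since the full integral is integrable. So the limit, and not just the liminf, vanishes.

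Conversely, assume (i)–(iii). Reverse the integration-by-parts computation, applying the product rule to $D F$ using (i), to get
\[
D_T F_T - D_t F_t = -\int_t^T D_s R_\rho F_s^{1-\frac{1}{R_\rho}} \d s + \int_t^T D_s Z^F_s \d\tilde M_s .
\]
The $dH$ terms cancel because $d(DF) = F\,dD + D\,dF$ and $dD = -R_\rho D\,dH$. Take $\E^{\Q}[\,\cdot\mid\mathcal{F}_t]$ and use (ii) to remove the stochastic integral. Then divide by $D_t$ to get
\[
F_t = \E^{\Q}\Bigl[\int_t^T e^{-R_\rho(H_s-H_t)} R_\rho F_s^{1-\frac{1}{R_\rho}}\d s \condbar \mathcal{F}_t\Bigr] + D_t^{-1}\E^{\Q}[D_T F_T\mid\mathcal{F}_t].
\]
For the integrability of $D_TF_T$ needed here, note that the identity itself gives $D_TF_T$ as a difference of integrable terms: the $\d s$ integral is integrable because $F\in\mathcal{D}(T)$, and the martingale term is integrable by (ii). Now let $T\to\infty$ along a sequence realizing the liminf in (iii). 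The first term increases to $(TF)_t$ by conditional monotone convergence, since the integrand is positive, and the second term vanishes along that sequence. Hence $F = TF$, i.e. $F$ solves \eqref{hjb}.

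The main delicate point is the forward direction. There, $N$ is defined as a conditional expectation given $\mathcal{F}_t$, while the representation of \cref{setting:assumption:M-PRP} is available only over $\mathbb{F}^M$. This requires checking that $\int_0^\infty D_s R_\rho F_s^{1-1/R_\rho}\d s$ is $\mathcal{F}^M_\infty$-measurable and invoking the immersion consequence quoted after \cref{setting:assumption:M-PRP}, to identify the $\mathbb{F}$- and $\mathbb{F}^M$-conditional expectations. A smaller technical point is the interpretation of \eqref{hjb} via optional projections. This should cause no difficulty: the conditional expectations are taken of a process with càdlàg (indeed continuous) martingale versions, so $F$ can be identified with a continuous semimartingale up to indistinguishability.
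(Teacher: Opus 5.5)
Your proof is correct and follows essentially the same route as the paper: the same process $N$, immersion plus the predictable representation property for the forward direction, and, for the converse, the martingale property combined with monotone convergence along a sequence realising the $\liminf$. One small imprecision is harmless: in the converse you say $D_TF_T$ is a difference of integrable terms, but that identity also contains the $\mathcal{F}_t$-measurable term $D_tF_t$, whose integrability is not given; since all the quantities are non-negative, you can use generalised conditional expectations or localise on $\{D_tF_t\le n\}\in\mathcal{F}_t$.
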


        In the following section on existence and uniqueness, it will be useful to also consider sub- and supersolutions to \eqref{hjb}.
        A process $F$ is called a \emph{subsolution} if \[  
                F_t \le \E^{\Q}\left[\int_{t}^{\infty} \exp\left( -R_{\rho} (H_s - H_t) \right) R_{\rho} F_s^{1-\frac{1}{R_{\rho}}} \condbar \mathcal{F}_t \right] 
        .\]
        Similarly, $F$ is called a \emph{supersolution} if \[
                F_t \ge \E^{\Q}\left[\int_{t}^{\infty} \exp\left( -R_{\rho} (H_s - H_t) \right) R_{\rho} F_s^{1-\frac{1}{R_{\rho}}} \condbar \mathcal{F}_t \right] 
        .\]
        Sub- and supersolutions can also be characterised in differential form. 
        We say that a process $A$ strongly majorises another process $B$ if $A-B$ is increasing.
        We omit the proof of \cref{existence:dynamics:sub_super_solution} as it is analogous to the proof of \cref{existence:dynamics}.
        
        \begin{corollary}
            \label{existence:dynamics:sub_super_solution}
            Set $D_t = \exp\left(-R_{\rho} H_t \right) $.
            An $\mathbb{F}^{M}$-adapted process $F \in \mathcal{D}(T)$ with dynamics \[
                    dF_t = dA_t + Z^F_t d\tilde{M}_t 
            ,\] where $A$ is a finite-variation process and $Z^F$ is $\mathbb{F}^{M}$-predictable, is a subsolution (supersolution) to \eqref{hjb} if  
            \begin{enumerate}[(i)]
                    \item \label{existence:dynamics:sub_super_solution:drift} $\int_{0}^{\cdot } -R_{\rho} F_s^{1-\frac{1}{R_{\rho}}} \d s + \int_{0}^{\cdot } R_{\rho} F_s \d H_s$ is strongly majorised by $A$ (strongly majorises $A$),
                    \item \label{existence:dynamics:sub_super_solution:martingale} $\int_0^{\cdot } D_s Z^F_s \d{} \tilde{M}_s$ is an $(\mathbb{F}, \Q)$-submartingale (supermartingale), and
                    \item \label{existence:dynamics:sub_super_solution:transversality} $\liminf_{T\to \infty} \E^{\Q}[D_T F_T | \mathcal{F}_t ] \le 0$ ($\ge 0$) for all $t \ge 0$.
            \end{enumerate}
		\end{corollary}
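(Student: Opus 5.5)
The plan is to work with the discounted process $D_tF_t$ and apply integration by parts, as one would in the proof of \cref{existence:dynamics}. Since $H$ is continuous and of finite variation, $D_t = \exp(-R_\rho H_t)$ satisfies $dD_t = -R_\rho D_t \d H_t$. Together with $dF_t = dA_t + Z^F_t d\tilde{M}_t$, this gives
\[
        d(D_tF_t) = D_t\left(dA_t - R_\rho F_t \d H_t\right) + D_t Z^F_t \d\tilde{M}_t .
\]
Write $B_t = \int_0^t -R_\rho F_s^{1-\frac{1}{R_\rho}}\d s + \int_0^t R_\rho F_s \d H_s$. Then
\[
        dA_t - R_\rho F_t\d H_t = d(A-B)_t - R_\rho F_t^{1-\frac{1}{R_\rho}}\d t .
\]
Integrating from $t$ to $T$ yields
\[
        D_tF_t = D_TF_T + \int_t^T D_s R_\rho F_s^{1-\frac{1}{R_\rho}}\d s - \int_t^T D_s\, d(A-B)_s - \int_t^T D_sZ^F_s\d\tilde{M}_s .
\]

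Now consider the subsolution case. By (i), $A-B$ is increasing, and $D>0$, so $\int_t^T D_s\, d(A-B)_s \ge 0$ and we may drop it to obtain an upper bound. Next take $\Q$-conditional expectations given $\mathcal{F}_t$. The submartingale property (ii) gives $\E^{\Q}[\int_t^T D_sZ^F_s \d\tilde{M}_s \mid \mathcal{F}_t]\ge 0$, so the stochastic-integral term also only decreases the right-hand side. This leads to
\[
        D_tF_t \le \E^{\Q}[D_TF_T\mid\mathcal{F}_t] + \E^{\Q}\left[\int_t^T D_s R_\rho F_s^{1-\frac{1}{R_\rho}}\d s \condbar \mathcal{F}_t\right].
\]

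Then let $T\to\infty$ along a subsequence realising the $\liminf$ in (iii). The second term increases to $\E^{\Q}[\int_t^\infty \cdots \mid\mathcal{F}_t]$ by monotone convergence, since the integrand is positive; this is finite because $F\in\mathcal{D}(T)$. The first term has $\liminf \le 0$ by (iii). Dividing by the $\mathcal{F}_t$-measurable positive quantity $D_t$ and writing $D_s/D_t = \exp(-R_\rho(H_s-H_t))$ gives exactly the subsolution inequality. The supersolution case is symmetric: every inequality reverses, (iii) now supplies $\liminf\ge 0$, and the monotone limit of the integral term is handled in the same way.

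The main obstacle is purely technical. The conditional expectations must be well defined, which requires the integrability of $D_TF_T$ and of the finite-variation integral, and the step of pulling out $D_t$ must be justified. The first point is handled by noting that $D_TF_T\ge 0$ and that the integral term is integrable by $F\in\mathcal{D}(T)$ (on $[t,T]$ after multiplying by $D_t$, where one may localise if necessary). The conditional expectation of the remaining signed terms then exists in the extended sense, and because the inequalities are one-sided, no further integrability is required.
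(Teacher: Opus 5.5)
Your proposal is correct and follows the route the paper intends. The paper omits this proof as analogous to the converse direction of \cref{existence:dynamics}, which works with $N_t=\int_0^t D_sR_\rho F_s^{1-\frac{1}{R_\rho}}\d s + D_tF_t$ (your integration by parts for $D_tF_t$), uses (i)--(ii) to compare $D_tF_t$ with the conditional expectation at $T$, and then lets $T\to\infty$ using (iii) and monotone convergence; one small simplification is that, since the $\d s$-integral converges monotonically, you can take $\liminf_{T\to\infty}$ directly in your inequality rather than along a subsequence realising the $\liminf$.
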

        
        \begin{remark}
		      \label{existence:dynamics:sub_super_solution:sufficient_criteria}
			For a supersolution, condition \ref{existence:dynamics:sub_super_solution:martingale} is always satisfied since $\int_{0}^{t} D_s Z^F_s \d{} \tilde{M}_s \ge N_t - N_0 \ge -N_0 $ is bounded from below (where $N$ is a non-negative process from the proof of \cref{existence:dynamics}).
			Moreover, condition \ref{existence:dynamics:sub_super_solution:transversality} is trivially satisfied since $D_T, F_T \ge 0$ for all $T\ge 0$.
			For a subsolution, condition \ref{existence:dynamics:sub_super_solution:transversality} is satisfied in particular if the model is strongly myopically well-posed and $F$ is bounded.
		\end{remark}
        
		\subsection{Existence and Uniqueness}
		\label{subsection:hjb:existence}

        In this section, we study the existence and uniqueness of solutions to the infinite-horizon IVC-BSDE \eqref{hjb} (equivalently \eqref{eq:IVCBSDE}). The first main result establishes existence and uniqueness for general strongly myopically well-posed models.
        
        \begin{theorem}
            \label{existence:uniformly_wellposed}

            Assume that $R > 1$ and the model is strongly myopically well-posed in the sense of \cref{setting:def:strongly_wellposed}.
            Set \[
                    G^{1}_t = \E^{\Q}\left[ \int_{t}^{\infty} \exp(-(H_s - H_t)) \d s \condbar \mathcal{F}_t \right]^{R_{\rho}}
            .\] 
            Then \eqref{hjb} has a unique bounded solution $F$ that satisfies $G^{1} \le F$.
            Moreover, $F$ is $\mathbb{F}^{M}$-adapted.
        \end{theorem}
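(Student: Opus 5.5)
We prove existence with a monotone iteration of the operator $T$ from \eqref{eq:defT} started from an explicit supersolution, show that $G^1$ is a subsolution, and obtain uniqueness from a scaling argument that uses the concavity of $x\mapsto x^{1-1/R_\rho}$. Since $R>1$, we have $R_\rho\ge 1$ and $p:=1-\frac{1}{R_\rho}\in[0,1)$, so $F\mapsto F^p$ is increasing and concave. Hence $T$ is monotone: $F\le F'$ implies $TF\le TF'$. The case $\rho^2=1$ gives $p=0$, $T$ constant, and then everything is trivial, so assume $p\in(0,1)$.

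\emph{Step 1 (a constant supersolution).} Strong myopic well-posedness gives $H_s-H_t\ge C(s-t)$. For a constant $c>0$ this yields $(Tc)_t\le R_\rho c^p\int_t^\infty e^{-R_\rho C(s-t)}\d s=c^p/C$. Thus $c^\ast=C^{-R_\rho}$ satisfies $Tc^\ast\le c^\ast$. Any larger constant is also a supersolution, and constants lie in $\mathcal{D}(T)$.

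\emph{Step 2 ($G^1$ is a subsolution).} Write $g_t=\E^{\Q}[\int_t^\infty e^{-(H_s-H_t)}\d s\mid\mathcal{F}_t]$, so that $G^1=g^{R_\rho}$ and $(G^1)^p=g^{R_\rho-1}$. The claim $TG^1\ge G^1$ is a reverse-Jensen/Hölder estimate. Setting $w_s=e^{-(H_s-H_t)}$, we need
\[
\E^{\Q}\Bigl[\int_t^\infty w_s^{R_\rho}R_\rho g_s^{R_\rho-1}\d s\Bigm|\mathcal{F}_t\Bigr]\ge g_t^{R_\rho}.
\]
Heuristically, $\int_t^\infty w_s\d s=\int_t^\infty w_s\,e^{-\int_s^\infty\cdots}$ and the identity $\frac{d}{ds}\bigl(\int_s^\infty w_u\d u\bigr)^{R_\rho}=-R_\rho w_s(\int_s^\infty w_u\d u)^{R_\rho-1}$ gives the equality $(\int_t^\infty w)^{R_\rho}=\int_t^\infty R_\rho w_s(\int_s^\infty w)^{R_\rho-1}\d s$. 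We then combine it with $w_s^{R_\rho}g_s^{R_\rho-1}=\E[(w_s\int_s^\infty e^{-(H_u-H_s)}\d u)\mid\mathcal{F}_s]^{R_\rho-1}w_s$ and conditional Jensen applied to the convex map $x\mapsto x^{R_\rho}$ with $R_\rho\ge1$ (integrating by parts in $s$ and taking conditional expectations). We expect this to go through, and to give $G^1\le TG^1$ with $G^1\le c^\ast$. The bound $G^1\le c^\ast$ follows from $g\le 1/C$ and $R_\rho\ge1$, which give $G^1\le C^{-R_\rho}$.

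\emph{Step 3 (iteration).} Set $F^0=c^\ast$ and $F^{n+1}=TF^n$. Monotonicity gives a decreasing sequence bounded below by $G^1$, because $G^1\le TG^1\le TF^n$. Monotone convergence for conditional expectations (the integrands are dominated by $R_\rho (c^\ast)^p e^{-R_\rho C(s-t)}$) shows that the limit $F$ is a fixed point with $G^1\le F\le c^\ast$. Since $H$ is $\mathbb{F}^M$-adapted (\cref{setting:assumption:M-adapted}) and conditional expectations of $\mathcal{F}^M_\infty$-measurable variables under $\Q$ coincide for $\mathcal{F}_t$ and $\mathcal{F}^M_t$ (immersion from \cref{setting:assumption:M-PRP}), each $F^n$ is $\mathbb{F}^M$-adapted, and so is $F$.

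\emph{Step 4 (uniqueness, the main obstacle).} Let $F,F'$ be bounded solutions with $F,F'\ge G^1$. The lower bound $G^1\ge (R_\rho$-power of $1/\sup(\dot H))$ is the delicate point, since $H$ may grow fast and $G^1$ need not be bounded away from $0$. We therefore use a ratio argument that only needs $F'\ge G^1$ and $F\le c$ for some constant $c$. Let $\lambda=\inf\{\ell\in(0,1]:\ell F\le F'\}$. To see that $\lambda>0$, we need $F\le K G^1$. We obtain this from the fixed-point equation itself: $F=TF\le R_\rho\|F\|_\infty^p\,\E^{\Q}[\int_t^\infty e^{-R_\rho(H_s-H_t)}\d s\mid\mathcal{F}_t]$, and the last conditional expectation is bounded by a constant times $g_t^{R_\rho}$ by a Hölder-type comparison similar to Step 2 (this is the step requiring most care). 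Now suppose $\lambda<1$. Then $F'=TF'\ge T(\lambda F)=\lambda^pTF=\lambda^pF$, and $\lambda^p>\lambda$ contradicts minimality. Hence $\lambda=1$, so $F\le F'$, and by symmetry $F=F'$.
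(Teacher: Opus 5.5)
\textbf{Step 4 has a genuine gap.} Your Steps 1 and 3 match the paper: the constant supersolution $K\ge\eta_0^{-R_\rho}$, monotone iteration from above, and $\mathbb{F}^M$-adaptedness via immersion. The problem is uniqueness.

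To get $\lambda>0$ you need a \emph{linear} comparison $F\le K\,G^1$. You propose to derive it from
\[
\E^{\Q}\Bigl[\int_t^\infty e^{-R_\rho(H_s-H_t)}\,ds \Bigm| \mathcal{F}_t\Bigr]\le K\,g_t^{R_\rho}.
\]
This is a reverse H\"older inequality, and it is false in the setting of the theorem. In the frozen case $H_s-H_t=a(s-t)$, the left side is $1/(R_\rho a)$ while $g_t^{R_\rho}=a^{-R_\rho}$. So the constant would have to be at least $a^{R_\rho-1}/R_\rho$. Strong myopic well-posedness bounds the growth rate of $H$ only from below, so this rate may be unbounded over $(\omega,t)$; this happens, for example, in the Heston model when the variance is large. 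Then no uniform $K$ exists. H\"older/Jensen only gives the opposite direction. Your scaling argument, which is case (a) of Lemma~\ref{existence:order}, therefore never gets started. (Minor: your $\lambda$ should be a supremum, and the contradiction is with maximality.)

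What the fixed-point equation actually gives is a \emph{power} comparison. Since $H$ is increasing, $e^{-R_\rho(H_s-H_t)}\le e^{-(H_s-H_t)}$, hence
\[
F\le R_\rho\|F\|_\infty^{p}\,g=C\,(G^1)^{1/R_\rho}\le C\,(F')^{1/R_\rho},
\]
with exponent $r=1/R_\rho<1$ and your $p=1-1/R_\rho$. A ratio argument cannot close from a power bound. The paper instead uses the H\"older inequality for $T$, namely $T(G^r)\le (T\one)^{1-r}(TG)^r$, and iterates to
\[
F\le C^{p^n}(T^n\one)^{1-r}(F')^r .
\]
This closes only when $T^n\one\to F$. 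That holds for the maximal solution $\bar F$ between $G^1$ and $K$, because $T^nK=K^{p^n}T^n\one\to\bar F$.

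So one compares $\bar F$ with the minimal solution $\ubar F$, obtained by iterating upward from $G^1$, rather than two arbitrary bounded solutions. This is case (b) of Lemma~\ref{existence:order}, and it gives $\bar F\le\bar F^{1-r}\ubar F^{\,r}$, i.e.\ $\bar F\le\ubar F$. Extremality from Theorem~\ref{existence:fixed_point_iteration} then sandwiches every bounded solution above $G^1$ between them.

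\textbf{Step 2 is only heuristic as written.} Let $A_s=\int_s^\infty w_u\,du$ and $N_s=\E^{\Q}[A_s\mid\mathcal{F}_s]=w_sg_s$. The literal route, a pathwise identity followed by conditional Jensen, needs $\E^{\Q}[A_s^{R_\rho-1}\mid\mathcal{F}_s]\le N_s^{R_\rho-1}$. That is concavity of $x\mapsto x^{R_\rho-1}$, which holds only for $R_\rho\le 2$.

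The mechanism that works is It\^o's formula, as in Proposition~\ref{existence:subsolution}. There $g$ solves the $R_\rho=1$ equation $dg=-dt+g\,dH+Z\,d\tilde M$, so $g^{R_\rho}$ picks up the extra drift $\tfrac12R_\rho(R_\rho-1)g^{R_\rho-2}Z^2\,dU\ge 0$. Boundedness of $g$ supplies the martingale and transversality conditions (Remark~\ref{existence:subsolution:sufficient_conditions}). An equivalent elementary route is to apply the tangent inequality $x^{R_\rho}-y^{R_\rho}\le R_\rho x^{R_\rho-1}(x-y)$ to $N$ along a partition, pass to the limit, and use $N_T\to0$.
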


	    Moving beyond the strongly myopically well-posed setting, our second main result gives a necessary and sufficient criterion for the existence of a solution that is bounded and bounded away from zero if the growth of the cumulative myopic consumption is bounded from above (e.g.\ if $\eta$ is bounded from above in the absolutely continuous setting of \cref{setting:typical_factor_model}).
        We denote by $\one$ by constant process with value $1$.

        \begin{theorem}
                \label{existence:eta_bounded}
               Assume that $R > 1$ and that there exists a constant $C > 0$ s.t.\ $H_s - H_t \le C (s-t)$ for all $0 \le t \le s$.
                Then there exists a solution $F \in \mathcal{D}(T)$ to \eqref{hjb} that is bounded and bounded away from zero if and only if $\one \in \mathcal{D}(T)$ and $T \one$ is bounded.
                Moreover, if it exists, $F$ is the unique such solution and is $\mathbb{F}^{M}$-adapted.
		\end{theorem}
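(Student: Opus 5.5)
Write $q = 1-\frac{1}{R_{\rho}} \in (0,1)$, which holds because $R>1$ forces $R_\rho>1$ when $\rho^2\neq 1$; the complete case $R_\rho=1$ is trivial, since then $T\one$ is the only candidate. The map $x\mapsto x^q$ is increasing and concave, so $T$ is monotone and positively homogeneous of degree $q$: $T(cF)=c^q TF$.

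\emph{Necessity.} Let $F$ be a solution with $0<c\le F\le C'$. Monotonicity gives $F = TF \ge c^q T\one$, so $T\one \le c^{-q}F \le c^{-q}C'$. Hence $T\one$ is bounded, and finiteness of the relevant expectations gives $\one\in\mathcal{D}(T)$.

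\emph{Sufficiency.} Assume $T\one\le K$. The upper growth bound $H_s-H_t\le C(s-t)$ gives a lower bound
\[
T\one_t \ge \int_t^\infty e^{-R_\rho C(s-t)}R_\rho \d s = 1/C,
\]
so $T\one$ is bounded and bounded away from zero. We build ordered constant sub- and supersolutions.
\begin{itemize}
\item For the subsolution $\underline F=\epsilon\one$, we need $\epsilon\le T(\epsilon\one)=\epsilon^q T\one$. Since $\epsilon^q T\one \ge \epsilon^q/C$, this holds once $\epsilon^{1-q}\le 1/C$.
\item For the supersolution $\overline F = L\one$, we need $L\ge L^q T\one$. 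This holds once $L^{1-q}\ge K$.
\end{itemize}
These conditions are compatible with $\epsilon\le L$. All constants lie in $\mathcal{D}(T)$ by homogeneity.

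Then iterate $F^{n+1}=TF^n$ starting from $F^0=\overline F$. Monotonicity gives a decreasing sequence bounded below by $\underline F$, because $T\underline F\ge \underline F$ propagates. Monotone convergence for conditional expectations (optional projections) passes the limit inside, so $F=\lim F^n$ is a fixed point with $\epsilon\le F\le L$. Each iterate is $\mathbb F^M$-adapted because $H$ is (\cref{setting:assumption:M-adapted}) and, by the immersion following from \cref{setting:assumption:M-PRP}, conditioning on $\mathcal F_t$ equals conditioning on $\mathcal F^M_t$. I expect the paper's general sub/supersolution existence result from this section to package this step, and I would invoke it directly if available.

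\emph{Uniqueness.} This is the main step. Let $F_1,F_2$ be two solutions, both bounded and bounded away from zero. Set $\lambda^*=\sup\{\lambda\in(0,1]: \lambda F_1\le F_2\}$, which is positive by the two-sided bounds. Homogeneity gives
\[
F_2=TF_2\ge T(\lambda^*F_1)=(\lambda^*)^qF_1 .
\]
If $\lambda^*<1$, then $(\lambda^*)^q>\lambda^*$, contradicting maximality. Hence $F_1\le F_2$, and by symmetry $F_1=F_2$. This strict-concavity (Krasnoselskii-type) argument needs the ratio $F_2/F_1$ to be bounded in both directions, and this is exactly why the boundedness and bounded-away-from-zero class is essential. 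Adaptedness of the unique solution then follows from the construction above.
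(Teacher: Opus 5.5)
Your proposal is correct and is essentially the paper's proof. Both arguments use the same steps: constant sub- and supersolutions built from $1/C \le T\one \le K$, the monotone iteration of \cref{existence:fixed_point_iteration}, necessity from $F \ge c^{q}\, T\one$, and uniqueness from degree-$q$ homogeneity of $T$ combined with monotonicity. The only difference is cosmetic: you prove uniqueness by a direct sup/contradiction argument between two arbitrary solutions in the class, whereas the paper iterates $F \le C^{q^n} G$ (\cref{existence:order}) on the maximal and minimal solutions and then lets $k \downarrow 0$, $K \uparrow \infty$.
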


        Proofs of \cref{existence:uniformly_wellposed,existence:eta_bounded} are postponed until the end of the section.
        
        \begin{remark}
                While Theorem~\ref{existence:eta_bounded} does not explicitly require the growth of the cumulative myopic consumption to also be bounded from below, we heuristically expect a bounded solution to exist only in that case. 
                For example, see the results on the Vasicek model in \textcite{Guasoni2019,Gutekunst2025}, where the myopic consumption rate becomes arbitrarily negative when the interest rate is negative and the optimal consumption rate consequently approaches $0$, which corresponds to an unbounded solution to \eqref{hjb}.
        \end{remark}

        \begin{remark}
                In a Markovian setting, the existence statement of \cref{existence:uniformly_wellposed} was shown by \textcite[Thm.~4.3]{Gutekunst2025}.
                \Cref{existence:eta_bounded} generalises the necessary and sufficient existence criterion obtained in \cite[Thms.~3.4,A.1]{Gutekunst2025} for the case where the stochastic factor is a continuous-time Markov chain with finite state space or a reflected diffusion.
                The main interest of \cref{existence:eta_bounded} is that it provides a relatively simple necessary condition for the existence of a solution.
                In this setting, we will also be able to verify that the candidate solution is indeed optimal, see \cref{verification:example:bounded_eta}.
                
                Even in the Markovian case of \cref{setting:typical_factor_model}, \cref{existence:eta_bounded,verification:example:bounded_eta} yield a more general existence, uniqueness, and verification statement than the current literature.
                In the treatment of the bounded coefficient case in \cite[Sec.~7.1]{Gutekunst2025}, the drift and diffusion coefficients of the stochastic factor are assumed to be bounded; moreover the diffusion coefficient is assumed to be bounded away from zero. 
                Here, this type of assumption is not necessary; we only make assumptions on the myopic consumption rate.
        \end{remark}

        \begin{remark}
                While the models that are directly of interest typically have an unbounded myopic consumption rate (for instance, this is the case in the Heston, Kim-Omberg, and Vasicek models), a numerical computation of the solution will always have to introduce some kind of truncation, which then brings the model into the setting of \cref{existence:eta_bounded,verification:example:bounded_eta}.
                For example, the numerical approach of \cite{Gutekunst2025} implements truncation by reflecting the stochastic factor at the boundary of some large, but finite, interval. This yields a model with bounded myopic consumption rate as long as the myopic consumption rate in the original model is locally bounded.
                Hence, understanding the case of a bounded myopic consumption rate is of interest for the study of numerical methods.
        \end{remark}
        
        The rest of this section is devoted to the proofs of Theorems~\ref{existence:uniformly_wellposed} and \ref{existence:eta_bounded}.
    
		We begin by collecting some basic properties of the operator $T$ defined in \eqref{eq:defT}.

		\begin{lemma}
				\label{existence:operator:properties}
				Let $F \in \mathcal{D}(T)$.
				\begin{enumerate}[(i)]
						\item \label{existence:operator:properties:homogeneous}
								Homogeneity: For all constants $C > 0$, we have $T(CF) = C^{1-\frac{1}{R_{\rho}}} TF$.
						\item \label{existence:operator:properties:adapted}
								If $F$ is $\mathbb{F}^{M}$-adapted, then $TF$ is also $\mathbb{F}^{M}$-adapted and has continuous sample paths.
						\item \label{existence:operator:propeties:monotone}
								Assume that $R > 1$. If $G \in \mathscr{P}_{>0}$ with $G \le F$, then $G \in \mathcal{D}(T)$ and $TG \le TF$.
						\item \label{existence:operator:properties:hoelder}
								Hölder inequality: If $F^{p}, G^{q} \in \mathcal{D}(T)$ for some $G \in \mathscr{P}_{>0}$, where $p, q > 1$ with $\frac{1}{p} + \frac{1}{q} = 1$, then $FG \in \mathcal{D}(T)$ and $T(FG) \le (TF^{p})^{\frac{1}{p}} (TG^{q})^{\frac{1}{q}}$.
				\end{enumerate}
		\end{lemma}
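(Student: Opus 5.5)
We treat the four items separately; each is a short argument on the defining conditional expectation in \eqref{eq:defT}.

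\emph{Homogeneity} \ref{existence:operator:properties:homogeneous}. Since $(CF_s)^{1-\frac{1}{R_{\rho}}} = C^{1-\frac{1}{R_{\rho}}} F_s^{1-\frac{1}{R_{\rho}}}$, the constant factors out of the integral and the conditional expectation (optional projection). Hence $CF \in \mathcal{D}(T)$ and $T(CF) = C^{1-\frac{1}{R_{\rho}}} TF$.

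\emph{Monotonicity} \ref{existence:operator:propeties:monotone}. For $R>1$ we have $R_{\rho} \ge 1$, so the exponent $1-\frac{1}{R_{\rho}}$ lies in $[0,1)$ and $x \mapsto x^{1-\frac{1}{R_{\rho}}}$ is non-decreasing on $(0,\infty)$. Thus $G \le F$ gives a pointwise inequality between the integrands. Membership $G \in \mathcal{D}(T)$ and $TG \le TF$ then follow from monotonicity of the (conditional) expectation.

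\emph{Hölder} \ref{existence:operator:properties:hoelder}. Write $\alpha = 1-\frac{1}{R_{\rho}}$ and fix $t$. Consider the random measure $\mu_t(ds) = \exp(-R_{\rho}(H_s-H_t)) R_{\rho}\, ds$ on $[t,\infty)$ and the product measure $\mathbb{Q}(\cdot\mid\mathcal{F}_t) \otimes \mu_t$. Using a regular conditional probability, or arguing via $\mathbb{E}^{\mathbb{Q}}[\mathbf{1}_A \cdot]$ for $A \in \mathcal{F}_t$, we have
\[
(T(FG))_t = \mathbb{E}^{\mathbb{Q}}\Big[\int_t^\infty F_s^\alpha G_s^\alpha \,\mu_t(ds) \,\Big|\, \mathcal{F}_t\Big].
\]
The conditional Hölder inequality for this product-measure integral gives
\[
(T(FG))_t \le \mathbb{E}^{\mathbb{Q}}\Big[\int_t^\infty (F_s^{\alpha})^p\,\mu_t(ds)\,\Big|\,\mathcal{F}_t\Big]^{1/p} \, \mathbb{E}^{\mathbb{Q}}\Big[\int_t^\infty (G_s^{\alpha})^q\,\mu_t(ds)\,\Big|\,\mathcal{F}_t\Big]^{1/q}.
\]
Since $(F_s^\alpha)^p = (F_s^p)^\alpha$, the right-hand side equals $(TF^p)_t^{1/p}(TG^q)_t^{1/q}$. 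Taking unconditional expectations shows $FG \in \mathcal{D}(T)$. To avoid working with conditional Hölder on optional projections, one can argue at the level of $\mathbb{E}^{\mathbb{Q}}[\mathbf{1}_A \cdots]$ for $A\in\mathcal{F}_t$ and use the standard characterisation.

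\emph{Adaptedness and continuity} \ref{existence:operator:properties:adapted}. This is the main step. Write
\[
(TF)_t = \frac{e^{R_{\rho}H_t}}{R_{\rho}}\,\Big(N_t - \int_0^t D_s\, R_{\rho}\,F_s^{\alpha}\,ds\Big), \qquad N_t = \mathbb{E}^{\mathbb{Q}}\Big[\int_0^\infty D_s\,R_{\rho}\,F_s^{\alpha}\,ds\,\Big|\,\mathcal{F}_t\Big], \qquad D = e^{-R_{\rho}H}.
\]
This requires $\int_0^\infty D_s\,R_{\rho}\,F_s^{\alpha}\,ds$ to be $\mathbb{Q}$-integrable, which holds because $F\in\mathcal{D}(T)$ at $t=0$ and $e^{R_{\rho}H_0}=1$. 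The correct constant normalisation is immaterial here.

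The random variable $\int_0^\infty D_s R_{\rho} F_s^\alpha\, ds$ is $\mathcal{F}^M_\infty$-measurable by \cref{setting:assumption:M-adapted} and the $\mathbb{F}^M$-adaptedness of $F$. By the immersion property following \cref{setting:assumption:M-PRP}, $N_t = \mathbb{E}^{\mathbb{Q}}[\,\cdot\,\mid \mathcal{F}^M_t]$. This is an $(\mathbb{F}^M,\mathbb{Q})$-martingale, hence by \cref{setting:assumption:M-PRP} a stochastic integral against the continuous $\tilde M$, and in particular it has a continuous version.

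The factor $e^{R_{\rho}H_t}$ is continuous and $\mathbb{F}^M$-adapted, because $H$ is built from $dt$ and $dU$ integrals with $U$ continuous. The Lebesgue integral $\int_0^t D_s R_{\rho} F_s^\alpha\, ds$ is continuous and $\mathbb{F}^M$-adapted as well; it is finite a.s., being dominated by the integrable total integral. Therefore $TF$ (as optional projection) is indistinguishable from a continuous $\mathbb{F}^M$-adapted process.

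The subtle points are identifying the optional projection with this continuous version and justifying the integrability; both are handled by the decomposition above.
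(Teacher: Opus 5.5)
Your proof is correct and follows the paper's argument. Items (i) and (iii) are identical, and (iv) is the same Hölder argument; the paper applies Hölder pathwise with respect to $\exp(-R_\rho(H_s-H_t))R_\rho\,ds$ and then conditionally, which avoids your regular-conditional-probability caveat. Your decomposition $TF = e^{R_\rho H}\bigl(N - \int_0^{\cdot} D_s R_\rho F_s^{1-\frac{1}{R_\rho}}\,ds\bigr)$ in (ii) is exactly what underlies the paper's terse remark that continuity follows from continuity of $H$ and of $\mathbb{F}^M$. The only blemish is the spurious factor $1/R_\rho$ in that decomposition, which, as you note, is irrelevant.
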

		\begin{proof}
				\ref{existence:operator:properties:homogeneous} follows immediately from the definition of $T$. 

				For \ref{existence:operator:properties:adapted}, note that $ \E^{\Q}\left[ Y \condbar \mathcal{F}_t \right] = \E^{\Q}\left[ Y \condbar \mathcal{F}^{M}_t \right] $ for all $Y \in L^{1}(\mathcal{F}^{M}_\infty)$ by \cref{setting:assumption:M-PRP}.
				If $F$ is $\mathbb{F}^{M}$-adapted, we hence have \[
						(TF)_t = \E^{\Q}\left[ \int_{t}^{\infty} \exp(-R_{\rho} (H_s - H_t)) R_{\rho} F_s^{1-\frac{1}{R_{\rho}}} \d s \condbar \mathcal{F}^{M}_t \right] 
				,\] so $TF$ is $\mathbb{F}^{M}$-adapted.
				Moreover, $TF$ has continuous sample paths since $H$ has continuous sample paths and $\mathbb{F}^{M}$ is continuous by \cref{setting:assumption:M-PRP}.
                
				\ref{existence:operator:propeties:monotone} follows since the map $x \mapsto x^{1-\frac{1}{R_{\rho}}}$ is non-decreasing when $R_{\rho} \ge 1$, which is always the case when $R > 1$.

				Finally, \ref{existence:operator:properties:hoelder} follows by first applying the Hölder inequality pathwise to the inner integral (w.r.t.\ the measure $\exp(-R_{\rho} (H_s - H_t)) R_{\rho} \d s$) and then applying the Hölder inequality for conditional expectations.
		\end{proof}

		Our existence theory is based on the method of sub- and supersolutions.
		Recall that a process $F \in \mathcal{D}(T)$ is called a \emph{subsolution} to \eqref{hjb} if $F \le TF$.
		Similarly, $F$ is called a \emph{supersolution} to \eqref{hjb} if $F \ge TF$.
		Given an ordered pair of sub- and supersolutions, we can construct a solution that lies between the sub- and supersolution. 
		In fact, we can even construct maximal and minimal solutions.
        
		Since the construction relies on the monotonicity of $T$, we mostly restrict to the case $R > 1$ in this section.
        This is somewhat atypical, usually the case $R > 1$ is regarded as being more difficult than $R \in (0, 1)$. 
        This is indeed the case for primal verification arguments, but the opposite seems to be the case for a direct study of the HJB equation in the stochastic factor setting.
        This behaviour can also be observed in \cite[Sec.~4.1,Prop.~5.1]{Gutekunst2025}, where certain results do not apply for either $R \in (0, 1)$ or $R_\rho \in (0, \frac{1}{2}]$.
        
        \begin{theorem}
				\label{existence:fixed_point_iteration}
                Assume that $R>1$.
				Let $G^1, G^2 \in \mathcal{D}(T)$ with $G^1 \le G^2$ be sub- and supersolutions to \eqref{hjb}, respectively.
				Define the sequences $\ubar{F}^{1} = G^{1}$, $\ubar{F}^{n+1} = T\ubar{F}^{n}$, and $\bar{F}^1 = G^2$, $\bar{F}^{n+1} = T\bar{F}^{n}$, where $n \in \N$.
				Then $\ubar{F}^{n} \uparrow \ubar{F}$ and $\bar{F}^{n} \downarrow \bar{F}$ as $n \to \infty$, where $\ubar{F}$ ($\bar{F}$) is the minimal (maximal) solution to \eqref{hjb} with $G^{1} \le \cdot \le G^2$.
				In particular, \eqref{hjb} has a unique solution that satisfies $G^{1} \le \cdot \le G^2$ if and only if $\ubar{F} = \bar{F}$.
				Moreover, if $G^{1}$ ($G^2$) is $\mathbb{F}^{M}$-adapted, then $\ubar{F}$ ($\bar{F}$) is also $\mathbb{F}^{M}$-adapted.
        \end{theorem}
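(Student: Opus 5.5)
The plan is the standard monotone iteration argument, resting on the monotonicity of $T$ from \cref{existence:operator:properties}\ref{existence:operator:propeties:monotone} (valid since $R>1$) together with monotone convergence for conditional expectations.

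\textbf{Step 1: ordering of the iterates.} I will show by induction that
\[
G^1=\ubar F^1\le \ubar F^2\le\cdots\le \ubar F^n\le \bar F^n\le\cdots\le \bar F^1=G^2 .
\]
All iterates lie in $\mathcal D(T)$ and are positive. Indeed, any positive process dominated by $G^2\in\mathcal D(T)$ belongs to $\mathcal D(T)$ by \cref{existence:operator:properties}\ref{existence:operator:propeties:monotone}.

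For the base case, $\ubar F^2=TG^1\ge G^1$ because $G^1$ is a subsolution. Likewise $\bar F^2=TG^2\le G^2$ because $G^2$ is a supersolution. Monotonicity of $T$ gives $TG^1\le TG^2$.

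For the induction step, apply $T$ to the relations $\ubar F^{n-1}\le \ubar F^n\le\bar F^n\le\bar F^{n-1}$. Hence the limits $\ubar F=\lim\uparrow\ubar F^n$ and $\bar F=\lim\downarrow\bar F^n$ exist pointwise, and they satisfy $G^1\le\ubar F\le\bar F\le G^2$. In particular both limits are positive, belong to $\mathcal D(T)$, and are progressively measurable as pointwise limits.

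\textbf{Step 2: the limits are fixed points.} The map $x\mapsto x^{1-1/R_\rho}$ is continuous and non-decreasing. So the integrands $\exp(-R_\rho(H_s-H_t))R_\rho(\ubar F^n_s)^{1-1/R_\rho}$ increase to the corresponding integrand for $\ubar F$. Monotone convergence, applied first to the $ds$-integral and then to the conditional expectation, gives $T\ubar F^n\uparrow T\ubar F$, that is $\ubar F=T\ubar F$.

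For $\bar F$ the integrands decrease. They are dominated by the integrand for $G^2$, which is integrable since $G^2\in\mathcal D(T)$, so dominated convergence gives $T\bar F^n\downarrow T\bar F$.

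The main technical obstacle lies here: $T$ is defined through optional projections, so each identity holds first as an a.s.\ equality for each fixed $t$. I will argue at fixed $t$, taking a.s.\ limits of conditional expectations and handling versions with care. Where needed, I will upgrade to indistinguishability using the continuity of $TF$ for $\mathbb F^M$-adapted $F$ from \cref{existence:operator:properties}\ref{existence:operator:properties:adapted}, or simply interpret the solution notion as holding for each $t$ a.s.

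\textbf{Step 3: minimality, maximality and uniqueness.} Let $F$ be any solution with $G^1\le F\le G^2$. Inductively, $\ubar F^n\le F$ implies $\ubar F^{n+1}=T\ubar F^n\le TF=F$, and symmetrically $\bar F^n\ge F$. Passing to the limit yields $\ubar F\le F\le\bar F$. Hence $\ubar F$ is the minimal and $\bar F$ the maximal solution in the order interval. Uniqueness then holds if and only if $\ubar F=\bar F$.

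\textbf{Step 4: adaptedness.} If $G^1$ is $\mathbb F^M$-adapted, then by induction every $\ubar F^n$ is $\mathbb F^M$-adapted, using \cref{existence:operator:properties}\ref{existence:operator:properties:adapted}. The pointwise limit $\ubar F$ is therefore $\mathbb F^M$-adapted as well. The same argument applies to $\bar F$ when $G^2$ is $\mathbb F^M$-adapted.
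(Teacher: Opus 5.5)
Your proposal is correct and follows essentially the same route as the paper: monotone iteration from the sub- and supersolution using the monotonicity of $T$ from \cref{existence:operator:properties}, passage to the limit by monotone/dominated convergence, minimality and maximality by induction against an arbitrary solution in the order interval, and adaptedness by induction via part (ii) of that lemma. The differences are cosmetic. You treat both sequences jointly through the interleaved ordering $\ubar F^n\le\bar F^n$ and flag the version issue for optional projections, whereas the paper handles $\bar F$ only and declares $\ubar F$ analogous.
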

        \begin{proof}
				We consider only $\bar{F}$, the properties of $\ubar{F}$ follow analogously.
				As $G^2$ is a supersolution, we have $\bar{F}^1 \ge T\bar{F}^1 = \bar{F}^2$.
				In particular, we also have $\bar{F}^2 \in \mathcal{D}(T)$ by \cref{existence:operator:properties}.
				Inductively, the monotonicity of $T$ (see \cref{existence:operator:properties}) yields that $\bar{F}^{n} \ge \bar{F}^{n+1}$ and $\bar{F}^{n+1} \in \mathcal{D}(T)$ for all $n \in \N$, i.e.\ the sequence $(\bar{F}^n)_{n\in\N}$ is decreasing.
                Since $\bar{F}^1 \ge G^{1}$ and $G^{1}$ is a subsolution, it moreover follows inductively that $\bar{F}^n \ge G^{1}$ for all $n \in \N$.
                Hence, there exists a process $\bar{F}$ with $\bar{F} \ge G^1$ s.t.\ $\bar{F}^n \downarrow \bar{F}$.
				Note that $\bar{F} \in \mathcal{D}(T)$ as $\bar{F} \le \bar{F}^1$ by \cref{existence:operator:properties}.
                By dominated convergence, it follows that $T\bar{F}^n \downarrow T\bar{F}$.
				Hence, we have $\bar{F} = \lim_{n \to \infty} \bar{F}^n = \lim_{n \to \infty} T\bar{F}^n = T\bar{F}$, so $\bar{F}$ is a solution to \eqref{hjb}.

				For any solution $F$ to \eqref{hjb} with $F \le G^2$, \cref{existence:operator:properties} yields $F = TF \le TG^2 = \bar{F}^2$.
				Inductively, we have $F \le \bar{F}^{n}$ for all $n \in \N$, and thus also $F \le \bar{F}$.
				Hence, $\bar{F}$ is maximal.

				If $G^2$ is $\mathbb{F}^{M}$-adapted, $\bar{F}^{n}$ is $\mathbb{F}^{M}$-adapted for all $n \in \N$ by \cref{existence:operator:properties}.
				Hence, $\bar{F}$ is also $\mathbb{F}^{M}$-adapted.
        \end{proof}

		Now, we turn to finding sub- and supersolutions. 
		After rescaling, the solution from the complete case $R_{\rho} = 1$ is a subsolution to the incomplete case $R_{\rho} > 1$. 
		In the absolutely continuous setting of \cref{setting:typical_factor_model}, the same result can be obtained using subsolution methods for ODEs when $Y$ is an Itô diffusion, see \cite[Cor.~4.2]{Gutekunst2025}.

		\begin{proposition}
				\label{existence:subsolution}
				Assume that $R>1$ and set \[
						G_t = \E^{\Q}\left[ \int_{t}^{\infty} \exp(-(H_s - H_t)) \d s \condbar \mathcal{F}_t \right]^{R_{\rho}} 
				.\] 
				Denote by $Z^G$ the control process of $G$, and assume that $G \in \mathcal{D}(T)$, $\int_{0}^{\cdot} \exp(-R_{\rho} H_s) Z^G_s \d{} \tilde{M}_s$ is a $(\mathbb{F}, \Q)$-submartingale, and that $\liminf_{T \to \infty} \E^{\Q}\left[ \exp(-R_{\rho} H_T) G_T \condbar \mathcal{F}_t \right] = 0 $.
				Then $G$ is an $\mathbb{F}^{M}$-adapted subsolution.
		\end{proposition}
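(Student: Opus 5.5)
Set $\Phi_t = \E^{\Q}\bigl[\int_t^\infty e^{-(H_s-H_t)}\d s \condbar \mathcal{F}_t\bigr]$, so that $G = \Phi^{R_\rho}$. The plan is to verify the three conditions of \cref{existence:dynamics:sub_super_solution} for $G$ in its subsolution form. Conditions (ii) and (iii) are assumed, so the real work is the drift comparison (i), together with $\mathbb{F}^M$-adaptedness.

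\emph{Adaptedness and dynamics of $\Phi$.} The integrand $\int_t^\infty e^{-(H_s-H_t)}\d s$ is $\mathcal{F}^M_\infty$-measurable by \cref{setting:assumption:M-adapted}. The immersion consequence of \cref{setting:assumption:M-PRP} then lets us replace $\mathcal{F}_t$ by $\mathcal{F}^M_t$, so $\Phi$ and hence $G$ are $\mathbb{F}^M$-adapted and continuous. This is exactly as in \cref{existence:operator:properties}\ref{existence:operator:properties:adapted}, which applies since $\Phi = T\one$ in the case $R_\rho=1$.

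Next, set $N_t = \E^{\Q}\bigl[\int_0^\infty e^{-H_s}\d s \condbar \mathcal{F}^M_t\bigr]$, a $\Q$-martingale; finiteness follows from $G\in\mathcal D(T)$, or can be handled by localisation. We have $e^{-H_t}\Phi_t = N_t - \int_0^t e^{-H_s}\d s$. By \cref{setting:assumption:M-PRP}, $dN_t = \zeta_t\, d\tilde M_t$ for some $\mathbb{F}^M$-predictable $\zeta$. Since $H$ is continuous and of finite variation, the product rule gives
\[
d\Phi_t = -\d t + \Phi_t\, dH_t + Z^\Phi_t\, d\tilde M_t, \qquad Z^\Phi_t = e^{H_t}\zeta_t,
\]
which is \eqref{eq:IVCBSDE} in the complete case. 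Alternatively one can invoke \cref{existence:dynamics} with $R_\rho = 1$.

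\emph{Itô for $G=\Phi^{R_\rho}$.} Since $\Phi>0$ is a continuous semimartingale, Itô's formula gives
\[
dG_t = R_\rho\Phi_t^{R_\rho-1}\bigl(-\d t + \Phi_t\, dH_t\bigr) + \tfrac12 R_\rho(R_\rho-1)\Phi_t^{R_\rho-2}(Z^\Phi_t)^2\, d\langle\tilde M\rangle_t + R_\rho\Phi_t^{R_\rho-1}Z^\Phi_t\, d\tilde M_t .
\]
Now use $\Phi^{R_\rho-1} = G^{1-1/R_\rho}$ and $\Phi^{R_\rho} = G$. The finite-variation part is then
\[
A = \int_0^\cdot \bigl(-R_\rho G_s^{1-\frac1{R_\rho}}\bigr)\d s + \int_0^\cdot R_\rho G_s\, dH_s + \int_0^\cdot \tfrac12 R_\rho(R_\rho-1)\Phi_s^{R_\rho-2}(Z^\Phi_s)^2\, dU_s .
\]
The first two terms are exactly the driver appearing in condition (i). The third term is non-decreasing because $R>1$ implies $R_\rho\ge1$, and $\langle\tilde M\rangle = U$ is non-decreasing. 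Hence $A$ strongly majorises the driver, which is condition (i) for a subsolution.

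The control process is $Z^G = R_\rho\Phi^{R_\rho-1}Z^\Phi$, and conditions (ii) and (iii) are exactly the hypotheses. \Cref{existence:dynamics:sub_super_solution} therefore yields $G\le TG$.

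\emph{Main obstacle.} The delicate step is justifying the semimartingale decomposition of $\Phi$ rigorously. This requires integrability of $\int_0^\infty e^{-H_s}\d s$ under $\Q$, or a localisation argument that truncates at a finite horizon, applies martingale representation, and passes to the limit. I would handle it by working on $[0,T]$ with $N^T$ and letting $T\to\infty$, mirroring the proof of \cref{existence:dynamics}.
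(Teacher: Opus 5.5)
Your proposal is correct and follows essentially the same route as the paper. You identify $\Phi = G^{1/R_\rho}$ as the complete-market ($R_\rho=1$) solution with dynamics $d\Phi_t = -\mathrm{d}t + \Phi_t\,dH_t + Z^\Phi_t\,d\tilde M_t$ (the paper simply cites \cref{existence:dynamics} with $R_\rho=1$ for this), apply Itô's formula to $\Phi^{R_\rho}$, and observe that the correction term $\frac12 R_\rho(R_\rho-1)G^{1-\frac{2}{R_\rho}}(Z^\Phi)^2\,dU$ is non-decreasing, so the drift condition of \cref{existence:dynamics:sub_super_solution} holds while the other two conditions are assumed.
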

		\begin{proof}
				Set \[
						J_t = G_t^{\frac{1}{R_{\rho}}} = \E^{\Q}\left[ \int_{t}^{\infty} \exp(-(H_s - H_t)) \d s \condbar \mathcal{F}_t \right]
				.\]
				Notice that $J$ is $\mathbb{F}^{M}$-adapted by \cref{existence:operator:properties}.
				Moreover, notice that $J$ is the solution to \eqref{hjb} for $R_{\rho} = 1$, i.e.\ in the complete market case.
				Using \cref{existence:dynamics} for $J$ with $R_{\rho} = 1$, the dynamics of $J$ are given by \[
						dJ_t = -dt + J_t dH_t + Z^{J}_t d\tilde{M}_t
				\] for some $\mathbb{F}^{M}$-predictable process $Z^{J}$.
				Hence, Itô's lemma yields that the dynamics of $G$ are given by \[
						dG_t = -R_{\rho} G_t^{1-\frac{1}{R_{\rho}}} dt + R_{\rho} G_t dH_t + \frac{1}{2} R_{\rho} (R_{\rho}-1) G_t^{1-\frac{2}{R_{\rho}}} (Z^{J}_t)^2 d U_t + R_{\rho} G_t^{1-\frac{1}{R_{\rho}}} Z^{J}_t d\tilde{M}_t
				.\] 
				Since $U $ is non-decreasing and $R_{\rho} \ge 1$, condition \ref{existence:dynamics:sub_super_solution:drift} of \cref{existence:dynamics:sub_super_solution} is satisfied.
				The other conditions of \cref{existence:dynamics:sub_super_solution} are satisfied by assumption.
				Hence, $G$ is a subsolution.
		\end{proof}

		\begin{remark}
				\label{existence:subsolution:sufficient_conditions}
                The conditions of \cref{existence:subsolution} are satisfied if the model is strongly myopically well-posed:
	            In this case, $J$ is bounded, so $\liminf_{T \to \infty} \E^{\Q}\left[ \exp(-R_\rho H_T) J_T^{R_\rho} \condbar \mathcal{F}_t \right] = 0$.
				Moreover, $\int_{0}^{\cdot} \exp(-H_s) Z^{J}_s \d{} \tilde{M}_s $ is a bounded martingale by \cref{existence:dynamics}, so $ \E^{\Q}[\int_{0}^{t} (\exp(-H_s) Z^{J}_s)^2 \d U_s ] < \infty$ for all $t \ge 0$.
				Since $J$ is bounded and $H \ge 0$, we thus have $ \E^{\Q}[\int_{0}^{t} (\exp(-R_\rho H_s) R_\rho J_s^{R_\rho-1} Z^{J}_s)^2 \d U_s ] < \infty $ for all $t \ge 0$.
				Hence, $\int_{0}^{\cdot} \exp(-R_\rho H_s) R_\rho J_s^{R_\rho-1} Z^{J}_s \d{} \tilde{M}_s $ is a true martingale.
		\end{remark}
		
		In analogy to \cite[Cor.~4.2]{Gutekunst2025}, we expect the fully incomplete case $R_{\rho} = R$, i.e.\ $\rho = 0$, to yield a supersolution. 
		However, this supersolution is not particularly useful since (unlike in the complete case) there is no explicit formula available for the solution in the fully incomplete case.
		Instead, we pass to the complete case (which pulls the solution downwards) and decrease the myopic consumption rate (which pulls the solution upwards). 
		As long the myopic consumption rate is decreased sufficiently, this yields a supersolution.
		When $Y$ is an Itô diffusion in the absolutely continuous setting of \cref{setting:typical_factor_model}, this is the lower market contraction of \textcite{Guasoni2020}.

		\begin{proposition}
				\label{existence:supersolution}
				Let $\tilde{H}$ be a process that is strongly majorised by $H$.
				Assume that \[
						G_t = \E^{\Q}\left[\int_{t}^{\infty} \exp\left(- (\tilde{H}_s - \tilde{H}_t) \right) \d s \condbar \mathcal{F}_t \right]^{R_{\rho}}
				\] is well-defined, and that $G \in \mathcal{D}(T)$.
				Denote by $Z^G$ the control process of $G$.
				Assume that $\int_{0}^{\cdot } R_{\rho} G_s \d{} \tilde{H}_s + \int_{0}^{\cdot } \frac{1}{2} \frac{R_{\rho}-1}{R_{\rho}} \frac{(Z^G_s)^2}{G_s} \d{} U_s $ is strongly majorised by $\int_{0}^{\cdot } R_{\rho} G_s \d H_s $.
				Then $G$ is an $\mathbb{F}^{M}$-adapted supersolution to \eqref{hjb}.
		\end{proposition}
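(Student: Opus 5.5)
Following the proof of \cref{existence:subsolution}, we set
\[
J_t = G_t^{\frac{1}{R_{\rho}}} = \E^{\Q}\left[\int_t^\infty \exp(-(\tilde H_s-\tilde H_t))\d s \condbar \mathcal{F}_t\right],
\]
which solves \eqref{hjb} in the complete case $R_\rho=1$ with $H$ replaced by $\tilde H$. The plan is to compute the dynamics of $G$ via Itô's lemma, compare the drift with that in \eqref{eq:IVCBSDE}, and then invoke \cref{existence:dynamics:sub_super_solution} in its supersolution form.

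\textbf{Adaptedness and dynamics of $J$.} Since $\tilde H$ is strongly majorised by $H$, we need $\tilde H$ to be $\mathbb{F}^M$-adapted and continuous for $J$ to be $\mathbb{F}^M$-adapted. I would read this as implicit in the statement, since the dynamics of $G$ are assumed to make sense. The argument of \cref{existence:operator:properties}\ref{existence:operator:properties:adapted} then gives that $J$ is continuous and $\mathbb{F}^M$-adapted. Applying \cref{existence:dynamics} with $R_\rho=1$ and $\tilde H$ in place of $H$ (its proof only uses continuity and finite variation of the clock) gives
\[
dJ_t = -dt + J_t\,d\tilde H_t + Z^J_t\,d\tilde M_t .
\]

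\textbf{Dynamics of $G$.} By Itô's lemma applied to $x\mapsto x^{R_\rho}$, exactly as in \cref{existence:subsolution},
\[
dG_t = -R_\rho G_t^{1-\frac1{R_\rho}}dt + R_\rho G_t\,d\tilde H_t + \tfrac12 R_\rho(R_\rho-1)G_t^{1-\frac2{R_\rho}}(Z^J_t)^2\,dU_t + Z^G_t\,d\tilde M_t,
\]
with $Z^G = R_\rho G^{1-\frac1{R_\rho}}Z^J$. Rewriting the quadratic term in terms of $Z^G$ uses $(Z^J)^2 = (Z^G)^2 G^{\frac2{R_\rho}-2}/R_\rho^2$, so
\[
\tfrac12 R_\rho(R_\rho-1)G^{1-\frac2{R_\rho}}(Z^J)^2 = \tfrac12\,\frac{R_\rho-1}{R_\rho}\,\frac{(Z^G)^2}{G}.
\]
The finite-variation part $A$ of $G$ is therefore $-\int_0^\cdot R_\rho G_s^{1-\frac1{R_\rho}}\d s$ plus precisely the process assumed to be strongly majorised by $\int_0^\cdot R_\rho G_s \d H_s$.

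\textbf{Verifying the supersolution conditions.} Given the computation above, the hypothesis says exactly that
\[
\int_0^\cdot -R_\rho G_s^{1-\frac1{R_\rho}}\d s + \int_0^\cdot R_\rho G_s\d H_s
\]
strongly majorises $A$, which is condition \ref{existence:dynamics:sub_super_solution:drift} in its supersolution form. Conditions \ref{existence:dynamics:sub_super_solution:martingale} and \ref{existence:dynamics:sub_super_solution:transversality} hold automatically for supersolutions by \cref{existence:dynamics:sub_super_solution:sufficient_criteria}, since $G\in\mathcal{D}(T)$ and $G\ge0$. Then \cref{existence:dynamics:sub_super_solution} yields $G\ge TG$.

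The main obstacle I expect is the first step. We must ensure that \cref{existence:dynamics} really applies with $\tilde H$ in place of $H$: this requires $\tilde H$ to be $\mathbb{F}^M$-adapted and the well-definedness of $J$ to give the needed integrability, namely $J\in\mathcal{D}(T)$ for the modified problem. Here $J\in\mathcal{D}(T)$ is immediate since $\int_t^\infty e^{-(\tilde H_s-\tilde H_t)}\d s$ has finite conditional expectation by assumption. The rest is the same Itô computation as in \cref{existence:subsolution}.
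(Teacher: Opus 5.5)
Your proposal is correct and is essentially the paper's own proof. Both arguments regard $G^{1/R_\rho}$ as the complete-market ($R_\rho=1$) solution with clock $\tilde H$, apply Itô's formula to $x\mapsto x^{R_\rho}$, and rewrite the $dU$-term as $\frac12\frac{R_\rho-1}{R_\rho}\frac{(Z^G)^2}{G}$. Both then conclude from the supersolution form of the differential characterisation, using the remark that the martingale and transversality conditions are automatic for supersolutions. The regularity of $\tilde H$ that you flag (continuity and $\mathbb{F}^M$-adaptedness) is also left implicit in the paper, and it holds in the paper's application $\tilde H_t=\eta_0 t+\tilde\eta_1 U_t$.
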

		\begin{proof}
				Analogously as in \cref{existence:subsolution}, one sees that the dynamics of $G$ are given by \[
						dG_t = -R_{\rho} G_t^{1-\frac{1}{R_{\rho}}} dt + RG_t d\tilde{H}_t + \frac{1}{2} \frac{R_{\rho}-1}{R_{\rho}} \frac{(Z^G_t)^2}{G_t} d U_t + Z^G_t d\tilde{M}_t
				.\] 
				Since $\int_{0}^{\cdot } R_{\rho} G_s \d{} \tilde{H}_s + \int_{0}^{\cdot } \frac{1}{2} \frac{R_{\rho}-1}{R_{\rho}} \frac{(Z^G_s)^2}{G_s} \d{} U_s $ is strongly majorised by $\int_{0}^{\cdot } R_{\rho} G_s \d H_s $ by assumption, $G$ is a supersolution by \cref{existence:dynamics:sub_super_solution,existence:dynamics:sub_super_solution:sufficient_criteria}.
		\end{proof}

		\begin{remark}
				\label{existence:supersolution:sufficient_conditions}
				In particular, $G$ is well-defined when the $\tilde{H}$-model is strongly myopically well-posed.
		\end{remark}
        
        We are now in a position to prove Theorems~\ref{existence:uniformly_wellposed} and \ref{existence:eta_bounded}.
		
        \begin{proof}[Proof of Theorem~\ref{existence:uniformly_wellposed}]
				Since the model is strongly myopically well-posed, there exists a constant $\eta_0 > 0$ with $H_s - H_t \ge \eta_0 (s-t)$.
                Notice that $T\one \le \frac{1}{\eta_0}$.
				Fix $K \ge \eta_0^{-R_\rho}$, and set $G_2 = K$.
             	By \cref{existence:operator:properties}, we have \[
						TG^2 = K^{1-\frac{1}{R_{\rho}}} T \one \le K^{1 - \frac{1}{R_\rho}} \frac{1}{\eta_0} \le K = G^2
				,\] i.e.\ $G^2$ is a supersolution.
                Moreover, $G^{1}$ is a subsolution by \cref{existence:subsolution,existence:subsolution:sufficient_conditions}.
                Notice that $G^1 \le G^2$.      
				The existence of a $\mathbb{F}^{M}$-adapted solution $F$ with $G^{1} \le F \le K$ now follows from \cref{existence:fixed_point_iteration}.

				It remains to show uniqueness. 
				Let $G^{1} \le \ubar{F} \le \bar{F} \le K$ be the minimal and maximal solutions from \cref{existence:fixed_point_iteration}. 
				Since $F$ is bounded and $H$ is increasing, we have \begin{multline*}
						\bar{F}_t = \E^{\Q}\left[ \int_{t}^{\infty} \exp(-R_{\rho} (H_s - H_t)) R_{\rho} \bar{F}_s^{1-\frac{1}{R_{\rho}}} \d s \condbar \mathcal{F}_t \right] \le R_{\rho} K^{1-\frac{1}{R_{\rho}}} \E^{\Q}\left[ \int_{t}^{\infty} \exp(-(H_s - H_t)) \d s \condbar \mathcal{F}_t \right] \\= R_{\rho} K^{1-\frac{1}{R_{\rho}}} (G^{1}_t)^{\frac{1}{R_{\rho}}} \le R_{\rho} K^{1-\frac{1}{R_{\rho}}} \ubar{F}_t^{\frac{1}{R_{\rho}}}
				.\end{multline*} 
				Moreover, notice that $T^{n} K = K^{ (1-\frac{1}{R_{\rho}})^{n}} T^{n} \one$ by iteratively applying \cref{existence:operator:properties}.
				By \cref{existence:fixed_point_iteration}, we have $T^{n} K \to \bar{F}$ as $n \to \infty$, so that $T^{n} \one \to \bar{F}$ as well since $K^{(1-\frac{1}{R_{\rho}})^{n}} \to 1$.
				Hence, \cref{existence:order} yields $\bar{F} \le \ubar{F}$, so $F$ is the unique solution with $G^{1} \le F \le K$.
				Since $K$ was arbitrary, $F$ is the unique bounded solution with $G^{1} \le F$.
		\end{proof}

		\begin{proof}[Proof of Theorem~\ref{existence:eta_bounded}]
				First, assume that $\one \in \mathcal{D}(T)$ and that $T \one$ is bounded by a constant $K > 0$.
				Since $H_s - H_t \le C (s-t)$ for all $0 \le t \le s$, we have \[
						(T \one)_t = \E^{\Q}\left[ \int_{t}^{\infty} \exp( -R_{\rho} (H_s - H_t) ) R_{\rho} \d s \condbar \mathcal{F}_t\right] \ge \E^{\Q}\left[ \int_{t}^{\infty} \exp(-R_{\rho}C (s-t)) R_{\rho} \d s \condbar \mathcal{F}_t 	\right] = \frac{1}{C}
				.\] 
				Fix $k \le \frac{1}{C}$ and set $G^{1} = k^{R_{\rho}}$ and $G^2 = K^{R_{\rho}}$.
				$G^{1}$ is a subsolution since \[
						T G^{1} = (k^{R_{\rho}})^{1-\frac{1}{R_{\rho}}} T \one \ge k^{R-1} k = k^{R} = G^{1}
				.\]
				Analogously, one sees that $G^2$ is a supersolution.
				Note that we can always choose $k \le K$, so that $G^{1} \le G^2$.
				By \cref{existence:fixed_point_iteration}, there exist a minimal solution $ \ubar{F} $ and a maximal solution $\bar{F}$ to \eqref{hjb} with $G^{1} \le \cdot \le G^2$.
				Notice that $ \ubar{F}, \bar{F}$ are both $\mathbb{F}^{M}$-adapted.
				For uniqueness, it remains to show that $ \ubar{F} = \bar{F}$. 
				It is clear that $ \ubar{F} \le \bar{F}	$ by maximality. 
				Moreover, we have $\bar{F} = \frac{\bar{F}}{\ubar{F}} \ubar{F} \le \frac{K^{R_\rho}}{k^{R_\rho}} \ubar{F}$, so that $\bar{F} \le \ubar{F}$ by \cref{existence:order}.
				Since $k$ ($K$) can be chosen arbitrarily small (large), $F$ is the unique solution that is bounded and bounded away from zero.
				
				Conversely, let $F \in \mathcal{D}(T)$ be a solution to \eqref{hjb} that satisfies $k \le F \le K$ for some constants $k, K > 0$.
				We then have \[
						F_t = \E^{\Q}\left[\int_{t}^{\infty} \exp(-R_{\rho} (H_s - H_t)) R_{\rho} F_s^{1-\frac{1}{R_{\rho}}} \d s \condbar \mathcal{F}_t \right] \ge k^{1-\frac{1}{R_{\rho}}} T \one
				,\] so $\one \in \mathcal{D}(T)$ by \cref{existence:operator:properties}. 
				Moreover, we have $T \one \le \frac{K}{k^{1-\frac{1}{R_{\rho}}}}$, so $T \one$ is bounded.
		\end{proof}

		\subsection{Stability}
		\label{subsection:hjb:stability}

		In this section, we study the stability of solutions to \eqref{hjb} and their associated control process in strongly myopically well-posed models.
        This result will be used in the study of the Volterra Heston model in \cref{section:heston}, where we consider a sequence of regularised approximations to the model.

        In the following, we denote by $C$ the space of continuous functions equipped with the topology of locally uniform convergence.
        
		\begin{theorem}
				\label{stability:uniformly_wellposed}
				Assume that $R>1$.
				Let $(H^{n}, \tilde{M}^{n}, \mathbb{F}^{n})_{n \in \N}$ be a sequence of stochastic factors (which may be defined on different probability spaces).
				Assume that there exists a constant $\eta_0 > 0$ s.t.\ $H^{n}_s - H^{n}_t \ge \eta_0 (s-t)$ for all $n\in \N$, $0 \le t \le s$.
				Moreover, assume that $\sup_n \E^{\Q^{n}}[H^{n}_t]  < \infty $ for all $t \ge 0$.
				Let $F^{n}$ be the unique solution to \eqref{hjb} with control process $Z^{n}$ that satisfies \[
						G^{n}_t := \E^{\Q^{n}}\left[ \int_{t}^{\infty} \exp(-(H^{n}_s - H^{n}_t)) \d s \condbar \mathcal{F}^{n}_t \right]^{R_{\rho}} \le F^{n}_t \le \eta_0^{-R_{\rho}}
				\] for all $t \ge 0$.
				If $(H^{n}, \tilde{M}^{n}) \to (H, \tilde{M}) $ weakly over $C \times C$ w.r.t.\ $\Q^{n}$, then $(F^n, \int Z^n_s \d \langle \tilde{M}^n \rangle_s) \to (F, \int Z_s \d U_s)$ weakly over $C \times C$, where $F$ is the unique solution to \eqref{hjb} with control process $Z$ that satisfies \[
						G_t := \E^{\Q}\left[ \int_{t}^{\infty} \exp(-(H_s - H_t)) \d s \condbar \mathcal{F}_t \right]^{R_{\rho}} \le F_t \le \eta_0^{-R_{\rho}}
				\] for all $t \ge 0$.
		\end{theorem}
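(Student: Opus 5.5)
The plan is to combine the fixed-point iteration of \cref{existence:fixed_point_iteration} with a uniform contraction estimate, so that each $F^n$ is approximated, uniformly in $n$, by finitely many iterates of $T$ applied to the constant process. Each iterate will be an explicit functional of $(H^n,\tilde M^n)$ that passes to the weak limit.

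\textbf{Step 1 (uniform rate for the iteration).} Let $T^n$ denote the operator $T$ of the $n$-th model. By \cref{existence:operator:properties}, $(T^n)^k K = K^{(1-1/R_\rho)^k}(T^n)^k\one$ for constants $K$. From the proof of \cref{existence:uniformly_wellposed}, together with $\ubar F\ge G$ and the order argument of \cref{existence:order}, I expect the following. With $K=\eta_0^{-R_\rho}$, the upper iterates $(T^n)^kK$ decrease to $F^n$. The lower iterates starting from $G^n$ increase to $F^n$. Their ratio is at most $c\,$ with $c=R_\rho K^{1-1/R_\rho}\eta_0^{\ldots}$, a constant depending only on $\eta_0$ and $R_\rho$. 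Each application of $T$ raises this ratio to the power $1-1/R_\rho$ (homogeneity plus monotonicity). Hence
\[
(T^n)^k K / F^n \le c^{(1-1/R_\rho)^k},
\]
which tends to $1$ uniformly in $n$ and $t$. So $\sup_n\sup_t|F^n_t-(T^n)^kK_t|\le \varepsilon_k\to0$, using the uniform bound $F^n\le\eta_0^{-R_\rho}$.

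\textbf{Step 2 (weak convergence of finite iterates).} Fix $k$ and show that $((T^n)^kK,\tilde M^n,H^n)\to((T)^kK,\tilde M,H)$ weakly. For $k=1$,
\[
(T^n\one)_t=\E^{\Q^n}\Bigl[\int_t^\infty e^{-R_\rho(H^n_s-H^n_t)}R_\rho\d s\condbar\mathcal F^n_t\Bigr].
\]
The integrand is a bounded continuous functional of the path of $H^n$, and its tail beyond a time $T_0$ is bounded by $e^{-R_\rho\eta_0 (T_0-t)}$ uniformly in $n$. The difficulty is convergence of conditional expectations. I would handle it through the martingale representation: $T^n\one$ is the continuous martingale part plus an explicit drift, via \cref{existence:dynamics}, namely
\[
dJ = -R_\rho dt+ R_\rho J\,dH + Z\,d\tilde M .
\]
Tightness of $(J^n,\int Z^n d\langle\tilde M^n\rangle)$ in $C\times C$ should follow from the uniform bounds, the moment bound on $H^n$, and Kolmogorov/Aldous criteria applied to the bounded martingale $\int e^{-R_\rho H^n}Z^n d\tilde M^n$. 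Any limit point is then identified as a solution of the limiting BSDE with terminal decay, and it is unique by \cref{existence:uniformly_wellposed}. This is in the spirit of the weak-convergence stability results for BSDEs of Briand--Delyon--Mémin type, where convergence of filtrations follows from the joint weak convergence of the martingale drivers with the predictable representation property (\cref{setting:assumption:M-PRP}). I expect this identification step to be the main obstacle: one must show that the limit of conditional expectations under $\mathbb F^n$ is the conditional expectation under the limiting filtration. I would first try the Briand--Delyon--Mémin approach, using that all martingales in question are stochastic integrals against $\tilde M^n$ and that $\tilde M^n\to\tilde M$ weakly. The same identification yields convergence of the covariation $\int Z^n d\langle\tilde M^n\rangle=\langle F^n,\tilde M^n\rangle$.

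\textbf{Step 3 (direct route and conclusion).} Rather than iterating Step 2 $k$ times, it may be cleaner to apply the tightness and identification argument directly to $(F^n,\int Z^n d\langle\tilde M^n\rangle)$, using the uniform bounds $G^n\le F^n\le\eta_0^{-R_\rho}$. A limit point $(F^\ast,A^\ast)$ then satisfies \eqref{eq:IVCBSDE} with $A^\ast=\int Z^\ast dU$. The bounds pass to the limit, because Step 2 gives convergence of $G^n$ to $G$. By uniqueness in \cref{existence:uniformly_wellposed}, $F^\ast=F$. Since every subsequence therefore has a further subsequence converging to the same limit, the whole sequence converges weakly. Step 1 then serves as a backup, giving uniform-in-$n$ control needed to exchange the limits in $n$ and $k$ if the direct identification fails.
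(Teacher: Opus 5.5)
Your Step 3 skeleton matches the paper's: tightness, identify every limit point as a solution, invoke the uniqueness of \cref{existence:uniformly_wellposed}, and finish with a subsequence argument. However, every step that carries the content is either only asserted or would fail.

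\textbf{Step 1 and tightness.} Your fallback Step 1 is wrong: no constant $c$ depending only on $\eta_0,R_\rho$ bounds the ratio of upper and lower iterates. The hypotheses bound the growth of $H^n$ only from below. So $G^n\le\eta_0^{-R_\rho}$, but $G^n$ and $F^n$ are not bounded away from zero. The proof of \cref{existence:uniformly_wellposed} gives only the power bound $\bar F\le R_\rho K^{1-1/R_\rho}G^{1/R_\rho}$ (with $K=\eta_0^{-R_\rho}$). It then reaches $\bar F=\ubar F$ through the non-quantitative limit of \cref{existence:order}(b). Concretely, for $H_t=\eta t$ with $\eta\ge\eta_0$ one gets $T^kK/F=(\eta/\eta_0)^{R_\rho(1-1/R_\rho)^k}$, which is unbounded in $\eta$ for every $k$. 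So Step 1 gives no uniform rate for exchanging the limits in $n$ and $k$.

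$C$-tightness also does not follow from uniform bounds via Aldous or Kolmogorov. A uniformly bounded family of continuous martingales can have arbitrarily bad moduli of continuity, and nothing controls $\int e^{-2R_\rho H^n}(Z^n)^2\,d\langle\tilde M^n\rangle$ on short windows. The paper instead proceeds as follows:
\begin{enumerate}[(i)]
\item It works in the Meyer--Zheng topology. Tightness of $e^{-R_\rho H^n}F^n$ comes from the conditional-variation bound $V_t\le R_\rho K^{1-1/R_\rho}t$. This bound holds because $N^n=\int_0^\cdot e^{-R_\rho H^n_s}R_\rho (F^n_s)^{1-1/R_\rho}\,ds+e^{-R_\rho H^n}F^n$ is a martingale.
\item It upgrades to $C$ only at the end, once the limit martingale is known to be continuous, using Pratelli's theorem.
\end{enumerate}

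\textbf{Identification.} You correctly flag identification as the crux, but leave it open, and Briand--Delyon--M\'emin does not supply it. That framework assumes a common probability space, convergence in probability, a Lipschitz generator and a finite horizon. Here the spaces differ, only weak convergence is available, and $x\mapsto x^{1-1/R_\rho}$ is not Lipschitz near $0$. What is needed:
\begin{enumerate}[(i)]
\item Meyer--Zheng's theorems, extended to the filtration $\mathbb{G}$ generated by all limit processes $(H,\tilde M,U,F)$. They show that $N=\Phi(H,F)$ is a $\mathbb{G}$-martingale and that $\tilde M$ is a $\mathbb{G}$-local martingale with $\langle\tilde M\rangle=U$.
\item Immersion of $\mathbb{F}^M$ in $\mathbb{G}$ via \cref{setting:assumption:M-PRP}. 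This ensures the lower bound $G$ computed over $\mathbb{G}$ is the $G$ of the statement, so that \cref{existence:uniformly_wellposed} applies over $\mathbb{G}$.
\item Closedness of the bounded martingale $N$, which gives the integral form $F=TF$. Merely satisfying \eqref{eq:IVCBSDE} is not enough for uniqueness without the martingale and transversality conditions of \cref{existence:dynamics}.
\item Uniqueness over $\mathbb{G}$, which forces $F$ to be $\mathbb{F}^M$-adapted and hence a solution in the original filtration.
\end{enumerate}

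\textbf{Covariation.} Convergence of $\int Z^n\,d\langle\tilde M^n\rangle=\langle F^n,\tilde M^n\rangle$ does not come for free from identifying $F$. It needs a covariation-stability result such as the P-UT theorem of Jacod--Shiryaev. For this one bounds the drift part of $F^n$ by $\operatorname{Var}(A^n)_t\le R_\rho K^{1-1/R_\rho}t+R_\rho K H^n_t$. This is exactly where the hypothesis $\sup_n\E^{\Q^n}[H^n_t]<\infty$ is used, and your proposal never uses that hypothesis concretely.
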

		\begin{proof}
				We will show that there exists a subsequence along which the convergence holds. 
				Since the limit is unique and any subsequence again satisfies the assumptions of the theorem, this implies that any subsequence has a convergent subsequence, which yields convergence along the whole sequence.
				
				Set $K = \eta_0^{-R_{\rho}}$. 
				In the following, we denote by $D_K$ the set of càdlàg functions that are bounded by $K$, equipped with the Meyer-Zheng \cite{Meyer1984} topology.
				In addition, we denote $C^{\eta_0} = \{f \in C: f(s) - f(t) \ge \eta_0 (s-t)\} $.

                Set $N^{n}_t = \int_{0}^{t} \exp(-R_{\rho} H^{n}_s) R_{\rho} (F^{n}_s)^{1-\frac{1}{R_{\rho}}} \d s + \exp(-R_{\rho} H^{n}_t) F^{n}_t$.
				By \cref{existence:dynamics}, we have \[
						N^{n} = \E^{\Q^{n}}\left[ \int_{0}^{\infty} \exp(-R_{\rho} H^{n}_s) R_{\rho} (F^{n}_s)^{1-\frac{1}{R_{\rho}}} \d s \condbar \mathcal{F}^{n}_t \right] 
				,\] so $N^{n}$ is a uniformly bounded martingale (independently of $n$) by strong myopic well-posedness and uniform boundedness of $F^{n}$.

				Since $N^{n}$ is a martingale, the conditional variation\footnote{The conditional variation of a process $X$ is defined as $V(X)_t = \sup_{0=t_0<t_1<\dots<t_n=t} \sum_{i=0}^{n-1} \E\left[ \left| \E\left[ X_{t_{i+1}} - X_{t_i} \condbar \mathcal{F}_{t_i} \right] \right|\right] $, see \cite[Eq.~(4)]{Meyer1984}. If $X$ is a martingale, the conditional variation satisfies $V_t(X) = 0$ for all $t$.} of $\exp(-R_{\rho} H^{n}) F^{n}$ satisfies \[
						V_t(\exp(-R_{\rho} H^{n}) F^{n}) \le \E^{\Q^{n}}\left[\int_{0}^{t} \exp(-R_{\rho} H^{n}_s) R_{\rho} (F^{n}_s)^{1-\frac{1}{R_{\rho}}} \d s\right] \le R_{\rho} K^{1-\frac{1}{R_{\rho}}} t
				.\] 
				Hence, the sequence $(\exp(-R_{\rho} H^{n}) F^{n})_{n \in \N}$ satisfies the Meyer-Zheng condition \[
						\sup_n \{ V_t(\exp(-R_{\rho} H^{n}) F^{n}) + \sup_{s \le t} \E^{\Q^{n}}[|\exp(-R_{\rho} H^{n}_s) F^{n}_s|] \} < \infty \quad \text{ for all } t > 0
				,\] and so $(\exp(-R_{\rho} H^{n}) F^{n})_{n \in \N}$ is tight over $D$ (equipped with the Meyer-Zheng topology) by \cite[Thm.~4]{Meyer1984}.

				Moreover, the sequence $(H^{n}, \tilde{M}^{n}, U^n)$ is tight by Prokhorov's theorem since it converges weakly by assumption and \cite[Cor.~VI.6.29]{Jacod2003} and $C \times C \times C$ is Polish.
				Since the sequence $(\exp(R_{\rho} H^{n}))_{n \in \N}$ is tight over $C$ by the continuous mapping theorem and Prokhorov's theorem, the sequence $(F^{n})_{n \in \N}$ is also tight over $D$.
				Hence, the sequence $(H^{n}, \tilde{M}^{n}, U^n, F^{n})$ is tight over $C \times C \times C \times D$, and so there exists a subsequence (which we omit in notation) s.t.\ $(H^{n}, \tilde{M}^{n}, U^n, F^{n}) \to (H, \tilde{M}, U, F)$ weakly over $C \times C \times C \times D$.

				Define the operator \[
						\Phi: C^{\eta_0} \times D_K \to D, (X, Y) \mapsto \int_{0}^{t} \exp(-R_{\rho} X_s) R_{\rho} Y_s^{1-\frac{1}{R_{\rho}}} \d s + \exp(-R_{\rho} X_t) Y_t
				.\]
				Note that $\Phi$ is continuous:
				Fix $(X^{n}, Y^{n})_{n \in \N}$ with $(X^{n}, Y^{n}) \to (X, Y)$ in $C^{\eta_0} \times D_K$.
				For fixed $t$, the integral converges by dominated convergence.
				Hence, the left summand converges in measure, i.e.\ in Meyer-Zheng topology.
				Moreover, the right summand also clearly converges in measure.

				By the continuous mapping theorem, we thus have $(H^{n}, \tilde{M}^{n}, U_n, F^{n}, N^{n}) \to (H, \tilde{M}, U, F, N)$ weakly over $C \times C \times C \times D \times D$, where $N = \Phi(H, F)$.

                Next, note that \cite[Thm.~11,12]{Meyer1984} easily extend from respective canonical filtration to $\mathbb{G}:=\mathbb{F}^{H,\tilde{M},U,F}$ by also evaluating the test functions for the conditional expectation at the auxiliary processes.
                By \cite[Thm.~12]{Meyer1984}, $\tilde{M}$ is a $(\mathbb{G}, \mathbb{Q})$-local martingale with quadratic variation $U$.
                Hence, $\mathbb{F}^M$ is immersed in $\mathbb{G}$ by \cref{setting:assumption:M-PRP} and $\mathbb{G}$ satisfies all assumptions of \cref{section:setting} that are used in \cref{section:hjb}.
                
				Since the $N^{n}$ are uniformly bounded continuous martingales, \cite[Thm.~11]{Meyer1984} yields that $N$ is a $\mathbb{G}$-martingale.
				Since $N$ is uniformly bounded, it is uniformly integrable and hence closed.
				Taking the limit $t \to \infty$ in $\Phi(H, F)$ now yields that  \[
						N_t = \E^{\Q}\left[ \int_{0}^{\infty} \exp(-R_{\rho} H_s) R_{\rho} F_s^{1-\frac{1}{R_{\rho}}} \d s \condbar \mathcal{F}_t \right] 
				.\] 
				In particular, this means that $F$ is a solution to \eqref{hjb} over $\mathbb{G}$.
				Applying the same logic as above with $R_{\rho} = 1$ yields that $G^{n} \to G$ weakly over $D$.
				Since $G^{n} \le F^{n} \le K$, we also have $G \le F \le K$, and so $F$ is indeed the unique solution to \eqref{hjb} over $\mathbb{G}$ with $G \le F \le K$ by \cref{existence:uniformly_wellposed} applied for the filtration $\mathbb{G}$ (note that since $\mathbb{F}^M$ is immersed in both $\mathbb{F}$ and $\mathbb{G}$, it does not matter whether one conditions on $\mathbb{F}$ or $\mathbb{G}$ in the definition of $G$).
				In particular, this means $F$ is $\mathbb{F}^{M}$-adapted.
                Hence, $N$ is also $\mathbb{F}^M$-adapted, and thus an $\mathbb{F}^M$-martingale.
                Since $\mathbb{F}^M$ is immersed in $\mathbb{F}$ by \cref{setting:assumption:M-PRP}, $N$ is an $\mathbb{F}$-martingale, and thus $F$ is the unique solution to \eqref{hjb} with $G \le F \le K$ over $\mathbb{F}$.

				Since $F$ is $\mathbb{F}^{M}$-adapted, $N$ is continuous by \cref{existence:dynamics}.
				Hence, we have $N^{n} \to N$ weakly over $C$ by \cite{Pratelli1999}.
				Moreover, since the total variation $\operatorname{Var}(\int \exp(-R_{\rho} H^{n}_s) R_{\rho} (F^{n}_s)^{1-\frac{1}{R_{\rho}}} \d s)$ is strongly majorised by the $C$-tight sequence $(R_{\rho} K^{1-\frac{1}{R_{\rho}}} t)_n$, the sequence $(\int \exp(-R_{\rho} H^{n}_s) R_{\rho} (F^{n}_s)^{1-\frac{1}{R_{\rho}}})_n$ is $C$-tight by \cite[Prop.~VI.3.35,VI.3.36]{Jacod2003}.
				Recall that $\exp(-R_{\rho} H^{n}) F^{n} = - \int \exp(-R_{\rho} H^{n}_s) R_{\rho} (F^{n}_s)^{1-\frac{1}{R_{\rho}}} \d s + N^{n}$, so $(\exp(-R_{\rho} H^{n}) F^{n})_n$ is $C$-tight.
				Since $\exp(R_{\rho} H^{n})$ is $C$-tight, this means that $(F^{n})$ is also $C$-tight. 
				Hence, we have $(\tilde{M}^{n}, F^{n}) \to (\tilde{M}, F)$ weakly over $C \times C$.

				Finally, notice that $ \int Z^{n}_s \d \langle \tilde{M}^{n} \rangle_s = \langle F^{n}, \tilde{M}^{n} \rangle $.
				By \cref{existence:dynamics}, $F^{n}$ decomposes as $F^{n} = A^{n} + L^{n}$, where \[
						A^{n}_t = -\int_{0}^{t} R_{\rho} (F^{n}_s)^{1-\frac{1}{R_{\rho}}} \d s + \int_{0}^{t} R_{\rho} F^{n}_s \d H^{n}_s
				\] and $L^{n}$ is a continuous local martingale.
				For all $t>0$, we have \begin{equation}
						\label{stability:variation_bound}
						\operatorname{Var}(A^{n})_t \le \int_{0}^{t} R_{\rho} (F^{n}_s)^{1-\frac{1}{R_{\rho}}} \d s + R_{\rho} F^{n}_s \d H^{n}_s \le R_{\rho} K^{1-\frac{1}{R_{\rho}}} t + R_{\rho} K H^{n}_t
				,\end{equation} so $\sup_n \E^{\Q^{n}}[\operatorname{Var}(A^{n})_t] < \infty $ as $\sup \E^{\Q^{n}}[H^{n}_t] < \infty$ by assumption.
				Hence, the sequence $(\operatorname{Var}(A^{n})_t)_n$ is tight, and so we obtain that the sequence $(A^{n})_n$ is P-UT\footnote{Predictably Uniformly Tight, see \cite[Def.~VI.6.1]{Jacod2003}} by \cite[\nopp VI.6.6]{Jacod2003}.
				Moreover, \eqref{stability:variation_bound} yields that $\operatorname{Var}(A^{n})$ is strongly majorised by the tight sequence $(R_{\rho} K^{1-\frac{1}{R_{\rho}}} t + R K H^{n})_n$, so $(A^{n})_n$ is tight by \cite[Prop.~VI.3.35,VI.3.36]{Jacod2003}.
				Hence, the sequence $(L^{n})_n$ is also tight, and thus P-UT since any tight sequence of continuous local martingales is P-UT by \cite[Thm.~VI.6.21]{Jacod2003}.
				Together, $(F^{n})_n$ is P-UT by \cite[\nopp VI.6.4]{Jacod2003}.
				Moreover, $(\tilde{M}^{n})_n$ is P-UT by \cite[Thm.~VI.6.21]{Jacod2003} as it is a weakly convergent sequence of continuous local martingales.
				Hence, the joint sequence $(F^{n}, \tilde{M}^{n})_n$ is also P-UT by \cite[\nopp VI.6.3]{Jacod2003}.
				Since $(F^{n}, \tilde{M}^{n}) \to (F, \tilde{M})$ weakly in $C \times C$, \cite[Thm.~VI.6.26]{Jacod2003} now yields that $(F^{n}, \int Z^{n}_s \d \langle \tilde{M}^{n} \rangle_s) \to (F, \int Z_s \d U_s)$ weakly over $C \times C$.
		\end{proof}

        \subsection{Sign of the Control Process}

        In a Brownian setting, if the Malliavin derivative of the myopic consumption rate is non-negative, it is possible to identify the sign of the control process.
        Under these conditions, the operator $T$ preserves a non-positive Malliavin derivative of its input.
        Thus, starting the fixed point iteration from a constant process allows one to establish that the $Z$-component of the solution to the IVC-BSDE is non-positive.
        For instance, this will be the case in (regularised versions of) the Volterra Heston model considered in \cref{section:heston}.
        In the following, we denote by $\mathbb{L}^{1,2}_{\mathrm{loc}}$ the space of Malliavin differentiable processes $a$ that satisfy $\E[\int_0^T |a_t| \d t + \int_0^T \int_u^T |D_u a_t|^2 \d t \d u] < \infty$ for all $T > 0$.

		\begin{theorem}
                \label{properties:volatility:non-positive}
				Assume that $R > 1$.
                Let $W^\Q$ be a $\Q$-Brownian motion s.t.\ $\mathbb{F}^{W^\Q}$ is immersed in $\mathbb{F}$.
				Suppose that $d\tilde{M}_t = \sigma_t dW^{\Q}_t$ for some $\mathbb{F}^{W^\Q}$-progressively measurable process $\sigma \ge 0$, and that $dH_t = \eta_t dt $ for some $\mathbb{F}^{W^\Q}$-adapted process $\eta$ with $0 < \eta_0 \le \eta \le \eta_1$, where $\eta_0, \eta_1$ are positive constants.
                Let $F \in \mathcal{D}(T)$ be the solution to \cref{hjb} from \cref{existence:eta_bounded} with control process $Z^{F}$.
				If $\eta \in \mathbb{L}^{1,2}_{\mathrm{loc}}$ with $D_u \eta_t \ge 0$ for all $u \le t$, then $Z^F \le 0$.
		\end{theorem}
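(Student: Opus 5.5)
We propose to run the fixed point iteration of \cref{existence:fixed_point_iteration} started from a constant and show that a non-positive Malliavin derivative is preserved by $T$. Since $\eta_0 \le \eta \le \eta_1$, the proof of \cref{existence:eta_bounded} gives constants $k \le K$ with $G^1 = k^{R_\rho}$ a subsolution and $G^2 = K^{R_\rho}$ a supersolution. The iterates $\bar F^{n+1} = T\bar F^n$, $\bar F^1 = K^{R_\rho}$, decrease to the unique solution $F$ and satisfy $k^{R_\rho} \le \bar F^n \le K^{R_\rho}$. We work under $\Q$ on the Brownian filtration $\mathbb{F}^{W^\Q}$. Since $\mathbb{F}^{W^\Q}$ is immersed in $\mathbb{F}$ and $\eta$ is $\mathbb{F}^{W^\Q}$-adapted, all conditional expectations may be taken w.r.t.\ $\mathbb{F}^{W^\Q}$, and Malliavin calculus w.r.t.\ $W^\Q$ is available.

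\emph{Inductive step.} Assume $\bar F^n \in \mathbb{L}^{1,2}_{\mathrm{loc}}$, bounded between the two constants, with $D_u \bar F^n_s \le 0$ for $u \le s$. Write
\[
(T\bar F^n)_t = \E^\Q\Big[\int_t^\infty e^{-R_\rho\int_t^s \eta_v \d v} R_\rho (\bar F^n_s)^{1-\frac1{R_\rho}} \d s \condbar \mathcal{F}_t\Big].
\]
By the chain rule, the Malliavin derivative $D_u$ of the integrand is
\[
e^{-R_\rho\int_t^s\eta_v \d v}\Big(-R_\rho^2 (\bar F^n_s)^{1-\frac1{R_\rho}}\int_t^s D_u\eta_v \d v + (R_\rho-1)(\bar F^n_s)^{-\frac1{R_\rho}} D_u\bar F^n_s\Big).
\]
Both terms are $\le 0$, using $R_\rho \ge 1$, $D_u\eta \ge 0$ and the induction hypothesis. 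Differentiating under the infinite-horizon integral is justified by the exponential decay $e^{-R_\rho\eta_0(s-t)}$ and the uniform bounds on $\bar F^n$. By the commutation rule $D_u\E[X\mid\mathcal{F}_t] = \E[D_uX\mid\mathcal{F}_t]\one_{u\le t}$, we get $D_u(T\bar F^n)_t \le 0$.

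\emph{Link to the control process.} Let $N^n$ be the martingale part of $T\bar F^n$, which by \cref{existence:dynamics} is
\[
N^n_t = \E^\Q\Big[\int_0^\infty e^{-R_\rho\int_0^s\eta} R_\rho (\bar F^n_s)^{1-\frac1{R_\rho}} \d s \condbar \mathcal{F}_t\Big].
\]
The Clark--Ocone formula gives $dN^n_t = \E^\Q[D_t(\cdot)\mid\mathcal{F}_t] \d W^\Q_t$, and the same computation shows this integrand is non-positive. Comparing with $dN^n_t = e^{-R_\rho H_t} Z^{n+1}_t \sigma_t \d W^\Q_t$ and using $\sigma \ge 0$ yields $Z^{n+1}\sigma \le 0$, hence $Z^{n+1} \le 0$ wherever $\sigma > 0$. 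This is the only place it matters, since $Z$ enters only through $\d \tilde M$. Equivalently, $Z^{n+1}_t\sigma_t = D_t (T\bar F^n)_t$, the standard identification of the BSDE control with the Malliavin trace.

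\emph{Passage to the limit.} This is the main obstacle, together with keeping the iterates in $\mathbb{L}^{1,2}_{\mathrm{loc}}$ with uniform bounds. For the latter, the displayed derivative is bounded by
\[
C\int_t^s |D_u\eta_v| \d v + C|D_u\bar F^n_s|,
\]
and the coefficient $(R_\rho-1)/R_\rho < 1$ together with the exponential discount should give a uniform-in-$n$ $L^2$ bound via a Gronwall-type recursion. For the limit, $\bar F^n \downarrow F$ boundedly, so $\bar F^n \to F$ in $L^2$. Each iterate has $D_u\bar F^n \le 0$ with $\sup_n \|D\bar F^n\|_{L^2} < \infty$. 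By closability of $D$, weak $L^2$ limits preserve Malliavin differentiability and the sign, so $D_u F_t \le 0$. Rather than rely on Malliavin differentiability of $F$ alone, it is more robust to pass to the limit directly in the stochastic integrands. The martingales $N^n$ converge to $N$ in $L^2(\Q)$ by bounded convergence. Hence the Itô isometry gives $e^{-R_\rho H}Z^{n}\sigma \to e^{-R_\rho H}Z^F\sigma$ in $L^2(\d\Q\otimes \d t)$. The set of non-positive integrands is closed in $L^2$, so $Z^F\sigma \le 0$, i.e.\ $Z^F \le 0$ (on $\{\sigma>0\}$, and $Z^F$ may be set to zero elsewhere).
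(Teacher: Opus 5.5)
Your scheme (iterate $T$ from a constant, show that $T$ preserves $D_u\bar F^n\le 0$, read off the sign of the $W^{\Q}$-integrand via Clark--Ocone, then pass to the limit) is the paper's. However, you run the Malliavin step directly on the infinite horizon, and there the hypotheses do not justify it. The assumption $\eta\in\mathbb{L}^{1,2}_{\mathrm{loc}}$ controls $\E^{\Q}[\int_0^T\int_u^T|D_u\eta_t|^2\d t\d u]$ only for each finite $T$, with no rate in $T$.

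By Fubini, the first term of $D_uN^n_\infty$ equals $-R_\rho^2\int_u^\infty D_u\eta_v\bigl(\int_v^\infty e^{-R_\rho H_s}(\bar F^n_s)^{1-\frac{1}{R_\rho}}\d s\bigr)\d v$. The second term has the same sign, and $\bar F^n\ge k^{R_\rho}$, $\eta\le\eta_1$. Together these give
\[
|D_uN^n_\infty|\ge c\int_u^\infty e^{-R_\rho\eta_1 v}D_u\eta_v\d v
\]
for some constant $c>0$. Nothing in $\mathbb{L}^{1,2}_{\mathrm{loc}}$ prevents $\E^{\Q}|D_u\eta_v|^2$ from growing faster than any exponential in $v$, even with $\eta_0\le\eta\le\eta_1$ and $D\eta\ge 0$. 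For example, take $\eta_v=\eta_0+(\eta_1-\eta_0)\Phi(a(v)W^{\Q}_1)$ for $v\ge 1$, with $\Phi$ the standard normal distribution function and $a$ increasing very fast. Then the right-hand side need not be square integrable.

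So there is no reason for $N^n_\infty$ or $\bar F^{n+1}_t$ to lie in $\mathbb{D}^{1,2}$. Neither Clark--Ocone nor the commutation rule $D_u\E^{\Q}[X\mid\mathcal{F}_t]=\E^{\Q}[D_uX\mid\mathcal{F}_t]\mathbbm{1}_{\{u\le t\}}$ is then available. The exponential discount you invoke only beats exponential growth. The same objection applies to the uniform-in-$n$ Gronwall bound you sketch.

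What is missing is the time truncation used in the paper:
\begin{enumerate}[(i)]
\item Replace \eqref{hjb} by the equation on $[0,S]$ with terminal value $\eta_0^{-R_\rho}$, which is still a supersolution since $\eta\ge\eta_0.$
\item Iterate from that constant and do your Malliavin/Clark--Ocone step there. Only $D_u\eta_v$ with $v\le S$ enters, so $\mathbb{L}^{1,2}_{\mathrm{loc}}$ supplies the $L^2$ bounds.
\item Let $n\to\infty$, then $S\to\infty$. Use comparison to get $F^S\downarrow F$, and the uniqueness in Theorem~\ref{existence:eta_bounded} to identify the limit.
\end{enumerate}

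With that added, your limit passage via the Itô isometry is correct. It is actually simpler than the paper's argument through P-UT sequences and convergence of covariations (Jacod--Shiryaev). The martingales $N^{S,n}$, stopped at $S$, are uniformly bounded, and their terminal values converge boundedly both as $n\to\infty$ and as $S\to\infty$. In the latter limit the terminal contribution $e^{-R_\rho H_S}\eta_0^{-R_\rho}\le e^{-R_\rho\eta_0 S}\eta_0^{-R_\rho}$ vanishes. Hence their $W^{\Q}$-integrands converge in $L^2(\mathrm{d}\Q\otimes\mathrm{d}t)$, and non-positivity passes to $e^{-R_\rho H}Z^F\sigma$. This gives $Z^F\le 0$ $\mathrm{d}\Q\otimes\mathrm{d}\langle\tilde M\rangle$-a.e.
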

		\begin{proof}
				First, consider the time-truncated equation \[
						F^{S}_t = \E^{\Q}\left[ \int_{t}^{S} \exp(-R_\rho (H_s - H_t)) R_\rho (F^{S}_s)^{1-\frac{1}{R_\rho}} \d s + \exp(-R_\rho (H_S - H_t)) \eta_0^{-R_\rho} \condbar \mathcal{F}_t \right] =: (T^{S} F^{S})_t
				.\] 
				Since $\eta \ge \eta_0$, we have $T^{S} \eta_0^{-R_\rho} \le \eta_0^{-R\rho}$, i.e.\ $\eta_0^{-R_\rho}$ remains a supersolution to the time-truncated equation.
				Moreover, $\eta_1^{-R_\rho}$ is a subsolution.
				Define the sequence $F^{S,n}$ by the fixed point iteration $F^{S,1} = \eta_0^{-R\rho}$, $F^{S,n+1} = T^{S} F^{S,n}$.
				As in \cref{existence:fixed_point_iteration}, one sees that $F^{S,n} \downarrow F^{S}$, where $F^{S}$ is some solution to the time-truncated equation that satisfies $\eta_1^{-R_\rho} \le F^{S} \le \eta^{-R_{\rho}}$.

				We will show by induction that $F^{S,n} \in \mathbb{L}^{1,2}_{\mathrm{loc}}$ and $D_u F^{S,n}_t \le 0$ for all $u \le t, n \in \N$.
				The base case $n = 1$ holds since $F^{S,1}$ is constant.
				Now, fix $n \in \N$ and assume that $F^{S,n}$ is Malliavin differentiable with $F^{S,n} \in \mathbb{L}^{1,2}_{\mathrm{loc}}$ and $D_u F^{S,n}_t \le 0$ for $u \le t$.
				For $u \le t$, we have \[
						D_u F^{S,n+1} = \E^{\Q}\left[ \int_{t}^{S} D_u \left( \exp(-R_\rho (H_s - H_t)) R_\rho (F^{S,n}_s)^{1-\frac{1}{R_\rho}} \right) \d s + D_u \exp(-R_\rho (H_S - H_t)) \eta^{-R_\rho} \condbar \mathcal{F}_t \right] 
				\] by \cite[Prop.~1.2.8]{Nualart2006} and \cite[Prop.~3.4.3,Ex.~3.4.4]{Nualart2018} since $F^{S,n} \in \mathbb{L}^{1,2}_{\mathrm{loc}}$.
				Using the product and chain rule (see \cite[Prop.~1.2.3]{Nualart2006}, note that the chain rule is applicable since $F^{S,n}$ is bounded and bounded away from zero) and that $D_u (H_s - H_t) \ge 0$ since $D_u \eta_s \ge 0$ by \cite[Prop.~3.4.3,Ex.~3.4.4]{Nualart2018} as $\eta \in \mathbb{L}^{1,2}_{\mathrm{loc}}$ for all $T > 0$, we obtain \begin{align*}
						D_u & \left( \exp( -R_\rho (H_s - H_t)) R_\rho (F^{S,n}_s)^{1-\frac{1}{R_\rho}} \right) = \\ &=
						\exp( -R_\rho (H_s - H_t) ) ( -R_\rho D_u (H_s - H_t) ) R_\rho (F^{S,n}_s)^{1-\frac{1}{R_\rho}} \\ &\qquad + \exp( -R_\rho (H_s - H_t) ) R_\rho \left( 1 - \frac{1}{R_\rho} \right) (F^{S,n}_s)^{-\frac{1}{R_\rho}} D_u F^{S,n}_s \\ & \le 
				        0
				\end{align*} and \[
						D_u \exp(-R_\rho (H_S - H_t)) \eta^{-R_\rho} = -R_\rho D_u (H_S - H_t) \exp(-R_\rho (H_S - H_t)) \eta^{-R_\rho} \le 0
				,\] which implies that $D_u F^{S,n+1}_t \le 0$.
                Moreover, since the model is strongly myopically well-posed and $F^{S,n}, F^{S,n+1}$ are bounded and bounded away from zero, we have \[
                        |D_u F^{S,n+1}_t| \le C \E^\Q\left[ \int_t^S \left( \int_t^s |D_u \eta_v| \d v + |D_u F^{S,n}_s| \right) \d s + \int_t^S |D_u \eta_v| \d v \condbar \mathcal{F}_t \right]
                \] for some constant $C > 0$ (whose value will change from line to line).
                By Fubini-Tonelli, we have $\int_t^S \int_t^s |D_u \eta_v| \d v \d s = \int_t^S (S-v) |D_u \eta_v| \d v \le S \int_t^S |D_u \eta_v| \d v $, so that \[
                        |D_u F^{S,n+1}_t|^2 \le C \E^\Q\left[ \int_t^S \left( |D_u \eta_s|^2 + |D_u F^{S,n}_s|^2 \right) \d s \condbar \mathcal{F}_t \right]
                .\]
                Finally, we have \[
                        \E^Q\left[ \int_0^T \int_u^T |D_u F^{S,n+1}_t|^2 \d t \d u \right] \le C \E^\Q\left[ \int_0^T \int_u^T (|D_u \eta_t|^2 + |D_u F^{S,n}_t|^2) \d t \d u \right] < \infty
                ,\] by an analogous Fubini-Tonelli application as above, so that $F^{S,n+1} \in \mathbb{L}^{1,2}_{\mathrm{loc}}$ since $\eta, F^{S,n} \in \mathbb{L}^{1,2}_{\mathrm{loc}}$.

                Now, set $N^{S,n} = \int_0^t \exp(-R_\rho H_s) R_\rho (F^{S,n-1}_s)^{1-\frac{1}{R_\rho}} \d s + \exp(-R_\rho H_t) F^{S,n}_t$.
                As in \cref{existence:dynamics}, one sees that $N^{S,n}$ is a martingale.
                Moreover, using that $D_u H_t \ge 0$ and $D_u F^{S,n}_t \le 0$, one sees that $D_u N^{S,n}_S \le 0$. 
                Hence, the Clarke-Ocone formula (see e.g. \cite[1.3.14]{Nualart2006}) yields that $\exp(-R_\rho H_u) Z^{S,n}_u \sigma_u = \E^{\Q}\left[ D_u N^{S,n}_S \condbar \mathcal{F}_u \right] \le 0 $.
				Since $\exp(-R_\rho H) Z^{S,n} \sigma \le 0$ and $\exp(-R_\rho H), \sigma \ge 0$, the quadratic co-variation $\langle F^{S,n}, \tilde{M} \rangle = \int_{0}^{\cdot} Z^{S,n}_s \d \langle \tilde{M} \rangle_s$ is non-increasing.
				It remains to show that this carries over to $F^{S}$ and $F$.

				We can decompose $dF^{S,n} = dA^{S,n} + dL^{S,n}$, where $A^{S,n}_t = \int_{0}^{t} (-R_\rho (F^{S,n-1}_s)^{1-\frac{1}{R_\rho}} + R_\rho \eta_s F^{S,n}_s) \d s$ and $L^{S,n}$ is a local martingale.
				Since \[
						\operatorname{Var}(A^{S,n}) \le (R_\rho \eta_0^{1-R_\rho} + R_\rho \eta_1 \eta_0^{-R_\rho}) t
				,\] the sequence $(A^{S,n})_{n \in \N}$ is $C$-tight by \cite[Prop.~VI.3.35,VI.3.36]{Jacod2003} and P-UT by \cite[VI.6.6]{Jacod2003}.
				Moreover, since $F^{S,n} \downarrow F^{S}$, the sequence $(F^{S,n})_{n \in \N}$ is $C$-tight by Dini's theorem. 
				Hence, $(L^{S,n})_{n \in \N}$ is $C$-tight, and thus P-UT by \cite[Thm.~VI.6.21]{Jacod2003}.
				In total, the sequence $(F^{S, n})_{n \in \N}$ is P-UT, so that $\int_{0}^{\cdot} Z^{S,n}_s \d \langle \tilde{M} \rangle_s \to \int_{0}^{\cdot} Z^{S}_s \d \langle \tilde{M} \rangle_s$ by \cite[Thm.~VI.6.26]{Jacod2003} weakly over $C$.
				In particular, this means that $\int_{0}^{\cdot} Z^{S}_s \d \langle \tilde{M} \rangle_s $ is non-increasing.

				Finally, notice that since $\eta \ge \eta_0$, $F^{S'}$ satisfies $T^{S} F^{S'} \ge F^{S'}$ for all $S < S'$.
				By the classical finite-horizon BSDE comparison theorem with Lipschitz driver, this means that $F^{S} \ge F^{S'}$, i.e.\ the sequence $F^{S}$ is decreasing in $S$.
				Thus, we have $F^{S} \downarrow F$ for some process $F$ since $F^{S} \ge \eta_1^{-\tilde{R}}$ for all $S$.
				Since $\eta \ge \eta_0$, taking the limit in $F^{S} = T^{S} F^{S}$ via dominated convergence yields that $F = TF$.
				By uniqueness, $F$ is indeed the solution from \cref{existence:eta_bounded}.
				It follows analogously as in the paragraph above that $\int_{0}^{\cdot} Z^{S}_s \d \langle \tilde{M} \rangle_s \to \int_{0}^{\cdot} Z^F_s \d \langle \tilde{M} \rangle_s$ weakly over $C$.
				Hence, $\int_{0}^{\cdot} Z^F_s \d \langle \tilde{M} \rangle_s$ is non-increasing, and thus $Z^F \le 0$ $\Q \times d\langle \tilde{M} \rangle$-a.e.
		\end{proof}
        
		\section{Verification Theorem}
		\label{section:verification}

		In this section, we prove a general verification theorem that shows that the candidate solution to \eqref{eq:IVCBSDE} indeed parametrises the value process and optimal controls.

		\begin{theorem}
				\label{verification:verification}
				Let $F$ be a positive $\mathbb F^M$-adapted solution to \eqref{eq:IVCBSDE} with control process $Z^F$.
				Assume that the local martingale \[
						L_t = \mathcal{E}\left( \left(\frac{1-R}{R} \rho \Lambda + \frac{Z^F}{F}\right) \cdot M \right)_t \mathcal{E}\left( \frac{1-R}{R} \sqrt{1-\rho^2} \left( \Lambda + \rho \frac{R}{R_{\rho}} \frac{Z^F}{F} \right) \cdot M^{\perp} \right)_t
				\] is an $(\mathbb{F}, \mathbb{P})$-martingale.
				Moreover, assume that $\int_{0}^{\infty} F^{-\frac{1}{R_{\rho}}}_t \d t = \infty$ $\hat{\Q}$-a.s., where $\hat{\Q}$ is the measure defined by $\frac{d \hat{\Q}}{d \mathbb{P}} \Bigr|_{\mathcal{F}_t} = L_t$ (which exists by the extension property of $\mathbb{F}$).
				Then the value process and optimal controls are given by
                \begin{equation}
				        \label{verification:optimal_controls}
						V_t = \frac{X_t^{1-R}}{1-R} F^{\frac{R}{R_{\rho}}}_t, \qquad \hat{\Xi}_t = F^{-\frac{1}{R_{\rho}}}_t, \qquad \hat{\Pi}_t = \frac{1}{R} \Lambda_t + \frac{1}{R_{\rho}} \rho \frac{Z^F_t}{F_t}
				\end{equation}
		\end{theorem}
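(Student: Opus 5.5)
The plan is a duality (martingale-deflator) argument. I will build a candidate state-price deflator from the marginal utility of the candidate consumption, show that it is a supermartingale deflator for every admissible strategy, and show that it is a true martingale deflator along the candidate strategy. The martingale property of $L$ and the divergence condition $\int_0^\infty F^{-1/R_\rho}_t\,dt=\infty$ under $\hat\Q$ are used only for the second part.

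\textbf{Step 1 (candidate and deflator).} Fix $X_0=x>0$ and let $\hat X$ be the wealth generated by $(\hat\Pi,\hat\Xi)$ from \eqref{verification:optimal_controls}. This strategy is admissible: $F$ is continuous and positive, so $F^{-1/R_\rho}$ is locally integrable, and $\int (Z^F/F)^2\,dU<\infty$ because $\int (Z^F)^2\,dU<\infty$ for the control process. Set $\hat c=\hat\Xi\hat X$ and define the candidate deflator as discounted marginal utility,
\[
Y_t=\exp\Bigl(-\int_0^t\delta_u\,du\Bigr)\hat c_t^{-R}=\exp\Bigl(-\int_0^t\delta_u\,du\Bigr)\hat X_t^{-R}F_t^{\frac{R}{R_\rho}} .
\]
I will apply Itô's formula to $Y$. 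The inputs are the BSDE dynamics \eqref{eq:IVCBSDE}, written under $\mathbb P$ via $d\tilde M=dM+\frac{R-1}{R}\rho\Lambda\,dU$, the definition \eqref{eq:frozenH} of $H$, and the wealth dynamics \eqref{setting:wealth_dynamics}. The aim is to verify two identities.

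(a) $Y$ has drift $-r_t\,dt$, and its martingale part makes $Y S^0$ and $Y S$ local martingales. Concretely, the $dU$-drift of $dS/S$ must be cancelled by the covariation with $Y$: the $M$-loading $\theta$ and $M^\perp$-loading $\theta^\perp$ of $dY/Y$ must satisfy $\Lambda+\rho\theta+\sqrt{1-\rho^2}\theta^\perp=0$.

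(b) Along the candidate,
\[
Y_t\hat X_t=xy\,L_t\exp\Bigl(-\int_0^t F_s^{-\frac1{R_\rho}}\,ds\Bigr),\qquad y:=x^{-R}F_0^{\frac{R}{R_\rho}} .
\]

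This computation is where the specific choices are pinned down. The exponent $R/R_\rho$, the factor $\rho/R_\rho$ in $\hat\Pi$, and the $\frac{R-1}{2R^2}\Lambda^2\,dU$ term in $H$ must conspire so that all $dU$-drift terms cancel. The nonlinear term $-R_\rho F^{1-1/R_\rho}\,dt$ must combine with the consumption $-\hat\Xi\,dt$ to leave exactly $-F^{-1/R_\rho}\,dt$. I expect this bookkeeping to be the main obstacle. I will organise it by computing $d\log(Y\hat X)$ and matching the $dt$, $dU$, $dM$ and $dM^\perp$ coefficients against $d\log L$ separately.

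\textbf{Step 2 (supermartingale property for arbitrary strategies).} For any admissible $(\Pi,\Xi)$ with $c=\Xi X^{\Pi,\Xi}$, identity (a) and Itô's product rule show that $Y_tX^{\Pi,\Xi}_t+\int_0^tY_sc_s\,ds$ is a non-negative local martingale, hence a supermartingale. Therefore
\[
\E\Bigl[\int_0^\infty Y_sc_s\,ds\Bigr]\le xy .
\]

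\textbf{Step 3 (budget equality for the candidate).} By (b) and the martingale property of $L$,
\[
\E\bigl[Y_T\hat X_T\bigr]=xy\,\E^{\hat\Q}\Bigl[\exp\Bigl(-\int_0^TF_s^{-\frac1{R_\rho}}\,ds\Bigr)\Bigr].
\]
The right-hand side tends to $0$ by dominated convergence, since $\int_0^\infty F^{-1/R_\rho}\,ds=\infty$ $\hat\Q$-a.s. By Step 2 applied to the candidate, $Y\hat X+\int_0^\cdot Y\hat c\,ds$ is a local martingale, so $\E\bigl[\int_0^TY\hat c\,ds\bigr]=xy-\E[Y_T\hat X_T]$, which is also the form in which (b) can be double-checked: on $[0,T]$,
\[
d\Bigl(Y_t\hat X_t+\int_0^tY_s\hat c_s\,ds\Bigr)=xy\,\exp\Bigl(-\int_0^tF_s^{-\frac1{R_\rho}}\,ds\Bigr)\,dL_t .
\]
Letting $T\to\infty$ with monotone convergence gives $\E\bigl[\int_0^\infty Y_s\hat c_s\,ds\bigr]=xy$.

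\textbf{Step 4 (optimality and value).} Write $u(c)=c^{1-R}/(1-R)$. Concavity gives $u(c)\le u(\hat c)+\hat c^{-R}(c-\hat c)$. Multiplying by $\exp(-\int_0^t\delta)$, integrating, taking expectations, and using Steps 2 and 3 yields
\[
\E\Bigl[\int_0^\infty e^{-\int_0^t\delta}u(c_t)\,dt\Bigr]\le\E\Bigl[\int_0^\infty e^{-\int_0^t\delta}u(\hat c_t)\,dt\Bigr].
\]
This is legitimate because the right-hand side is finite: $e^{-\int_0^t\delta}u(\hat c_t)=Y_t\hat c_t/(1-R)$. Consequently
\[
\E\Bigl[\int_0^\infty e^{-\int_0^t\delta}u(\hat c_t)\,dt\Bigr]=\frac{xy}{1-R}=\frac{x^{1-R}}{1-R}F_0^{\frac{R}{R_\rho}} .
\]
This proves optimality and the formula for $V_0$. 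For the value process at time $t$, I will repeat Steps 2 to 4 conditionally on $\mathcal F_t$, restarting from $\hat X_t$ and $F_t$. This uses the martingale property of $L$ on $[t,\infty)$ and conditional dominated convergence under $\hat\Q$, and gives $V_t=\frac{X_t^{1-R}}{1-R}F_t^{R/R_\rho}$.
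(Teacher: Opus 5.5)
Your argument is correct. It is the paper's duality argument, but you close it with a different inequality. Your deflator is the paper's state-price density at its dual choice $\hat\varphi=\frac{R}{R_\rho}\frac{Z^F}{F}$. Indeed, the It\^o computation in (a) gives $Y=x^{-R}F_0^{R/R_\rho}\,\zeta^{\hat\varphi}$, whose loadings $\bar\rho^2\hat\varphi-\rho\Lambda$ and $-\bar\rho(\Lambda+\rho\hat\varphi)$ (with $\bar\rho=\sqrt{1-\rho^2}$) satisfy your condition $\Lambda+\rho\theta+\bar\rho\theta^\perp=0$.

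The paper takes its upper bound from the H\"older-type estimate \cite[Lem.~A.1]{Guasoni2020}, valid for every $\varphi$. It then computes the candidate's utility (its Step 2) and the dual value at $\hat\varphi$ (its Step 3) separately and observes that they match. You instead use $u(c)\le u(\hat c)+\hat c^{-R}(c-\hat c)$ with the single deflator $Y$. Since $Y_t\hat c_t=\exp(-\int_0^t\delta_u\,du)\,\hat c_t^{\,1-R}$, your budget equality is the paper's Step 2 computation up to the factor $1/(1-R)$. The algebra of the paper's Step 3 reappears in your verification of (a).

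Your version is self-contained and shows why $\hat\varphi$ is the right dual choice: the deflator is the discounted marginal utility of the candidate. The paper's version gives weak duality for all $\varphi$ at once. Its budget step is then immediate from the explicit form of $\zeta^\varphi$, and all the bookkeeping sits in checking equality at $\hat\varphi$.

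One step needs tightening. In your Step 3 you pass from ``$Y\hat X+\int_0^\cdot Y_s\hat c_s\,ds$ is a local martingale'' to $\E[\int_0^T Y_s\hat c_s\,ds]=xy-\E[Y_T\hat X_T]$. That requires a true martingale, and a bounded integrand against the true martingale $L$ does not give one in general. The detour through $\E[Y_T\hat X_T]\to 0$ is unnecessary. By (b), $Y_t\hat c_t=xy\,L_tF_t^{-1/R_\rho}\exp(-\int_0^tF_s^{-1/R_\rho}ds)$. Tonelli together with $\E[L_t\xi]=\E^{\hat\Q}[\xi]$ for $\mathcal F_t$-measurable $\xi\ge 0$ then gives directly
\[
\E\Bigl[\int_0^\infty Y_t\hat c_t\,dt\Bigr]=xy\,\E^{\hat\Q}\Bigl[1-\exp\Bigl(-\int_0^\infty F_s^{-1/R_\rho}\,ds\Bigr)\Bigr]=xy,
\]
exactly as in the paper. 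If you prefer to keep your route, the same computation shows that your nonnegative local martingale has constant expectation $xy$, hence is a true martingale.

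Finally, the paper's proof only treats $t=0$. Your conditional argument for $V_t$ is therefore an addition rather than a deviation.
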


		\begin{remark}
				\label{verification:verification:sufficient_criteria}
				The condition $\int_{0}^{\infty} F^{-\frac{1}{R_{\rho}}}_t \d t = \infty $ $\hat{\Q}$-a.s.\ is automatically satisfied in strongly myopically well-posed models since $F$ is bounded from above.

                Suppose that the Dambis–Dubins–Schwarz Brownian motion $W^\perp$ associated to $M^\perp$ is independent of $\mathbb{F}^M$.
                This condition is satisfied in both \cref{setting:typical_factor_model,setting:singular_clocks}.
				Then $L$ is a $(\mathbb{F}, \mathbb{P})$-martingale if and only if \[
						\mathcal{E}\left( \frac{Z^F}{F} \cdot \tilde{M} \right) 
				\] is a $(\mathbb{F}^M, \Q$)-martingale since the $\frac{1-R}{R} \Lambda$ term arises from the change of measure from $\Q$ to $\mathbb{P}$ and the integrand as well as the time-change in the stochastic integral w.r.t.\ $M^\perp = W^\perp_U$ are $\mathbb{F}^M$-adapted.
		\end{remark}

		\begin{remark}
				\label{verification:brownian_filtration}
                Assume that there exists a Brownian motion $B$ with $\mathbb{F}^M \subseteq \mathbb{F}^B \subset \mathbb{F}$ s.t.\ $\Lambda$ and $H$ are $\mathbb{F}^B$-adapted and $\mathbb{F}^B$ is immersed in $\mathbb{F}$.
                For instance, the latter condition is satisfied if the Dambis-Dubins-Schwarz Brownian motion $W^\perp$ associated to $M^\perp$ is independent from $\mathbb{F}^B$ and $\mathbb{F} = \mathbb{F}^B \vee \mathbb{F}^{M^\perp}$.
                In particular, all of these conditions are satisfied in the setting of \cref{setting:typical_factor_model} if $Y$ is a strong solution.
                
				Then \cref{setting:assumption:M-PRP,setting:assumption:M-adapted} are not necessary.
				They are only required to ensure that there is no orthogonal term in the Galtchouk-Kunita-Watanabe decomposition of $F$ in \cref{existence:dynamics}, which is automatically the case in the Brownian filtration.
				In this case, we have $dM_t = \sigma_t dB_t$ for some process $\sigma$, and the optimal portfolio reads \[
						\hat{\Pi}_t = \frac{1}{R} \Lambda_t + \frac{1}{R_{\rho}} \rho \frac{\tilde{Z}^F_t}{\sigma_t F_t}
				,\] where $\tilde{Z}^F$ is the control process of $F$ w.r.t.\ $B$.
		\end{remark}

        		\begin{example}
				\label{verification:example:bounded_eta}
				Assume that $R>1$ and that the cumulative myopic consumption is Lipschitz, i.e.\ there exists a constant $C > 0$ s.t.\ $|H_s - H_t| \le C (s-t)$.
				In the absolutely continuous setting of \cref{setting:typical_factor_model}, this corresponds to models in which the myopic consumption rate $\eta$ is bounded.
				Assume moreover that the process \[
						\E^{\Q}\left[\int_{t}^{\infty} \exp( -R_{\rho} (H_s - H_t) ) \d s \condbar \mathcal{F}_t \right] 
				\] is bounded (which is the case in particular if the model is strongly myopically well-posed) and that the assumptions of \cref{verification:verification:sufficient_criteria} are satisfied.
				By \cref{existence:eta_bounded}, there exists a unique solution $F$ to \eqref{hjb} with control process $Z^F$ that is bounded and bounded away from zero.
				Since $H$ is Lipschitz, we have $|H_t| \le Ct$ for all $t \ge 0$.
				Hence, $\mathcal{E}(\frac{Z^F}{F} \cdot \tilde{M})$ is a true $(\mathbb{F}^M, \Q)$-martingale by \cref{properties:local_boundedness_away_from_0:martingale} below (applied with $\tau = t$ for arbitrary $t \ge 0$).
				By \cref{verification:verification,verification:verification:sufficient_criteria}, the value process and optimal controls are thus given by \eqref{verification:optimal_controls}.
		\end{example}
        
		\begin{proof}[Proof of \cref{verification:verification}]
				The proof uses a duality argument similar to the one by \textcite[Thm~3.3]{Guasoni2020} in the setting where $Y$ is an Itô diffusion. 
				In the following we denote $ \bar{\rho} = \sqrt{1 - \rho^2} $.
				For any progressively measurable process $\varphi$, we define the state-price density \[
						\zeta^{\varphi}_t = \exp\left( -\int_{0}^{t} r_u \d u \right)  \mathcal{E}\left( \left(\bar{\rho}^2 \varphi - \rho \Lambda\right) \cdot M - \bar{\rho}(\Lambda + \rho \varphi) \cdot M^{\perp} \right)_t
				.\]  
				The proof proceeds in three steps: 
				\begin{enumerate}
						\item Budget Feasibility Condition: For any admissible control $(\Pi, \Xi)$ and any progressively measurable process $\varphi$, we have $ \E[\int_{0}^{\infty} \zeta^{\varphi}_s \Xi_s X^{\Pi, \Xi}_s \d s ] \le X_0 $.
						\item $ \E\left[\int_{0}^{\infty} \exp\left( - \int_{0}^{t} \delta_u \d u \right) \frac{\left( \hat{\Xi}_t X^{\hat{\Pi}, \hat{\Xi}}_t \right)^{1-R} }{1-R} \d t \right] = \frac{X_0^{1-R}}{1-R} F_0^{\frac{R}{R_{\rho}}} $.
						\item There exists some $\hat{\varphi}$ s.t.\ $ \E\left[\int_{0}^{\infty} \exp\left( -\frac{1}{R} \int_{0}^{t} \delta_u \d u \right) (\zeta^{\hat{\varphi}}_t)^{\frac{R-1}{R}} \d t \right] = F_0^{\frac{1}{R_{\rho}}} $.
				\end{enumerate}
				By \cite[Lem.~A.1]{Guasoni2020}, the budget feasibility condition implies the upper bound \[
						\E\left[\int_{0}^{\infty} \exp\left( - \int_{0}^{t} \delta_u \d u \right)  \frac{(\Xi_t X^{\Pi, \Xi}_t)^{1-R}}{1-R} \d t \right] \le \frac{X_0^{1-R}}{1-R} \E\left[\int_{0}^{\infty} \exp\left( -\frac{1}{R} \int_{0}^{t} \delta_u \d u\right) (\zeta^{\varphi}_t)^{\frac{R-1}{R}} \d t \right]^{R}
				\]
				for any $\Pi, \Xi, \varphi$.
				Together with Steps 2 and 3, this upper bound yields that for any admissible control $(\Pi, \Xi)$ \begin{align*}
						\E\Bigg[\int_{0}^{\infty} \exp\left( - \int_{0}^{t} \delta_u \d u \right) & \frac{( \Xi_t X^{\Pi, \Xi}_t ) ^{1-R}}{1-R} \d t \Bigg] \le \\ &\le
						\frac{X_0^{1-R}}{1-R} \E\left[\int_{0}^{\infty} \exp\left( -\frac{1}{R} \int_{0}^{t} \delta_u \d u \right)  (\zeta^{\hat{\varphi}})^{\frac{R-1}{R}} \d t \right]^{R} =\\&=
						\frac{X_0^{1-R}}{1-R} F_0^{\frac{R}{R_{\rho}}} = 
						\E\left[\int_{0}^{\infty} \exp\left( - \int_{0}^{t} \delta_u \d u  \right)  \frac{( \hat{\Xi}_t X^{\hat{\Pi}, \hat{\Xi}}_t )^{1-R}}{1-R} \d t \right]  
				,\end{align*} so $(\hat{\Pi}, \hat{\Xi})$ is indeed optimal.

                Finally, notice that $(\hat{\Pi}, \hat{\Xi})$ is admissible: 
                Firstly, we have $\int_0^t \hat{\Xi}_s \d s < \infty$ a.s.\ for all $t>0$ since $F$ is positive and has continuous sample paths.
                Moreover, one sees that $\frac{Z}{F} \cdot \tilde{M}$ is a well-defined local martingale by applying Itô's formula to $\log F$, so that $\int_0^t \frac{Z_s^2}{F_s^2} \d U_s < \infty$ a.s.\ for all $t > 0$.
                This means that \[
                        \int_0^t \hat{\Pi}_s^2 \d U_s \le \int_0^t \left( \frac{2}{R^2} \Lambda_s^2 + \frac{2}{R_\rho^2} \rho^2 \frac{Z_s^2}{F_s^2} \right) \d U_s < \infty \text{ a.s.}
                \] for all $t > 0$.
            
				\textbf{Step 1:}
				Fix $\Pi, \Xi, \varphi$. 
				Set $G_t = \zeta^{\varphi}_t X^{\Pi, \Xi}_t + \int_{0}^{t} \zeta^{\varphi}_s \Xi_s X_s \d s$.
				Then \[
						dG_t = \zeta^{\varphi}_t X^{\Pi, \Xi}_t ((\rho (\Pi_t - \Lambda_t) + \bar{\rho}^2 \varphi_t) dM_t + \bar{\rho} (\Pi_t - \Lambda_t - \rho \varphi_t) dM^{\perp}_t)
				,\] so $G$ is a local martingale.
				Since $G$ is non-negative, $G$ is a supermartingale, and so \[
						X_0 = G_0 \ge \liminf_{t\to \infty} \E[G_t] \ge \liminf_{t \to \infty} \E\left[\int_{0}^{t} \zeta^{\varphi}_s \Xi_s X^{\Pi, \Xi}_s \d s \right] \ge \E\left[\int_{0}^{\infty} \zeta^{\varphi}_s \Xi_s X^{\Pi, \Xi}_s \d s \right] 
				.\] 

				\textbf{Step 2:}
				Using the form of the candidate optimal controls and the dynamics of the wealth process \eqref{setting:wealth_dynamics}, the candidate optimal wealth process $X^{\hat{\Pi}, \hat{\Xi}}$ is given by \[
						X^{\hat{\Pi}, \hat{\Xi}}_t = X_0 \exp\left( \int_{0}^{t} \left( r_s - F_s^{-\frac{1}{R_{\rho}}} \right) \d s + \int_{0}^{t} \left( \frac{\Lambda_s^2}{R} + \frac{1}{R_{\rho}} \rho \Lambda_s \frac{Z^F_s}{F_s} \right) \d U_s \right) \mathcal{E}\left( \rho \hat{\Pi}_s \cdot M + \bar{\rho} \hat{\Pi}_s \cdot M^{\perp}_t \right)_t
				.\] 
				Hence, 
				\begin{align*}
						\exp\bigg( - & \int_{0}^{t} \delta_s \d s \bigg)  \frac{(\hat{\Xi}_t X^{\hat{\Pi}, \hat{\Xi}}_t)^{1-R}}{1-R} \\&= 
						\frac{X_0^{1-R}}{1-R} F_t^{-\frac{1-R}{R_{\rho}}} \exp\bigg( \int_{0}^{t} \left(- \delta_s + (1-R) \left( r_s - F_s^{-\frac{1}{R_{\rho}}} \right) \right) \d s \\&\qquad + (1-R) \int_{0}^{t} \left( \frac{\Lambda_s^2}{R} + \frac{1}{R_{\rho}} \rho \Lambda_s \frac{Z^F_s}{F_s} \right) \d U_s + \int_{0}^{t} \rho \left( \frac{1-R}{R} \Lambda_s + \frac{1-R}{R_{\rho}} \rho \frac{Z^F_s}{F_s} \right) \d M_s \\&\qquad + \int_{0}^{t} \bar{\rho} \left( \frac{1-R}{R} \Lambda_s + \frac{1-R}{R_{\rho}} \rho \frac{Z^F_s}{F_s} \right) \d M^{\perp}_s - \int_{0}^{t} \frac{1-R}{2} \left( \frac{1}{R} \Lambda_s + \frac{1}{R_{\rho}} \rho \frac{Z^F_s}{F_s} \right)^2 \d U_s \bigg) \\&= 
						\frac{X_0^{1-R}}{1-R} F_t^{\frac{R-1}{R_{\rho}}} \exp\bigg(-R H_t - (1-R) \int_{0}^{t} F_s^{-\frac{1}{R_{\rho}}} \d s \\&\qquad + \int_{0}^{t} \left( \frac{1-R}{2R} \Lambda_s^2 + \frac{1-R}{R_{\rho}} \rho \Lambda_s \frac{Z^F_s}{F_s} - \frac{1-R}{2} \left( \frac{1}{R} \Lambda_s + \frac{1}{R_{\rho}} \rho \frac{Z^F_s}{F_s} \right)^2 \right) \d U_s \\&\qquad \int_{0}^{t} \rho \left( \frac{1-R}{R} \Lambda_s + \frac{1-R}{R_{\rho}} \rho \frac{Z^F_s}{F_s} \right) \d M_s + \int_{0}^{t} \bar{\rho} \left( \frac{1-R}{R} \Lambda_s + \frac{1-R}{R_{\rho}} \rho \frac{Z^F_s}{F_s} \right) \d M^{\perp}_s \bigg) 
				.\end{align*} 
				By Itô's lemma, the $\mathbb{P}$-dynamics of $\log F$ are given by 
                \begin{equation}
                        \label{verification:log_F_dynamics}
						d \log F_t = - R_{\rho} F^{-\frac{1}{R_{\rho}}} dt + R_{\rho} dH_t - \left( \frac{1-R}{R} \rho \Lambda_t \frac{Z^F_t}{F_t} + \frac{1}{2} \frac{(Z^F_t)^2}{F_t^2} \right) d U_t + \frac{Z^F_t}{F_t} dM_t
                .\end{equation}
                From this and $(1-R) \rho^2 + R = R_{\rho}$, we obtain 
				\begin{align}
						\label{verification:before_stoch_exponential}
						\exp\bigg( - & \int_{0}^{t} \delta_s \d s \bigg) \frac{(\hat{\Xi}_t X^{\hat{\Pi}, \hat{\Xi}}_t)^{1-R}}{1-R} \nonumber\\&=
						\frac{X_0^{1-R}}{1-R} F_t ^{\frac{R-1}{R_{\rho}}} \exp\bigg( -\frac{R}{R_{\rho}}(\log F_t - \log F_0) - \int_{0}^{t} F_s^{-\frac{1}{R_{\rho}}} \d s \nonumber\\&\qquad + \int_{0}^{t} \left(\frac{1-R}{2R} \Lambda_s^2 - \frac{1}{2} \frac{R}{R_{\rho}} \frac{(Z^F_s)^2}{F_s^2} - \frac{1-R}{2} \left(\frac{1}{R} \Lambda_s + \frac{1}{R_{\rho}} \rho \frac{Z^F_s}{F_s}\right)^2 \right) \d U_s \nonumber\\&\qquad \int_{0}^{t} \left( \frac{1-R}{R} \rho \Lambda_s + \frac{Z^F_s}{F_s} \right) \d M_s + \int_{0}^{t} \bar{\rho} \left( \frac{1-R}{R} \Lambda_s + \frac{1-R}{R_{\rho}} \rho \frac{Z^F_s}{F_s} \right) \d M^{\perp}_s \bigg) 
				.\end{align}
				Using that $R_{\rho} = (1-\rho^2) R + \rho^2 = (1-R) \rho^2 + R$ and $(1-R)(1-\rho^2) = 1-R_{\rho}$, we get 
				\begin{align*}
						\frac{1-R}{2R} &\Lambda^2 - \frac{1}{2} \frac{R}{R_{\rho}} \frac{(Z^F_s)^2}{F_s^2} - \frac{1-R}{2} \left(\frac{1}{R} \Lambda_s + \frac{1}{R_{\rho}} \rho \frac{Z^F_s}{F_s}\right)^2 + \frac{1}{2} \left( \frac{1-R}{R} \rho \Lambda_s + \frac{Z^F_s}{F_s} \right)^2 \\&\qquad + \frac{1}{2} (1-R)^2 \bar{\rho}^2 \left( \frac{1}{R} \Lambda_s + \frac{1}{R_{\rho}} \rho \frac{Z^F_s}{F_s} \right)^2 \\&=
						\frac{1-R}{2R^2} \Lambda_s^2 \left(R-1 + (1-R)\rho^2 + (1-R)(1-\rho^2)\right) \\&\qquad + \frac{1-R}{R R_{\rho}} \rho \Lambda_s \frac{Z^F_s}{F_s} \left( -1 + R_{\rho} + (1-R) (1-\rho^2) \right) \\&\qquad + \frac{1}{2R_{\rho}^2} \frac{(Z^F_s)^2}{F_s^2} \left( -R R_{\rho} - (1-R)\rho^2 + R_{\rho}^2 + (1-R)^2 (1-\rho^2) \rho^2 \right) \\&=
						0
				.\end{align*}
				This means that \begin{align*}
						\exp\bigg( - & \int_{0}^{t} \delta_s \d s \bigg) \frac{(\hat{\Xi}_t X^{\hat{\Pi}, \hat{\Xi}}_t)^{1-R}}{1-R} =
						\frac{X_0^{1-R}}{1-R} F_0^{\frac{R}{R_{\rho}}} F_t ^{-\frac{1}{R_{\rho}}} \exp\left( - \int_{0}^{t} F_s^{-\frac{1}{R_{\rho}}} \d s \right) \\&\qquad \mathcal{E}\left( \left(\frac{1-R}{R} \rho \Lambda + \frac{Z^F}{F}\right) \cdot M \right)_t \mathcal{E}\left( \frac{1-R}{R} \sqrt{1-\rho^2} \left( \Lambda + \rho \frac{R}{R_{\rho}} \frac{Z^F}{F} \right) \cdot M^{\perp} \right)_t \\&=
						\frac{X_0^{1-R}}{1-R} F_0^{\frac{R}{R_{\rho}}} F_t ^{-\frac{1}{R_{\rho}}} \exp\left( - \int_{0}^{t} F_s^{-\frac{1}{R_{\rho}}} \d s \right) L_t
				.\end{align*} 
				Now, using Fubini-Tonelli and that $\int_{0}^{\infty} F_s^{-\frac{1}{R_{\rho}}} \d s = \infty$ $\hat{\Q}$-a.s., we obtain 
				\begin{align*}
						\E\bigg[\int_{0}^{\infty} & \exp\left( - \int_{0}^{t} \delta_s \d s \right) \frac{(\hat{\Xi}_t X^{\hat{\Pi}, \hat{\Xi}}_t)^{1-R}}{1-R} \d t \bigg] \\&=
						\E\left[\int_{0}^{\infty} \frac{X_0^{1-R}}{1-R} F_0^{\frac{R}{R_{\rho}}} F_t ^{-\frac{1}{R_{\rho}}} \exp\left(- \int_{0}^{t} F_s^{-\frac{1}{R_{\rho}}} \d s \right) L_t \d t \right] \\&=
						\frac{X_0^{1-R}}{1-R} F_0^{\frac{R}{R_{\rho}}} \int_{0}^{\infty} \E\left[F_t ^{-\frac{1}{R_{\rho}}} \exp\left( -\int_{0}^{t} F_s^{-\frac{1}{R_{\rho}}} \d s  \right) L_t \d t \right] \\&=
						\frac{X_0^{1-R}}{1-R} F_0^{\frac{R}{R_{\rho}}} \int_{0}^{\infty} \E^{\hat{\Q}}\left[F_t ^{-\frac{1}{R_{\rho}}} \exp\left( - \int_{0}^{t} F_s^{-\frac{1}{R_{\rho}}} \d s \right) \right] \d t \\&=
						\frac{X_0^{1-R}}{1-R} F_0^{\frac{R}{R_{\rho}}} \E^{\hat{\Q}}\left[\int_{0}^{\infty} F_t ^{-\frac{1}{R_{\rho}}} \exp\left( - \int_{0}^{t} F_s^{-\frac{1}{R_{\rho}}} \d s  \right) \d t \right] \\&=
						\frac{X_0^{1-R}}{1-R} F_0^{\frac{R}{R_{\rho}}} \E^{\hat{\Q}}\left[1 - \exp\left(-\int_{0}^{\infty} F_s^{-\frac{1}{R_{\rho}}} \d s \right) \right] \\&=
						\frac{X_0^{1-R}}{1-R} F_0^{\frac{R}{R_{\rho}}}
				.\end{align*}

				\textbf{Step 3:}
				Choose $\hat{\varphi}_t = \frac{R}{R_{\rho}} \frac{Z^F_t}{F_t} $.
				We then have \begin{align*}
						\exp&\left( -\frac{1}{R} \int_{0}^{t} \delta_s \d s \right) \left(\zeta^{\hat{\varphi}}_t\right)^{\frac{R-1}{R}} =\\&= 
						\exp\bigg( \int_{0}^{t} \left( -\frac{1}{R} \delta_s - \frac{R-1}{R} r_s \right) \d s \\&\qquad - \int_{0}^{t} \left( \frac{R-1}{2R} \left( \bar{\rho}^2 \frac{R}{R_{\rho}} \frac{Z^F_s}{F_s} - \rho \Lambda_s\right)^2 + \frac{R-1}{2R} \bar{\rho}^2 \left( \Lambda_s + \rho \frac{R}{R_{\rho}} \frac{Z^F_s}{F_s} \right)^2 \right) \d{} U_s \\&\qquad + \int_{0}^{t} \frac{R-1}{R} \left( \bar{\rho}^2 \frac{R}{R_{\rho}} \frac{Z^F_s}{F_s} - \rho \Lambda_s\right) \d M_s - \int_{0}^{t} \frac{R-1}{R} \bar{\rho} \left( \Lambda_s + \rho \frac{R}{R_{\rho}} \frac{Z^F_s}{F_s} \right) \d M^{\perp}_s \bigg) \\&=
						\exp\bigg(-H_t + \int_{0}^{t} \left( - \frac{(1-R)^2}{2R^2} \Lambda_s^2 + \frac{1}{2}(1-R) R \frac{1}{R_{\rho}^2} \bar{\rho}^2 \frac{(Z^F_s)^2}{F_s^2} \right) \d U_s \\&\qquad + \int_{0}^{t} \left( \frac{1-R}{R} \rho \Lambda_s + \frac{R_{\rho}-1}{R_{\rho}} \frac{Z^F_s}{F_s} \right) \d M_s + \int_{0}^{t} \frac{1-R}{R} \bar{\rho} \left(\Lambda_s + \rho \frac{R}{R_{\rho}} \frac{Z^F_s}{F_s} \right) \d M^{\perp}_s \bigg) 
				.\end{align*} 
				Similarly as in Step 2, using the $\mathbb{P}$-dynamics of $\log F$ given in \eqref{verification:log_F_dynamics}, we obtain \begin{align*}
				\exp\bigg( - & \frac{1}{R} \int_{0}^{t} \delta_s \d s \bigg) \left(\zeta^{\hat{\varphi}}_t\right)^{\frac{R-1}{R}} \\&= 
				F_0^{\frac{1}{R_{\rho}}} F^{-\frac{1}{R_{\rho}}}_t \exp\bigg( -\int_{0}^{t} F_s^{-\frac{1}{R_{\rho}}} \d s \\&\qquad + \int_{0}^{t} \left( - \frac{1-R}{R R_{\rho}} \rho \Lambda_s \frac{Z^F_s}{F_s} - \frac{1}{2R_{\rho}} \frac{(Z^F_s)^2}{F_s^2} - \frac{(1-R)^2}{2R^2} \Lambda_s^2 + \frac{1}{2}(1-R) R \frac{1}{R_{\rho}^2} \bar{\rho}^2 \frac{(Z^F_s)^2}{F_s^2} \right) \d{} U_s \\&\qquad + \int_{0}^{t} \left( \frac{1-R}{R} \rho \Lambda_s + \frac{Z^F_s}{F_s} \right) \d M_s + \int_{0}^{t} \frac{1-R}{R} \bar{\rho} \left(\Lambda_s + \rho \frac{R}{R_{\rho}} \frac{Z^F_s}{F_s}\right) \d M^{\perp}_s \bigg) \\&=
				F_0^{\frac{1}{R_{\rho}}} F^{-\frac{1}{R_{\rho}}}_t \exp\bigg( - \int_{0}^{t} F_s^{-\frac{1}{R_{\rho}}} \d s \\&\qquad + \int_{0}^{t} \left( \frac{1-R}{2R} \Lambda_s^2 - \frac{1}{2} \frac{R}{R_{\rho}} \frac{(Z^F_s)^2}{F_s^2} - \frac{1-R}{2} \left( \frac{1}{R} \Lambda_s + \frac{1}{R_{\rho}} \rho \frac{Z^F_s}{F_s} \right)^2 \right) \d U_s \\&\qquad + \int_{0}^{t} \left( \frac{1-R}{R} \rho \Lambda_s + \frac{Z^F_s}{F_s} \right) \d M_s + \int_{0}^{t} \bar{\rho} \left(\frac{1-R}{R} \Lambda_s + \frac{1-R}{R_{\rho}} \rho \frac{Z^F_s}{F_s}\right) \d M^{\perp}_s \bigg)
				.\end{align*} 
				Notice that the integrals in the exponential are exactly the same as in \eqref{verification:before_stoch_exponential} in Step 2.
				Hence, we have \[
						\exp\left( -\frac{1}{R} \int_{0}^{t} \delta_s \d s \right) \left(\zeta^{\hat{\varphi}}_t\right)^{\frac{R-1}{R}} = F_0^{\frac{1}{R_{\rho}}} F_t ^{-\frac{1}{R_{\rho}}} \exp\left( -\int_{0}^{t} F_s^{-\frac{1}{R_{\rho}}} \d s \right) L_t
				.\] 
				As in Step 2, this yields that \[
						\E\left[\int_{0}^{\infty} \exp\left( -\frac{1}{R} \int_{0}^{t} \delta_s \d s \right) \left(\zeta^{\hat{\varphi}}_t\right)^{\frac{R-1}{1}} \d t \right] = 
						F_0^{\frac{1}{R_{\rho}}} \E^{\hat{\Q}}\left[1 - \exp\left(- \int_{0}^{\infty} F_s^{-\frac{1}{R_{\rho}}} \d s  \right) \right] =
						F_0^{\frac{1}{R_{\rho}}}
				.\qedhere\] 
		\end{proof}

        The martingality (as opposed to local martingality) required in \cref{verification:verification} (resp.\ \cref{verification:verification:sufficient_criteria}) is satisfied as long as the candidate solution to \eqref{hjb} is bounded away from zero.
        Some conditions for when this is the case (up to a stopping time) are collected in \cref{properties:local_boundedness_away_from_0}.
        
        \begin{proposition}
				\label{properties:local_boundedness_away_from_0:martingale}
				Let $F$ be an $\mathbb{F}^{M}$-adapted solution to \eqref{eq:IVCBSDE} with control process $Z^F$ and let $\tau$ be a bounded $\mathbb{F}^M$-stopping time.
				Assume that there exist constants $C_1, C_2 > 0$ s.t.\ $C_1 \le F \le C_2$ and $|H| \le C_2$ a.s.\ over $[0, \tau]$.
				Then $\mathcal{E}\left(\frac{Z^F}{F} \cdot \tilde{M}\right)$ is a true $(\mathbb{F}^M, \Q)$-martingale over $[0, \tau]$.
		\end{proposition}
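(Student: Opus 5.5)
The plan is to express the stochastic exponential explicitly through $F$ and $H$ by means of Itô's formula for $\log F$, and to read off a bound of the form ``uniformly bounded martingale times a bounded factor''. Under $\Q$, \eqref{eq:IVCBSDE} gives
\[
d\log F_t = -R_{\rho} F_t^{-\frac{1}{R_{\rho}}} \d t + R_{\rho} \d H_t - \frac{1}{2}\frac{(Z^F_t)^2}{F_t^2} \d U_t + \frac{Z^F_t}{F_t} \d \tilde{M}_t .
\]
Hence, for $t \le \tau$,
\[
\mathcal{E}\left(\frac{Z^F}{F}\cdot \tilde{M}\right)_t = \frac{F_t}{F_0} \exp\left( \int_0^t R_{\rho} F_s^{-\frac{1}{R_{\rho}}} \d s - R_{\rho}(H_t - H_0) \right).
\]
Everything on the right is controlled on $[0,\tau]$. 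We have $C_1 \le F \le C_2$, so $F_t/F_0 \le C_2/C_1$ and $F_s^{-1/R_\rho} \le C_1^{-1/R_\rho}$. The bound $|H| \le C_2$ gives $|H_t - H_0| \le 2C_2$. Finally, $\tau$ is bounded, say by $T$. Putting these together, the stopped stochastic exponential satisfies
\[
0 < \mathcal{E}\left(\frac{Z^F}{F}\cdot \tilde{M}\right)_{t\wedge\tau} \le \frac{C_2}{C_1}\exp\left( R_\rho C_1^{-1/R_\rho} T + 2R_\rho C_2\right).
\]

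The conclusion then follows from a standard fact. The stopped process is a continuous $(\mathbb{F}^M,\Q)$-local martingale: $Z^F$ is $\mathbb{F}^M$-predictable, and $\tilde{M}$ is an $\mathbb{F}^M$-local martingale under $\Q$. A uniformly bounded local martingale is a true martingale by dominated convergence along a localising sequence, which gives martingality on $[0,\tau]$.

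The only delicate point is justifying the Itô step. It needs $F$ to be a continuous semimartingale with $\int_0^\tau (Z^F/F)^2 \d U < \infty$ a.s., so that the stochastic exponential is well defined. This holds because $F$ is positive and continuous by \cref{existence:dynamics}, so $\log F$ is well defined. The identity is pathwise, and $Z^F/F$ is locally bounded in the required $L^2(dU)$ sense since $F \ge C_1$ on $[0,\tau]$ and $\int (Z^F)^2 \d U = \langle F\rangle < \infty$. No Novikov-type condition is needed.
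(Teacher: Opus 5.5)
Your argument is correct, and it takes a more elementary route than the paper. The paper also passes to $G=\log F-R_\rho H$, but then treats its dynamics as a quadratic BSDE on $[0,\tau]$ with bounded terminal value. To fit that framework it adjoins an auxiliary independent Brownian motion and rewrites $dt$ and $dU$ as densities against $d(U_t+t)$. It then invokes an a priori BMO estimate for quadratic BSDEs to get that $\frac{Z^F}{F}\cdot\tilde M$ is a BMO martingale on $[0,\tau]$, and concludes with Kazamaki's criterion.

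You instead integrate the same $\log F$ dynamics explicitly. This writes the stochastic exponential as $\frac{F_t}{F_0}\exp\bigl(R_\rho\int_0^t F_s^{-1/R_\rho}\,ds-R_\rho(H_t-H_0)\bigr)$. On $[0,\tau]$ every factor is bounded by a deterministic constant, so the stopped exponential is a bounded continuous local martingale and hence a true martingale. This avoids the BSDE machinery, the continuity-of-filtration requirement and the enlargement of the probability space.

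The paper's route yields the stronger BMO property (and with it, e.g., reverse H\"older estimates for the density). However, only martingality is used later, in Example~\ref{verification:example:bounded_eta} and in the proof of Theorem~\ref{heston:verification}, so nothing is lost.

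One small point of attribution. Continuity of $F$ follows directly from \eqref{eq:IVCBSDE}, since every term on its right-hand side is continuous. Positivity on $[0,\tau]$ is part of the hypothesis $F\ge C_1$. So Lemma~\ref{existence:dynamics} is not needed for either fact.
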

		\begin{proof}
				Since $F$ is bounded and bounded away from $0$ and $H$ is bounded over $[0, \tau]$, the process $G := \log F - R_{\rho} H$ is bounded over $[0, \tau]$.
				Set $Z^G = \frac{Z^F}{F}$.
				Then $G$ has $\Q$-dynamics \[
						dG_t = -R_{\rho} \exp\left( -\frac{1}{R_{\rho}} G_t - H_t \right) dt - \frac{1}{2} (Z^G_t)^2 d U_t + Z^G_t d\tilde{M}_t
				.\] 

				Possibly extending the probability space, let $B$ be a Brownian motion independent of $\mathbb{F}^{M}$.
				We will re-cast the dynamics of $G$ as a BSDE that is driven by the local martingale $(\tilde{M}, B)$.
				Note that $\mathbb{F}^{M}$ is continuous by \cref{setting:assumption:M-PRP}.
				Moreover, note that $G, Z^G$ are $\mathbb{F}^{M}$-adapted, so we can consider the BSDE under the continuous filtration $\mathbb{F}^{M,B}$ (note that continuity of the filtration is a standing assumption in both \cite{Constantinou2025,Kazamaki1994}, results from which are used below).
				Set $I_t = U_t + t$. 
				By Radon-Nikodym, there exist predictable processes $\kappa^{1}, \kappa^2$ with $0 \le \kappa^{1} \le 1, 0 \le \kappa^2 \le 1$ s.t.\ $U_t = \int_{0}^{t} \kappa^{1}_s \d I_s $, $t = \int_{0}^{t} \kappa^2_s \d I_s $ (cf. \cite[Prop.~I.3.13]{Jacod2003}).
				Hence, $(G, (Z^G, 0))$ solves the BSDE \[
						dG_t = \left(-R_{\rho} \exp\left( -\frac{1}{R_{\rho}} G_t - H_t \right) \kappa^2_t - \frac{1}{2} (Z^G_t)^2 \kappa^{1}_t\right) dI_t + Z^G_t d\tilde{M}_t + 0 dB_t
				\] with (random) terminal horizon $\tau$ and terminal condition $G_\tau = \log F_\tau - R_{\rho} H_\tau$.
				Since $\tau$ is bounded by some $T > 0$, we can view this as a BSDE with deterministic terminal horizon $T$ by multiplying the driver as well as the control process by $ \mathbbm{1}_{\{t \le \tau\}}  $.
				Moreover, notice that this is a quadratic BSDE with bounded terminal condition since $G, H$ are bounded over $[0, \tau]$.
				Hence, \cite[Prop.~3.3]{Constantinou2025} yields that $Z^G \cdot \tilde{M}$ is a BMO-martingale, and thus $\mathcal{E}(\frac{Z^F}{F} \cdot \tilde{M})$ is a true $\mathbb{F}^{M,B}$-martingale over $[0, \tau]$ by \cite[Thm.~2.3]{Kazamaki1994}.
				Since $\mathcal{E}(\frac{Z^F}{F} \cdot \tilde{M})$ is $\mathbb{F}^{M}$-adapted, it is also a true $\mathbb{F}^{M}$-martingale.
		\end{proof}
        
		\section{Application to Volterra Heston Models with $L^1$ Kernels}
		\label{section:heston}

        \subsection{Setting}
		\label{subsection:heston:setting}

		In this section, we consider the Volterra Heston model with $L^1_{\mathrm{loc}}$ kernels, which in particular also includes $L^2_{\mathrm{loc}}$ kernels.
		For a locally square-integrable kernel $K$, the Volterra Heston variance process is defined by the stochastic Volterra equation \[
				Y_t = Y_0 + \int_{0}^{t} K(t-s) (-\kappa (Y_s - \theta)) \d s + \int_{0}^{t} K(t-s) \nu \sqrt{Y_s} \d B_s
		.\] 
		For the stochastic integral in the equation to be well-defined, it is necessary that $K \in L^2_{\mathrm{loc}}$.
		Using the affine structure of the Heston model, we can get past this restriction by considering the integrated variance $U_t = \int_{0}^{t} Y_s \d s$.
		Through an application of the stochastic Fubini theorem, one sees that 
        \begin{equation}
                \label{heston:eq_iv}
                U_t = \int_{0}^{t} \left(Y_0 + \int_{0}^{s} K(s-u) \kappa \theta \d u\right) \d s + \int_{0}^{t} K(t-s) (-\kappa U_s + \nu M_s) \d s
        ,\end{equation}
        where $M = \int_{0}^{t} \sqrt{Y_s} \d B_s$ is a local martingale with quadratic variation $U$ (see \cite[Lem.~2.1]{AbiJaber2021}).
		Crucially, this representation avoids any stochastic integrals involving $K$, so it only requires $K \in L^{1}_{\mathrm{loc}}$.
		In the case of the fractional kernel $K(t) = t^{h-\frac{1}{2}}$, this allows for $h \in (-\frac{1}{2}, \frac{1}{2})$ (instead of just $h \in (0, \frac{1}{2})$).
		For $h \in (-\frac{1}{2}, 0]$, the model is called \emph{hyper-rough}.
		As shown by \textcite[Thm.~3.2]{Jusselin2020}, the integrated variance process $U$ is no longer absolutely continuous when $h \in (-\frac{1}{2}, 0]$, so the spot variance process $Y$ does not exist any more.
		As $h \downarrow -\frac{1}{2}$, it was shown by \textcite{AbiJaber2026} that $U$ converges to an Inverse Gaussian Lévy process.

        We assume that the kernel $K \in L^{1}_{loc}([0, \infty), \R)$ as well as the shifted kernels $\Delta_{\varepsilon} K = K(\cdot + \varepsilon)$, $\varepsilon \in (0, 1)$, satisfy the following condition:
		The kernel is non-negative, non-increasing, and continuously differentiable on $(0, \infty)$, it has a resolvent of the first kind\footnote{The resolvent of the first kind is a measure $L$ of locally bounded variation that satisfies $K * L = 1$. A resolvent of the first kind does not exist for all kernels. If it exists, it is unique, see \cite[Thm.~5.5.5]{Gripenberg1990}.} $L$; moreover $L$ is non-negative and non-increasing in the sense that $s \mapsto L([s, s+t])$ is non-increasing for all $t \ge 0$.

		By \cite[Thm.~2.13,Rem.~2.14]{AbiJaber2021}, this guarantees the existence of a weak solution $(U, M)$ to \eqref{heston:eq_iv}, that is, a tuple $(U, M)$ s.t.\ $U$ is continuous, non-decreasing and satisfies \eqref{heston:eq_iv}, and $M$ is a continuous local martingale with quadratic variation $\langle M \rangle = U$.
		Moreover, the solution is unique in law in the sense that if $(U, M), (U', M')$ are two solutions (defined on potentially different probability spaces), then $\operatorname{Law}(U) = \operatorname{Law}(U')$.
		A sufficient condition for $K$ to satisfy these assumptions is if $K$ is completely monotone\footnote{A function $f \in C^{\infty}((0, \infty), \R)$ is called completely monotone if $(-1)^{k} f^{(k)} \ge 0$ for all $k \in \N_0$.} and not identically zero.
		In particular, this includes the gamma kernel $t^{\alpha - 1} e^{-\beta t}$ with $\alpha \in (0, 1], \beta \ge 0$ as well as shifted versions thereof.

        For the orthogonal noise, let $W^\perp$ be a Brownian motion independent of $\mathbb{F}^M$ and set $M^\perp = W^\perp_U$.
        Seting $\mathbb{F} = \mathbb{F}^M \vee \mathbb{F}^{M^\perp}$, $M$ and $M^\perp$ are strongly orthogonal $\mathbb{F}$-local martingales.

		In total, we consider the IVC model 
		\begin{align*}
				dS_t &= S_t \left(r dt + \Lambda dU_t + \rho dM_t + \sqrt{1-\rho^2} dM^{\perp}_t\right), \\
		          U_t &= \int_{0}^{t} \left(Y_0 + \int_{0}^{s} K(s-u) \kappa \theta \d u\right) \d s + \int_{0}^{t} K(t-s) (-\kappa U_s + \nu M_s) \d s
		.\end{align*}
		The parameters are assumed to satisfy $Y_0, r, \kappa, \theta, \nu, \delta > 0$, $\Lambda \in \R$, $\rho \in [-1, 1]$, and preferences are such that $R > 1$.
		We denote \[
				G_0(t) = \int_{0}^{t} \left(Y_0 + \int_{0}^{s} K(s-u) \kappa \theta \d u\right) \d s = \int_{0}^{t} \left(Y_0 + \kappa \theta \int_{0}^{s} K(u) \d u \right) \d s
		.\]

		In this model, the cumulative myopic consumption $H$ is given by \[
				H_t = \frac{1}{R} (\delta + (R-1) r) t + \frac{R-1}{2R^2} \Lambda^2 U_t
		.\] 
		We denote $\eta_0 = \frac{1}{R} (\delta - (1-R) r)$, $\eta_1 = \frac{R-1}{2R^2} \Lambda^2$.
        Note that $\eta_0 > 0, \eta_1 > 0$ since $R>1, r>0, \delta>0$.
		Notice that the model is strongly myopically well-posed with constant $\eta_0$.

		Notice that $\mathcal{E}\left( \frac{1-R}{R} \rho \Lambda \cdot M \right) $ is a true martingale by \cite[Lem.~6.1]{AbiJaber2021}, so $\Q$ is well-defined.
		Under $\Q$, $(U, \tilde{M})$ is a solution to the affine stochastic Volterra equation with modified parameters $\tilde{\kappa} = \kappa + \frac{R-1}{R} \rho \Lambda \nu$, $\tilde{\theta } = \frac{\kappa}{\tilde{\kappa}} \theta$, where $\tilde{M} = M + \frac{R-1}{R} \rho \Lambda U$.
		Note that we do not make any assumptions on $\tilde{\kappa}$.
        When $\tilde{\kappa} = 0$, we formally write $\tilde{\kappa} \tilde{\theta} := \kappa \theta$.

		We verify the predictable representation property of $\tilde{M}$ over $\mathbb{F}^{M}$ using \cite[Cor.~III.4.31]{Jacod2003}.
		To this end, fix $\Q' \ll \Q$ s.t.\ $\Q' \Bigr|_{\mathcal{F}^{M}_0} = \Q \Bigr|_{\mathcal{F}^{M}_0} $ and $\tilde{M}$ has semimartingale-characteristics $(0, U, 0)$ under $\Q'$.
		Since the affine stochastic Volterra equation also holds under $\Q'$ (as $\Q' \ll \Q$) and the $\Q'$-quadratic variation of $\tilde{M}$ is $U$, $(U, \tilde{M})$ solves the equation under $\Q'$.
		By weak uniqueness, this means that $\operatorname{Law}_{\Q}(U) = \operatorname{Law}_{\Q'}(U)$.
		Convolving both sides of the equation with the resolvent of the first kind $L$, we obtain \[
				\nu \tilde{M} * 1 = (U - G_0) * L + \tilde{\kappa} U * 1
		,\] so $\tilde{M} * 1$ can be recovered from $U$.
		Since $\tilde{M}$ is continuous, $\tilde{M}$ can then be recovered from $\tilde{M} * 1$ by differentiating.
		Thus, we obtain that $\operatorname{Law}_{\Q'}(\tilde{M}) = \operatorname{Law}_{\Q}(\tilde{M})$.
		As we work over $\mathbb{F}^{M} = \mathbb{F}^{\tilde{M}}$, this means that $\Q = \Q'$.
		Hence, $\tilde{M}$ has the predictable representation property over $\mathbb{F}^{M}$ by \cite[Cor.~III.4.31]{Jacod2003}.

		In conclusion, \cref{setting:assumption:Q,setting:assumption:M-PRP,setting:assumption:M-adapted} are all satisfied in this model.

		\begin{remark}
				We make only minimal assumptions on the model parameters.
				In particular, our results generalise the treatment of the classical case $K = 1$ of \textcite{Gutekunst2025}, where the Feller condition $\kappa \theta \ge \frac{\nu^2}{2}$ was required.
		\end{remark}

		\subsection{The Complete Case}
		\label{subsection:heston:complete}

        \begin{theorem}
                \label{heston:complete:verification}
                Suppose that $|\rho| = 1$.
                Denote \[
                        G^{s}_t = \E^{\Q}\left[\exp( -\eta_1 (U_s - U_t) ) \condbar \mathcal{F}_t\right] = \exp\left( \int_t^s F(\psi(s-u)) \d G_t(u) \right)
                ,\] where $ F(\psi) = -\eta_1 - \tilde{\kappa} \psi + \frac{\nu^2}{2} \psi^2 $, $\psi $ solves the Riccati-Volterra equation $\psi = F(\psi) * K$, and \[
                        G_t(s) = G_0(s) + \int_t^s \int_0^t K(u-v) \d (-\tilde{\kappa} U_v + \nu \tilde{M}_v) \d u
                .\]
                Set 
                \begin{align*}
                        F_t &= \int_t^\infty e^{-\eta_0 (s-t)} G^s_t \d s, \\
                        Z^F_t &= \nu \int_t^\infty e^{-\eta_0 (s-t)} \psi(s-t) G^s_t \d s
                .\end{align*}
                Then the value process and optimal controls are given by \eqref{verification:optimal_controls}.
        \end{theorem}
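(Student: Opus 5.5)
In the complete case $R_\rho = 1$, so \eqref{hjb} is an explicit formula rather than an implicit equation: $F_t = \E^{\Q}\bigl[\int_t^\infty e^{-(H_s-H_t)} \d s \condbar \mathcal{F}_t\bigr]$. Since $H_s - H_t = \eta_0(s-t) + \eta_1(U_s - U_t)$, Tonelli gives
\[
F_t = \int_t^\infty e^{-\eta_0(s-t)}\, \E^{\Q}\bigl[e^{-\eta_1(U_s-U_t)} \condbar \mathcal{F}_t\bigr] \d s = \int_t^\infty e^{-\eta_0(s-t)} G^s_t \d s .
\]
This is the solution from \cref{existence:uniformly_wellposed}, which applies because the model is strongly myopically well-posed with constant $\eta_0$. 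It is bounded by $\eta_0^{-1}$ and $\mathbb{F}^M$-adapted.

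The plan has three steps: identify $(F, Z^F)$ through the affine transform formula; verify the martingale condition of \cref{verification:verification} via \cref{verification:verification:sufficient_criteria}; and check that $\int_0^\infty F_t^{-1}\d t = \infty$, which holds automatically because $F \le \eta_0^{-1}$.

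\textbf{Step 1: identifying $Z^F$.} Under $\Q$, $(U, \tilde{M})$ solves the affine Volterra equation with parameters $(\tilde\kappa, \tilde\theta)$. The conditional Laplace transform formula for integrated Volterra Heston from \textcite{AbiJaber2021} then gives the stated exponential-affine form of $G^s_t$ with the Riccati–Volterra solution $\psi$. The same theory yields the martingale dynamics
\[
dG^s_t = G^s_t\, \nu \psi(s-t)\, d\tilde{M}_t , \qquad t \le s .
\]
Differentiating $F_t = \int_t^\infty e^{-\eta_0(s-t)}G^s_t \d s$ in $t$ via stochastic Fubini, using $G^t_t = 1$, gives
\[
dF_t = -dt + \eta_0 F_t \d t + \Bigl(\int_t^\infty e^{-\eta_0(s-t)} \d_t G^s_t \d s\Bigr) .
\]
Rewriting $\int e^{-\eta_0(s-t)}\, \d_t G^s_t$ so that the $\eta_1 F_t\, dU_t$ drift appears is easiest if one works with the $\Q$-martingale $\exp(-\eta_1 U_t)G^s_t$ rather than $G^s_t$ directly. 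This identifies the martingale part as $\nu \int_t^\infty e^{-\eta_0(s-t)}\psi(s-t)G^s_t \d s \, d\tilde M_t$, which is the stated $Z^F$. Stochastic Fubini is justified because $\psi \le 0$ is bounded (since $\eta_1 > 0$, $\psi$ takes values in $[\psi_{\min}, 0]$) and $G^s_t \le 1$. Alternatively, uniqueness of the decomposition in \cref{existence:dynamics} lets one simply read off $Z^F$ as the integrand.

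\textbf{Step 2: the martingale property (main obstacle).} By \cref{verification:verification:sufficient_criteria}, it suffices to show that $\mathcal{E}\bigl(\frac{Z^F}{F}\cdot \tilde M\bigr)$ is a true $(\mathbb{F}^M,\Q)$-martingale. Novikov fails because $\int (Z^F/F)^2 \d U$ is not bounded, so I use the structure instead. The ratio
\[
\frac{Z^F_t}{F_t} = \nu\, \frac{\int_t^\infty e^{-\eta_0(s-t)}\psi(s-t) G^s_t \d s}{\int_t^\infty e^{-\eta_0(s-t)} G^s_t \d s}
\]
is a weighted average of $\nu\psi(s-t)$, so it is bounded by $\nu\|\psi\|_\infty$. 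A bounded integrand against $\tilde M$ with quadratic variation $U$ yields a true martingale provided $\E^{\Q}[\exp(c U_T)] < \infty$ for suitable $c$, or more robustly via the standard localisation argument for affine Volterra models (as in \cite[Lem.~6.1]{AbiJaber2021}, used for $\Q$ itself). Concretely, the exponential with bounded integrand $\alpha_t$ is the density of a change of measure under which $(U, \cdot)$ solves a Volterra equation with bounded perturbed drift $-\kappa U + \nu\int \alpha\, dU$. Non-explosion of this equation, via comparison with the linear-growth bound, yields $\E[\mathcal{E}_T] = 1$.

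If boundedness of $\psi$ turns out to be delicate for singular $L^1$ kernels, I would fall back on \cref{properties:local_boundedness_away_from_0:martingale} on $[0,t]$ with $\tau = t$. This requires $F$ bounded away from zero, which follows from $F_t \ge \int_t^{t+1} e^{-\eta_0(s-t)}G^s_t\d s$ together with a lower bound on $G^s_t$, and it also requires $H$ bounded, which fails since $U$ is unbounded. So the primary route remains the bounded-integrand Volterra argument.
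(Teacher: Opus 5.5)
Your proposal is correct and follows essentially the same route as the paper. Since $R_\rho = 1$, $F$ is explicit. The affine transform formula of \cite{AbiJaber2021}, combined with a stochastic Fubini/Leibniz argument, identifies $Z^F$. The ratio $Z^F_t/F_t$ is a weighted average of $\nu\psi(s-t)$ (the paper phrases this as the \enquote{mean-value theorem}), and it is bounded because $\psi$ lies between the negative root of $F(\psi)=0$ and $0$ by \cite[Thm.~A.5]{Gatheral2019}. The martingale property for this bounded integrand is exactly \cite[Lem.~6.1]{AbiJaber2021}, whose Girsanov--localisation argument you sketch, and the proof then concludes with \cref{verification:verification,verification:verification:sufficient_criteria}.

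There is one small slip. $G^s_t$ is not itself a $\Q$-martingale in $t$; only $e^{-\eta_1 U_t} G^s_t$ is. Its correct dynamics are $dG^s_t = G^s_t(\eta_1\, dU_t + \nu\psi(s-t)\, d\tilde{M}_t)$, and this is precisely what produces the $\eta_1 F_t\, dU_t$ drift that you subsequently recover.
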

        \begin{proof}
                First, notice that since $R_\rho = 1$, $F$ is exactly the solution to \cref{hjb}.
                Moreover, $F$ is well-defined since the model is strongly myopically well-posed.
                Furthermore, $\psi$ exists and satisfies $\frac{\tilde{\kappa} - \sqrt{\tilde{\kappa}^2 + 2 \eta_1 \nu^2} }{\nu^2} < \psi < 0$ by \cite[Thm.~A.5]{Gatheral2019}.

                By \cite[Eq. (6.1),(6.7),(6.11)]{AbiJaber2021}, we have \[
				        d \log \E^{\Q}\left[ \exp(- \eta_1 U_s) \condbar \mathcal{F}_t \right] = -\frac{1}{2} \nu^2 \psi(s-t)^2 dU_t + \nu \psi(s-t) d\tilde{M}_t, \quad 0 \le t \le s
		          ,\] so the $\Q$-dynamics of $G^{s}$ are given by \[
				        dG^{s}_t = G^{s}_t (\eta_1 dU_t + \psi(s-t) \nu d\tilde{M}_t), \quad 0 \le t \le s
		          .\]

                Using the stochastic Leibniz rule on $F$, one sees that $Z^F$ is the control process of $F$.
                By the mean-value theorem, we have $ \frac{Z^F_t}{F_t} = \psi(\xi_t) \nu $ for some $\xi_t \ge 0$.
		          Since $\psi$ is bounded, $\mathcal{E}\left( \frac{Z^F}{F} \cdot \tilde{M} \right) $ is hence a $\Q$-martingale by \cite[Lem.~6.1]{AbiJaber2021}.

                Optimality now follows from \cref{verification:verification,verification:verification:sufficient_criteria}.
        \end{proof}
        
		\subsection{The Incomplete Case}
		\label{subsection:heston:incomplete}

		Now, we consider the incomplete case $|\rho| < 1$.
		Since the model is strongly myopically well-posed, the existence of an $\mathbb{F}^{M}$-adapted candidate solution follows immediately from \cref{existence:uniformly_wellposed}.
		
		When $\tilde{\kappa} \ge 0$, a better supersolution than the constant $\eta_0^{-R_{\rho}}$ can be obtained from \cref{existence:supersolution}.
        To this end, let $\tilde{\eta}_1 \in [0, \eta_1]$ be a solution to the equation \[
				\tilde{\eta}_1 = \eta_1 - \frac{1}{2} (R_{\rho}-1) \frac{1}{\nu^2} \left( \tilde{\kappa} - \sqrt{\tilde{\kappa}^2 + 2 \tilde{\eta}_1 \nu^2 } \right)^2
		.\] 
		Note that a solution $\tilde{\eta}_1 \in [0, \eta_1]$ exists by the intermediate value theorem using that $\tilde{\kappa} \ge 0$.
		Moreover, the solution is unique over $[0, \eta_1]$ by monotonicity of both sides of the equation.
		Set $\tilde{H}_t = \eta_0 t + \tilde{\eta}_1 U_t $.
		Note that $\tilde{H}$ corresponds to a model with modified excess return per unit variance \[
				\tilde{\Lambda}^2 = \frac{2R^2}{R-1} \tilde{\eta}_1 \le \Lambda^2
		.\] 
		Now, set \[
				G_t = \E^{\Q}\left[ \int_{t}^{\infty} \exp( -(\tilde{H}_s - \tilde{H}_t) ) \d s \condbar \mathcal{F}_t \right]^{R_{\rho}}
		\] and denote the control process of $G$ by $Z^G$.
		Note that $G$ is well-defined since the model associated to $\tilde{H}$ is strongly myopically well-posed as $\eta_0 > 0$ and $\tilde{\eta}_1 \ge 0$.
		From the complete case, we have $\frac{Z^G_t}{G_t} = R_{\rho} \tilde{\psi}(\tilde{\xi}_t) \nu $ for some $\tilde{\xi}_t \ge 0$, where $\tilde{\psi}$ solves \[
				\tilde{\psi} = \left( -\tilde{\eta}_1 - \tilde{\kappa} \tilde{\psi} + \frac{\nu^2}{2} \tilde{\psi}^2 \right) * K
		.\]  
		We have
		\begin{align*}
				\int_{0}^{t} R_{\rho} G_s \d H_s - \int_{0}^{t} R_{\rho} G_s \d{} \tilde{H}_s - \int_{0}^{t} \frac{1}{2} \frac{R_{\rho}-1}{R_{\rho}} \frac{(Z^{G}_s)^2}{G_s} dU_s = 
				\int_{0}^{t} R_{\rho} G_s \left(\eta_1 - \tilde{\eta}_1 - \frac{1}{2} \frac{R_{\rho}-1}{R_{\rho}^2} \frac{(Z^{G}_s)^2}{G_s^2}\right) \d U_s
		.\end{align*}
		Since $\frac{\tilde{\kappa} - \sqrt{\tilde{\kappa}^2 + 2 \tilde{\eta}_1 \nu^2} }{\nu^2} < \tilde{\psi} < 0$, we have \[
				\eta_1 - \tilde{\eta}_1 - \frac{1}{2} \frac{R_{\rho}-1}{R_{\rho}^2} \left(\frac{Z^{G}_s}{G_s}\right)^2 \ge \eta_1 - \tilde{\eta}_1 - \frac{1}{2} (R_{\rho} - 1) \nu^2 \left( \frac{\tilde{\kappa} - \sqrt{\tilde{\kappa}^2 + 2 \tilde{\eta}_1 \nu^2} }{\nu^2} \right)^2 = 0
		.\]
		Hence, $\int_{0}^{\cdot} R_{\rho} G_s \d{} \tilde{H}_s + \int_{0}^{\cdot} \frac{1}{2} \frac{R_{\rho}-1}{R_{\rho}} \frac{(Z^{G}_s)^2}{G_s} \d U_s $ is strongly majorised by $\int_{0}^{\cdot} R_{\rho} G_s \d H_s$, and so $G$ is a supersolution to \eqref{hjb} by \cref{existence:supersolution}.
		Notice that $G \le \eta_0^{-R_{\rho}}$, so the solution obtained via \cref{existence:fixed_point_iteration} with supersolution $G$ is the same as the one obtained from \cref{existence:uniformly_wellposed} by uniqueness.

        Our main result is the following:
        
        \begin{theorem}
				\label{heston:verification}
				Let $F$ be the solution to \eqref{hjb} from \cref{existence:uniformly_wellposed} with control process $Z^F$. 
                Then $Z^F \le 0$.
				Moreover, the value process and optimal controls are given by \eqref{verification:optimal_controls}.
		\end{theorem}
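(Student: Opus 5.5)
We will prove the two claims in turn: first $Z^F \le 0$, then the martingale property needed for \cref{verification:verification}. We argue by approximation with regularised models, to which \cref{properties:volatility:non-positive} applies, and then pass to the limit using \cref{stability:uniformly_wellposed}.

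\textbf{Step 1: sign of $Z^F$ for regularised models.} For $\varepsilon \in (0,1)$ replace $K$ by the shifted kernel $K^\varepsilon = \Delta_\varepsilon K$. Since $K^\varepsilon$ is bounded and continuously differentiable on $[0,\infty)$, the variance $Y^\varepsilon$ exists as a strong solution driven by a $\Q$-Brownian motion $W^\Q$, and $U^\varepsilon = \int Y^\varepsilon_s \d s$. To satisfy the hypotheses of \cref{properties:volatility:non-positive} we also need a bounded rate $\eta$ and a Lipschitz square root. We therefore truncate: take $\eta^{\varepsilon,m}_t = \eta_0 + \eta_1 (Y^{\varepsilon}_t \wedge m)$, or equivalently run the factor with $\sqrt{\cdot}$ replaced by a smooth Lipschitz approximation. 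This gives $\eta_0 \le \eta^{\varepsilon,m} \le \eta_0+\eta_1 m$. Positivity of the Malliavin derivative, $D_u Y^\varepsilon_t \ge 0$, follows from the comparison principle for Volterra equations with non-negative, non-increasing kernels admitting a non-negative non-increasing resolvent. Concretely, $D_u Y_t$ solves a linear Volterra equation
\[ D_uY_t = K(t-u)\nu\sqrt{Y_u} + \int_u^t K(t-s)\bigl(-\tilde\kappa D_uY_s\,\d s + \tfrac{\nu}{2\sqrt{Y_s}}D_uY_s \,\d W^\Q_s\bigr), \]
whose solution is non-negative under these kernel conditions, as in the positivity results of \cite{AbiJaber2019}. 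Composing with the non-decreasing truncation gives $D_u\eta_t\ge0$. \Cref{properties:volatility:non-positive} then yields $Z^{\varepsilon,m}\le 0$.

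\textbf{Step 2: passage to the limit.} All approximating models satisfy $H^{n}_s-H^n_t\ge\eta_0(s-t)$. Moreover $\sup_n\E^{\Q^n}[H^n_t]<\infty$, since $\E[U_t]$ is bounded by the solution of a linear Volterra equation uniformly in $\varepsilon$. Letting $m\to\infty$ and then $\varepsilon\to0$, the processes $(H^n,\tilde M^n)$ converge weakly to $(H,\tilde M)$: the truncation limit holds pathwise, and the shift limit is the stability of affine Volterra equations in the kernel, \cite{AbiJaber2021}. By \cref{stability:uniformly_wellposed} we get $\int Z^n\d\langle\tilde M^n\rangle\to\int Z^F\d U$ weakly. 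Each prelimit process is non-increasing, and weak limits of non-increasing processes are non-increasing. Hence $Z^F\le0$, $\Q\otimes dU$-a.e. We must also check that the unique solutions lying between $G^{n}$ and $\eta_0^{-R_\rho}$ are the ones produced in \cref{properties:volatility:non-positive}; this follows from uniqueness in \cref{existence:uniformly_wellposed} together with \cref{existence:eta_bounded}.

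\textbf{Step 3: verification.} By \cref{verification:verification:sufficient_criteria} it suffices to show that $\mathcal E(\frac{Z^F}{F}\cdot\tilde M)$ is a $\Q$-martingale. The integrability condition on $\int F^{-1/R_\rho}$ is automatic from strong myopic well-posedness. We have the bounds $G^1\le F\le \eta_0^{-R_\rho}$, so $F$ is bounded. Because $Z^F/F\le0$, the stochastic exponential is $\mathcal E(-\phi\cdot\tilde M)$ with $\phi=-Z^F/F\ge0$. Under the measure it defines, $\tilde M$ acquires drift $-\phi\,dU$, which pushes $U$ downward in the Volterra equation; the idea is that this keeps the variance from exploding. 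For a quantitative bound, the log-dynamics of $F$ give
\[ \tfrac12\int_0^t\phi_s^2\d U_s = \log F_0-\log F_t - R_\rho\int_0^t F_s^{-1/R_\rho}\d s + R_\rho H_t - \int_0^t\phi_s\d\tilde M_s. \]
Combined with the lower bound $F\ge G^1\ge c\,(\ldots)$, this controls $\int\phi^2\,dU$ by linear functionals of $U$. We will then use the criterion of \cite[Lem.~6.1]{AbiJaber2021}: a stochastic exponential whose integrand is dominated by a multiple of the affine state has martingale property, via localisation and a uniform integrability argument on $U$ under the tilted measures. Here the sign $\phi\ge0$ provides exactly the one-sided domination needed. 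Optimality then follows from \cref{verification:verification}.

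\textbf{Main obstacle.} We expect the main obstacle to be this last martingale argument. No explicit formula for $Z^F$ is available, so the sign information must substitute for boundedness. Our first attempt will be to show $\E^{\Q}[\mathcal E_\tau]=1$ by checking that, under the candidate measure, $U$ remains a (sub)solution of the Volterra equation with reversion $\tilde\kappa$ and at most the original drift. That gives moment bounds on $U_T$ which are uniform over localising stopping times.
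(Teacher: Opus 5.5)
\textbf{Steps 1--2 ($Z^F\le 0$).} Your argument follows the paper's route (\cref{heston:non_positive_volatility}): regularise, apply \cref{properties:volatility:non-positive}, then pass to the limit in two stages with \cref{stability:uniformly_wellposed}.

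One correction is needed. Truncating $\eta$ and replacing $\nu\sqrt{\cdot}$ by a bounded Lipschitz $\sigma^\varepsilon$ are not ``equivalent'' options; you need both.
\begin{itemize}
\item The truncation makes $H$ Lipschitz, which \cref{existence:eta_bounded}, and hence \cref{properties:volatility:non-positive}, requires.
\item The regularised diffusion coefficient gives Malliavin differentiability. It also makes the random coefficient $(\sigma^\varepsilon)'(Y^\varepsilon_s)$ in the linear equation for $D_uY^\varepsilon$ bounded, which the pathwise-uniqueness/invariance argument for $D_uY^\varepsilon\ge0$ relies on.
\end{itemize}
The equation you display, with coefficient $\nu/(2\sqrt{Y_s})$, is exactly what must be avoided. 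With the regularised coefficient, the first limit is a weak (tightness) limit, not a pathwise one.

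\textbf{Step 3 (martingale property): the genuine gap.} The criterion you attribute to \cite[Lem.~6.1]{AbiJaber2021} is used in the paper only for \emph{bounded} integrands against $\tilde M$ (e.g.\ $Z^F/F=\nu\psi(\xi_t)$ in \cref{heston:complete:verification}). Nothing bounds $\phi=-Z^F/F$, and a sign is not a bound.

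Your log-identity is written under $\Q$. At best it controls $\E^{\Q}[\int\phi^2\,dU]$, and only given a lower bound on $F$, which you leave as ``$c(\ldots)$''. What matters is control under the tilted measures $\Q^n$, where $\tilde M=\tilde M^n-\int\phi\,dU$. There the identity becomes $\tfrac12\int_0^t\phi^2dU=\log F_t-\log F_0+R_\rho\int_0^tF_s^{-1/R_\rho}ds-R_\rho H_t+\int_0^t\phi\,d\tilde M^n$. The term you would have to control is therefore $\int F^{-1/R_\rho}ds$.

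If you localise by $\int\phi^2dU\ge n$, uniform moment bounds on $U$ under $\Q^n$ give $\Q^n(\sigma_n<T)\to0$ only if you also have a quantitative lower bound on $F$ in terms of the state. You do not supply one, and it is not immediate for $L^1$ kernels. Your ``main obstacle'' idea (the sign makes $U$ a subsolution under the tilted measure, which gives uniform moments) is the right one. But it needs a localisation along which the stopped exponentials are provably true martingales, and your plan does not provide one.

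\textbf{How the paper closes the gap.}
\begin{itemize}
\item It localises by $\tau_N=\inf\{U\ge N\}\wedge\inf\{\tilde M\ge N\}\wedge T$.
\item Up to $\tau_N$, a Gr\"onwall inequality for convolutions bounds $\E^{\Q}[U_s\mid\mathcal F_{t\wedge\tau_N}]$ on windows of length one. Condition (c) of \cref{properties:local_boundedness_away_from_0} then gives $F\ge G^1\ge c_N>0$ on $[0,\tau_N]$. Also $H\le\eta_0T+\eta_1N$ there.
\item Hence $\log F-R_\rho H$ solves a quadratic BSDE with bounded terminal value. Its $Z$ is BMO, and Kazamaki's criterion gives the martingale property on $[0,\tau_N]$ (\cref{properties:local_boundedness_away_from_0:martingale}).
\item Only now does your subsolution idea enter. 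Under $\Q^N$, \cref{volterra:a_priori_estimate} bounds $\E^{\Q^N}[\sup_{t\le T}U_t^2]$ uniformly in $N$, which handles $\tau^1_N$.
\item For $\tau^2_N$, the sign is used a second time: $\tilde M\le\tilde M^N$, and BDG finishes the argument.
\end{itemize}
Both the localisation in $\tilde M$ and the BMO step are missing from your proposal.
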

        \begin{proof}
                The non-positivity of $Z^F$ is given by \cref{heston:non_positive_volatility}.
				Here, we will show that $\mathcal{E}(\frac{Z^F}{F} \cdot \tilde{M})$ is a true $(\mathbb{F}^M, \Q)$-martingale. 
				This implies optimality by \cref{verification:verification,verification:verification:sufficient_criteria}.

				Fix $T > 0$, and define the sequences of stopping times $\tau^{1}_N = \inf \{ t \ge 0: U_t \ge N \} $, $\tau^2_N = \inf \{ t \ge 0: \tilde{M}_t \ge N \} $, $\tau_N = \tau^{1}_N \wedge \tau^2_N \wedge T$.
				First, we will show that $\mathcal{E}(\frac{Z^F}{F} \cdot \tilde{M})$ is a true martingale over $[0, \tau_N]$ for all $N \in \N$.
				
				To this end, fix $N \in \N$ and denote $\mu_t(s) = \E^{\Q}\left[ U_s \condbar \mathcal{F}_t \right] $.
				Set $\delta = 1$.
				Since $ \E^{\Q}[U_t^2] < \infty$ for all $t > 0$ by \cite[Lem.~3.1]{AbiJaber2021} and $U = U$, $\tilde{M}$ is a true $\Q$-martingale.
				For $t \wedge \tau_N \le s \le (t \wedge \tau_N) + \delta \le T + \delta $, we have \begin{align*}
						\mu_{t \wedge \tau_N}(s) &= 
						G_0(s) + \int_{0}^{t \wedge \tau_N} K(s-u) (-\tilde{\kappa} U_u + \nu \tilde{M}_u) \d u + \int_{t \wedge \tau_N}^{s} K(s-u) (-\tilde{\kappa} \mu_{t \wedge \tau_N}(u) + \nu \tilde{M}_{t \wedge \tau_N}) \d u \\ & \le
						G_0(s) + \int_{0}^{t \wedge \tau_N}  K(s-u) (|\tilde{\kappa}| + \nu) N \d u + \int_{t \wedge \tau_N}^{s} K(s-u) \nu N \d u \\ & \quad + |\tilde{\kappa}| \int_{t \wedge \tau_N}^{s} K(s-u) \mu_{t \wedge \tau_N}(u) \d u \\ & \le 
						G_0(T + \delta) + \int_{0}^{T+\delta} K(T+\delta-u) (|\tilde{\kappa}| + \nu) N \d u + \int_{0}^{T + \delta} K(T + \delta - u) \nu N \d u \\ & \quad + |\tilde{\kappa}| \int_{t \wedge \tau_N}^{s} K(s-u) \mu_{t \wedge \tau_N}(u) \d u \\ & =:
						C + |\tilde{\kappa}| \int_{t \wedge \tau_N}^{s} K(s-u) \mu_{t \wedge \tau_N}(u) \d u 
				.\end{align*} 
				Let $R_{|\tilde{\kappa}|}$ be the resolvent of the second kind of $|\tilde{\kappa}| K$.
				By the Grönwall lemma for convolution inequalities (see \cite[Thm.~9.8.2]{Gripenberg1990}), we have \[
						\mu_{t \wedge \tau_N}(s) \le C\left(1 + \int_{t \wedge \tau_N}^{s} R_{|\tilde{\kappa}|}(s-u) \d u \right) \le C \left(1 + \int_{0}^{T+\delta} R_{|\tilde{\kappa}|}(T + \delta - u) \d u\right) =: C'
				.\] 
				Recall that $H_t = \eta_0 t + \eta_1 U_t$.
				Hence, we have \[
						\E^{\Q}\left[ H_{s} - H_{t \wedge \tau_N} \condbar \mathcal{F}_{t \wedge \tau_N} \right] \le \E^{\Q}\left[ H_s \condbar \mathcal{F}_{t \wedge \tau_N} \right] = \eta_0 s + \eta_1 \mu_{t \wedge \tau_N}(s) \le \eta_0 (T + \delta) + \eta_1 C'
				.\]
				By condition \ref{properties:local_boudedness_away_from_0:conditional_expectation_bound} in \cref{properties:local_boundedness_away_from_0}, there exists a constant $K > 0$ s.t.\ $F \ge K$ over $[0, \tau_N]$.
				Thus, $\mathcal{E}(\frac{Z^F}{F} \cdot \tilde{M})$ is a true martingale over $[0, \tau_N]$ by \cref{properties:local_boundedness_away_from_0:martingale}.

				Define the measure $\Q^{N}$ by $\frac{\text{d}\Q^{N}}{\text{d}\Q} = \mathcal{E}(\frac{Z^F}{F} \cdot \tilde{M})_{\tau_N}$.
				Since $\mathcal{E}(\frac{Z^F}{F} \cdot \tilde{M})$ is a supermartingale, it is sufficient to prove that $ \E^{\Q}[\mathcal{E}(\frac{Z^F}{F} \cdot \tilde{M})_T] = 1$ to conclude that $\mathcal{E}(\frac{Z^F}{F} \cdot \tilde{M})$ is a true martingale over $[0, T]$.
				We have \[
						\E^{\Q}\left[\mathcal{E}\left(\frac{Z^F}{F} \cdot \tilde{M}\right)_T\right] \ge \E^{\Q}\left[\mathcal{E}\left(\frac{Z^F}{F} \cdot \tilde{M}\right)_{\tau_N} \mathbbm{1}_{\{\tau_N = T\}} \right] = \Q^{N}(\tau_N = T) = 1 - \Q^{N}(\tau_N < T)
				,\] so it remains to show that $\Q^{N}(\tau_N < T) \to 0$ as $N \to \infty$.

				By Girsanov's theorem, $U$ satisfies \[
						U_t = G_0(t) + \int_{0}^{t} K(t-s) \left(-\tilde{\kappa} U_s + \nu \int_{0}^{s} \frac{Z^F_u}{F_u} \d U_u + \nu \tilde{M}^{N}_s\right) \d s
				\] under $\Q^{N}$, where $\tilde{M}^{N} = \tilde{M} - \int_{0}^{\cdot} \frac{Z_u}{F_u} \d U_u $ is a $\Q^{N}$-local martingale with quadratic variation $U$.
				Since $Z^F \le 0$ by \cref{heston:non_positive_volatility}, we have \[
						U_t \le G_0(t) + \int_{0}^{t} K(t-s) (-\tilde{\kappa} U_s + \nu \tilde{M}^{N}_s) \d s 
				.\] 
				By \cref{volterra:a_priori_estimate}, there exists a constant $C$ that only depends on $T, G_0, \tilde{\kappa}, \nu, K$ s.t.\ \[
						\E^{\Q^{N}}\left[\sup_{t \le T} U_t^2\right] \le C
				.\] 
				Importantly, note that $C$ does not depend on $N$.

				Using Markov's inequality, we obtain \[
						\Q^{N}(\tau^{1}_N < T) \le \Q^{N}(U_T \ge N) \le \frac{1}{N^2} \E^{\Q^{N}}[U_T^2] \le \frac{C}{N^2} \to 0 \text{ as } N \to \infty
				.\] 
				Using $Z^F \le 0$, Markov's inequality, and the BDG-inequality, we obtain \begin{align*}
						\Q^{N}(\tau^{2}_N < T) &\le 
						\Q^{N}\left(\sup_{t \le T} \tilde{M}_t \ge N\right) = 
						\Q^{N}\left(\sup_{t \le T} \left(\tilde{M}^{N}_t + \int_{0}^{t} \frac{Z^F_s}{F_s} \d U_s\right) \ge N\right) \le 
						\Q^{N}(\sup_{t \le T} \tilde{M}^{N}_t \ge N) \\ & \le 
						\frac{1}{N^{4}} \E^{\Q^{N}}\left[\left(\sup_{t \le T} |\tilde{M}^{N}_t|\right)^{4}\right] \le 
						\frac{1}{N^{4}} C^{BDG}_4 \E^{\Q^{N}}[U_T^2] \le 
						\frac{C C^{BDG}_4}{N^{4}} \to 0 \text{ as } N \to \infty
				\end{align*} for some constant $C^{BDG}_4$.
				Together, we have \[
						\Q^{N}(\tau_N < T) \le \Q^{N}(\tau^{1}_N < T) + \Q^{N}(\tau^2_N < T) \to 0 \text{ as } N \to \infty
				.\qedhere\] 
		\end{proof}

        \subsection{Numerical Illustration and Comparative Statics in the Hurst Parameter}

		We now provide a numerical illustration of how the roughness of volatility affects optimal investment and consumption.
        \Cref{heston:fig:consumption} shows the sub- and supersolution for the optimal consumption rate using the fractional kernel $K_h(t) = \frac{t^{h-\frac{1}{2}}}{\Gamma\left(h+\frac{1}{2}\right)}$ and parameters from \cref{heston:table:parameters} over the range that corresponds to $10\%$--$25\%$ initial volatility.
        Note that the parameters satisfy $\tilde{\kappa} = 0.195 > 0$, so we can use the tighter supersolution from \cref{subsection:heston:incomplete}.
        The sub- and supersolutions are quite close to each other, yielding relatively tight bounds for the optimal consumption rate. 
        \Cref{heston:fig:portfolio} shows the optimal portfolio for the sub- and supersolution.
        While it is generally not clear that these yield upper and lower bounds on the optimal portfolio, they should still provide a good approximation for the optimal portfolio.
        
        Like the myopic consumption rate, the optimal consumption rate is approximately linear in the initial variance $Y_0$ over the economically relevant range. 
        The slope of the optimal consumption rate is always smaller than the slope of the myopic consumption rate. 
        This is because (unlike in the myopic case) the agent takes the expected mean reversion of the variance process into account, which smooths out the consumption.
        The slope increases as $h$ gets smaller, which means that the agent adjusts their consumption more strongly to the initial level of the variance the rougher the volatility is.

        The optimal risky asset allocation can be decomposed into the myopic portfolio $\frac{\Lambda}{R}$ and the intertemporal hedge $\frac{1}{R_\rho} \rho \frac{Z^F}{F}$.
        Similar to the myopic portfolio, the optimal risky asset allocation appears to be almost constant over the economically relevant range of the initial variance.
        Since $\rho \le 0$ and $Z^F \le 0$, the intertemporal hedging demand is always positive, so the agent invests a higher fraction of wealth into the risky asset than under the myopic strategy.
        As $h$ decreases and the volatiliy becomes more rough, the agent's intertemporal hedging demand decreases, and they invest less wealth into the risky asset.
        
		\begin{figure}[htb]
				\centering

				\begin{minipage}[t]{.45\textwidth}
						\includegraphics[width=0.95\textwidth]{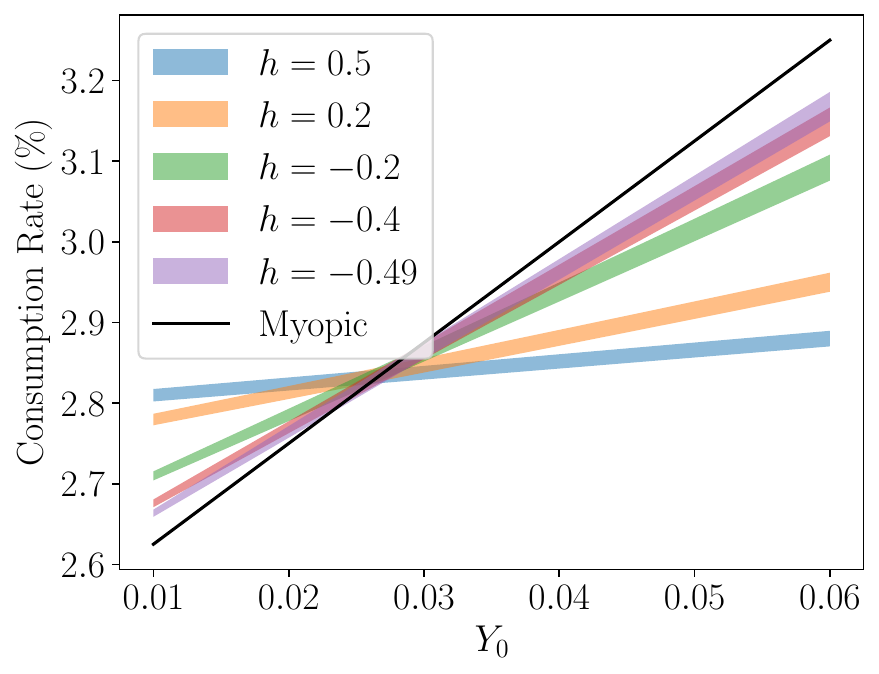}
						\caption{Upper and lower bounds on the optimal consumption rate at $t = 0$. The slope increases as $h$ decreases.}
						\label{heston:fig:consumption}		
				\end{minipage}%
                \hspace{0.05\textwidth} %
				\begin{minipage}[t]{.45\textwidth}
						\includegraphics[width=0.95\textwidth]{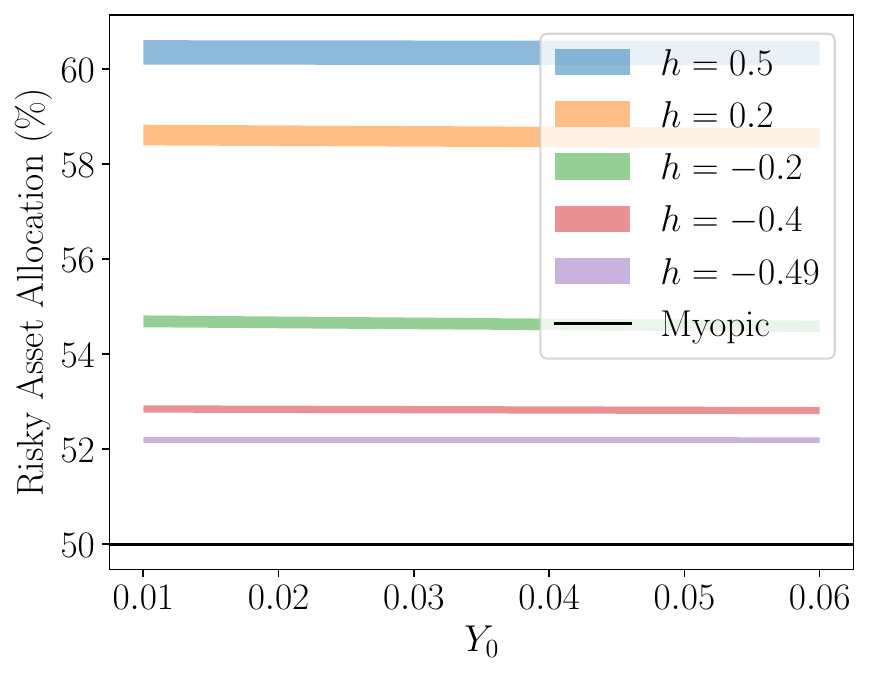}
						\caption{Approximate optimal portfolio at $t = 0$. The allocation decreases as $h$ decreases.}
						\label{heston:fig:portfolio}		
				\end{minipage}
		\end{figure}

        \begin{table}[htb]
				\centering
				\begin{tabular}{cccccccc}
						\toprule
						$R$   & $\delta$ & $r$    & $\Lambda$ & $\kappa$ & $\theta$ & $\nu$ & $\rho$ \\
						\midrule
						$2$ & $0.03$   & $0.02$ & $1$    & $0.3$    & $0.02$    & $0.3$ & $-0.7$ \\
						\bottomrule
				\end{tabular}
				\caption{Parameters of the Volterra Heston model, adapted from \cite[Table~4]{AbiJaber2019a}.}
				\label{heston:table:parameters}
		\end{table}

        \appendix

        \section{Auxiliary Results and Proofs}
        \label{appendix:auxiliary}

        \begin{proof}[Proof of \cref{existence:dynamics}]
            First, let $F$ be a solution to \eqref{hjb}.
            Set $ N_t = \int_{0}^{t} D_s R_{\rho} F_s^{1-\frac{1}{R_{\rho}}} \d s + D_t F_t $.
            Since $F$ is a solution to \eqref{hjb}, we have \[
                    N_t = \int_{0}^{t} D_s R_{\rho} F_s^{1-\frac{1}{R_{\rho}}} \d s + D_t \E^{\Q}\left[\int_{t}^{\infty} \frac{D_s}{D_t} R_{\rho} F_s^{1-\frac{1}{R_{\rho}}} \condbar \mathcal{F}_t\right] = \E^{\Q}\left[\int_{0}^{\infty} D_s R_{\rho} F_s^{1-\frac{1}{R_{\rho}}} \d s \condbar \mathcal{F}_t \right] 
            ,\] so $N$ is a (uniformly integrable) $(\mathbb{F}, \Q)$-martingale.
            By \cref{setting:assumption:M-adapted}, $D$ (and hence the entire integrand) is $\mathbb{F}^{M}$-adapted.
            Using that $\mathbb{F}^{M}$ is immersed in $\mathbb{F}$ by \cref{setting:assumption:M-PRP}, we thus have \[
                    N_t = \E^{\Q}\left[ \int_{0}^{\infty} D_s R_{\rho} F_s^{1-\frac{1}{R_{\rho}}} \d s \condbar \mathcal{F}^{M}_t \right] 
            ,\] so $N$ is also an $(\mathbb{F}^{M}, \Q)$-martingale.
              By \cref{setting:assumption:M-PRP}, there exists an $\mathbb{F}^M$-predictable process $Z^F$ s.t. \ \[
                    dN_t = D_t R_{\rho} F_t^{1-\frac{1}{R_{\rho}}} dt - D_t R_{\rho} F_t dH_t + D_t dF_t = D_t Z^F_t d\tilde{M}_t
            ,\] i.e. \[
                    dF_t = -R_{\rho} F^{1-\frac{1}{R_{\rho}}}_t dt + R_{\rho} F_t dH_t + Z^F_t d\tilde{M}_t
            .\] 
            Moreover, $\int_{0}^{\cdot } D_s Z^F_s \d{} \tilde{M}_s = N - N_0$ is a true martingale.
            Finally, we have \[
                    \E^{\Q}\left[D_T F_T \condbar \mathcal{F}_t \right] = \E^{\Q}\left[N_T - \int_{0}^{T} D_s R_{\rho} F_s^{1-\frac{1}{R_{\rho}}} \d s \condbar \mathcal{F}_t \right] = \E^{\Q}\left[\int_{T}^{\infty} D_s R_{\rho} F_s^{1-\frac{1}{R_{\rho}}} \d s \condbar \mathcal{F}_t \right] \to 0 \text{ as } T \to \infty
            \] by dominated convergence.

            Now, let $F, Z$ be processes that satisfy \ref{existence:dynamics:drift}-\ref{existence:dynamics:transversality}, and define $N$ as above.
            By \ref{existence:dynamics:drift} and \ref{existence:dynamics:martingale}, $N$ is a true martingale.
            Hence, we have \[
                    D_t F_t = \E^{\Q}\left[ D_T F_T + \int_{t}^{T} D_s R_{\rho} F_s^{1-\frac{1}{R_{\rho}}} \d s \condbar \mathcal{F}_t \right]
            .\] 
            Taking limits along an approximating sequence in \ref{existence:dynamics:transversality} and using monotone convergence, we obtain \[
                    D_t F_t = \E^{\Q}\left[ \int_{t}^{\infty} D_s R_{\rho} F_s^{1-\frac{1}{R_{\rho}}} \condbar \mathcal{F}_t \right]
            ,\] so $F$ is a solution to \eqref{hjb}.
        \end{proof}

        Next, we show that in order for two solutions to be ordered, it is sufficient for them to be ordered up to a constant factor.
		This will be useful for showing uniqueness of the solution.

		\begin{lemma}
				\label{existence:order}
				Assume that $R > 1$ and let $F, G \in \mathcal{D}(T)$ be solutions to \eqref{hjb}. 
				Suppose that either of the following conditions is satisfied:
				\begin{enumerate}[(a)]
						\item \label{existence:order:linear}
								$F \le CG$ for some $C > 0$, or
						\item \label{existence:order:power}
								$F \le CG^{r}$ for some $C > 0$, $r \in (0, 1)$, and $T^{n} \one \to F$.
				\end{enumerate}
				Then $F \le G$.
		\end{lemma}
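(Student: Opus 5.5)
Both parts rest on the homogeneity and monotonicity of $T$ from \cref{existence:operator:properties}. Set $\gamma = 1-\frac{1}{R_\rho} \in [0,1)$, which is legitimate since $R>1$ gives $R_\rho \ge 1$. The case $R_\rho = 1$ has to be treated separately: then $\gamma=0$, $TF$ does not depend on $F$, and any two solutions satisfy $F = T\one = G$. For the rest of the argument assume $\gamma \in (0,1)$.

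For \ref{existence:order:linear}, suppose $F \le CG$. We may assume $C \ge 1$, since otherwise $F \le G$ already. Monotonicity and homogeneity give
\[
F = TF \le T(CG) = C^{\gamma} TG = C^{\gamma} G .
\]
Iterating this $n$ times yields $F \le C^{\gamma^n} G$. Because $\gamma^n \to 0$, we have $C^{\gamma^n} \to 1$, and hence $F \le G$ pointwise. The only checks needed are that $CG \in \mathcal{D}(T)$, which follows from homogeneity because $G \in \mathcal{D}(T)$, and that $C^{\gamma^k}G \in \mathcal{D}(T)$ at every step, which holds for the same reason.

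For \ref{existence:order:power}, the plan is to compare the iterates $T^n \one$ with $G$. First I would track how a bound of the form $\one \le c\, G^{s}$ propagates under $T$. This is where the power restriction makes a direct comparison impossible, and where the hypothesis $T^n\one \to F$ is used.

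The key observation is that $F \le CG^r$ gives $G \ge (F/C)^{1/r}$, so $G$ is bounded below by a power of $F$. I would rather work with the iterates directly. From $T^n\one \to F$ and the monotone structure, I expect $T^n \one$ to be eventually sandwiched, roughly $T^n\one \le 2F$, though this is not uniform a priori. To avoid that problem I would use a Hölder-type bound with \cref{existence:operator:properties}\ref{existence:operator:properties:hoelder}: for any process $X$ and any $\theta \in [0,1]$, apply it with $p = 1/\theta$, or use concavity directly, to get
\[
T(X^{\theta}) \le (TX)^{\theta}\, (T\one)^{1-\theta}.
\]
Starting from $F \le CG^r$ and iterating gives $T^{n}F \le$ a product of powers of $C$, $G$ and $T^k\one$ terms. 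Since $F = T^nF$ and $G = T^nG$, the exponent of $C$ shrinks like $\gamma^n$, while the exponents redistribute mass between $G$ and $\one$-iterates.

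The main obstacle is controlling the $T^k\one$ factors. Here I would use $T^n\one \to F$ together with $F \le CG^r$ to bound these factors again by powers of $G$, and then bootstrap the exponent $r$ upward to $1$. At that point part \ref{existence:order:linear} applies, giving $F \le C'G$ and hence $F \le G$. If the Hölder route fails, the fallback is to argue directly: $T^n\one \le \max(1,C_n)$-type bounds relative to $G^{r_n}$ with $r_n \to 1$, followed by a limit.
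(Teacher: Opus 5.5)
Your treatment of the case $R_\rho=1$ and of part (a) is correct and matches the paper's argument exactly. Part (b), however, is not proved. You pick the right tool (the Hölder inequality of \cref{existence:operator:properties}), but you misread what the iteration produces, and the closing step you propose fails as stated.

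You want to \enquote{bound the $T^k\one$ factors by powers of $G$} using $T^n\one\to F\le CG^r$. As you note yourself, pointwise convergence gives no uniform bound on the finite iterates, so such a bound is only available after passing to the limit. The bootstrap then ends with an invalid appeal to part (a). Every bootstrapped bound $F\le C_kG^{r_k}$ has $r_k<1$, and turning $G^{r_k}$ into a constant multiple of $G$ would require $G$ to be bounded away from zero. That is not assumed. It is also not available where the lemma is used: in the proof of \cref{existence:uniformly_wellposed}, $G=\ubar{F}$ is only known to dominate $G^1$, which need not be bounded away from zero. This is precisely why (b) cannot be reduced to (a).

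The missing idea is to apply the Hölder inequality with $p=\frac{1}{1-r}$, $q=\frac1r$ to the product $(T^n\one)^{1-r}G^r$, not just your one-factor version with $\one$, which only handles the first step. Combined with homogeneity, monotonicity and $TG=G$, this gives $T\bigl((T^n\one)^{1-r}G^r\bigr)\le (T^{n+1}\one)^{1-r}G^r$. By induction,
\[
F \le C^{\gamma^n}\,(T^n\one)^{1-r}\,G^r \qquad \text{for all } n\in\N.
\]
There is a single factor $T^n\one$ with fixed exponent $1-r$; there is no redistribution among several $T^k\one$ terms. The hypothesis $T^n\one\to F$ is tailored exactly to this factor. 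Letting $n\to\infty$ pointwise gives $F\le F^{1-r}G^r$, i.e.\ $F^r\le G^r$, hence $F\le G$. This is the paper's proof, and it needs neither a bootstrap nor an appeal to (a).
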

		\begin{proof}
				If $R_{\rho} = 1$, \eqref{hjb} has a unique solution, so $F = G$. 
				Assume now that $R_{\rho} > 1$.

				For Case \ref{existence:order:linear}, notice that by \cref{existence:operator:properties} we have \[
						F = TF \le T (CG) = C^{1-\frac{1}{R_{\rho}}} TG = C^{1-\frac{1}{R_{\rho}}} G
				.\] 
				Inductively, this yields that $F \le C^{\left( 1-\frac{1}{R_{\rho}} \right)^{n} } G$ for all $n \in \N$.
				Since $R_{\rho} > 1$, $(1 - \frac{1}{R_{\rho}})^{n} \to 0$ as $n \to \infty$, and so we obtain $F \le G$.

				In case \ref{existence:order:power}, the Hölder inequality of \cref{existence:operator:properties} with $p=\frac{1}{1-r}$ and $q = \frac{1}{r}$ yields \[
						F = TF \le T(CG^{r}) = C^{1-\frac{1}{R_{\rho}}} TG^{r} \le C^{1-\frac{1}{R_{\rho}}} (T \one)^{1-r} (TG)^{r} = C^{1-\frac{1}{R_{\rho}}} (T \one)^{1-r} G^{r}
				.\] 
				Iterating this yields \[
						F \le C^{(1-\frac{1}{R_{\rho}})^{n}} (T^{n} \one)^{1-r} G^{r} 
				\] for all $n \in \N$.
				Taking the limit $n \to \infty$, we obtain $F \le F^{1-r} G^{r}$, and thus $F \le G$.
		\end{proof}

        The next lemma collects some conditions under which the local boundedness away from zero required in \cref{properties:local_boundedness_away_from_0:martingale} can be verified in the complete market case (which is a subsolution in the incomplete case).

        \begin{lemma}
				\label{properties:local_boundedness_away_from_0}
				Assume that \[
						F_t = \E^{\Q}\left[ \int_{t}^{\infty} \exp(-(H_s - H_t)) \d s \condbar \mathcal{F}_t \right] 	 
				\] is well-defined.
				Let $\tau$ be an $\mathbb{F}$-stopping time, and assume that one of the following conditions is satisfied:
				\begin{enumerate}[(a)]
						\item \label{properties:local_boundedness_away_from_0:norm_bound} There exist constants $C, \delta, \varepsilon > 0$ s.t.\ $\Q\left(\sup_{t \wedge \tau \le s \le (t \wedge \tau) + \delta} H_s - H_{t \wedge \tau} \le C \condbar \mathcal{F}_{t \wedge \tau}\right) \ge \varepsilon$ a.s.
						\item \label{properties:local_boundedness_away_from_0:stopping_time} There exist an $\mathbb{F}$-stopping time $\sigma \ge \tau$ and constants $C, \delta, \varepsilon > 0$ s.t.\ $\Q\left(\sigma - \tau \ge \delta \condbar \mathcal{F}_{t \wedge \tau}\right) \ge \varepsilon$ a.s.\ and $\sup_{t \wedge \tau \le s \le ((t \wedge \tau) + \delta) \wedge \sigma} H_s - H_{t \wedge \tau} \le C$ a.s.
						\item \label{properties:local_boudedness_away_from_0:conditional_expectation_bound} There exist constants $C, \delta > 0$ s.t.\ $ \E^{\Q}\left[ H_s - H_{t \wedge \tau} \condbar \mathcal{F}_{t \wedge \tau} \right] \le C$ a.s.\ for $t \wedge \tau \le s \le (t \wedge \tau) + \delta$.
				\end{enumerate}	
				Then there exists a constant $K > 0$ s.t. $F_{t \wedge \tau} \ge K$ for all $t \ge 0$.
		\end{lemma}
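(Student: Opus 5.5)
Write $\tau_t := t \wedge \tau$. The plan is to bound $F_{\tau_t}$ from below by restricting the integral to the window $[\tau_t, \tau_t+\delta]$. Since the integrand is non-negative,
\[
F_{\tau_t} \ge \E^{\Q}\left[ \int_{\tau_t}^{\tau_t+\delta} \exp\bigl(-(H_s - H_{\tau_t})\bigr) \d s \condbar \mathcal{F}_{\tau_t} \right].
\]
The conditional expectation is evaluated at the stopping time $\tau_t$. This is legitimate because $F$ is defined as an optional projection, so $F_{\tau_t}$ equals the conditional expectation given $\mathcal{F}_{\tau_t}$ of the integral started at $\tau_t$, by the optional projection property at stopping times. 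Each of the three conditions then controls this window integral from below by a deterministic positive constant.

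\textbf{Case (a).} On the event $A = \{\sup_{\tau_t \le s \le \tau_t+\delta} (H_s - H_{\tau_t}) \le C\}$, the integrand is at least $e^{-C}$, so the window integral is at least $\delta e^{-C}$. Since $\Q(A \mid \mathcal{F}_{\tau_t}) \ge \varepsilon$, this gives $F_{\tau_t} \ge \varepsilon \delta e^{-C}$.

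\textbf{Case (b).} Restrict further to $[\tau_t, (\tau_t+\delta)\wedge\sigma]$. There the integrand is at least $e^{-C}$ almost surely, so the integral is at least $e^{-C}\bigl(((\tau_t+\delta)\wedge\sigma) - \tau_t\bigr)$. On the event $\{\sigma - \tau \ge \delta\}$ this length is $\delta$: indeed $\tau_t \le \tau$, so either $\tau_t = \tau$, or $\tau_t = t < \tau \le \sigma - \delta$. Taking conditional expectations gives $F_{\tau_t} \ge \varepsilon\delta e^{-C}$.

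\textbf{Case (c).} Apply conditional Jensen to the convex map $x\mapsto e^{-x}$, after first using Fubini to move the conditional expectation inside the $\d s$-integral:
\[
F_{\tau_t} \ge \int_0^\delta \E^{\Q}\left[e^{-(H_{\tau_t+u}-H_{\tau_t})}\condbar\mathcal{F}_{\tau_t}\right] \d u \ge \int_0^\delta \exp\left(-\E^{\Q}[H_{\tau_t+u}-H_{\tau_t}\mid\mathcal{F}_{\tau_t}]\right)\d u \ge \delta e^{-C}.
\]
The conditional Fubini step needs care when the upper limit $\tau_t+\delta$ is random. The plan is to substitute $s = \tau_t+u$ with $u\in[0,\delta]$ deterministic, which is valid since $\tau_t$ is $\mathcal{F}_{\tau_t}$-measurable. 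The hypothesis must be read as holding for $s = \tau_t+u$ with $u \in [0,\delta]$; since it is stated for all $s$ in that random interval, this is the intended reading.

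In all three cases we obtain a constant $K$ (either $\varepsilon\delta e^{-C}$ or $\delta e^{-C}$) that does not depend on $t$, which is the claim. The main technical point is justifying the evaluation of the optional projection at the stopping time $\tau_t$ and the conditional Fubini/Jensen step in (c). Everything else is elementary.
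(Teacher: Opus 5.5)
Your proposal is correct and follows the paper's proof essentially verbatim. In (a) and (b) you restrict to the window $[t\wedge\tau,(t\wedge\tau)+\delta]$ (truncated at $\sigma$ in (b)) and bound the integrand by $e^{-C}$ on the relevant event to get $\varepsilon\delta e^{-C}$; in (c) you use Fubini--Tonelli and conditional Jensen to get $\delta e^{-C}$. Your remarks on evaluating the optional projection at $t\wedge\tau$ and on substituting $s=(t\wedge\tau)+u$ in (c) only make explicit what the paper uses tacitly.
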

		\begin{proof}
				Fix $t \ge 0$.
				In case \ref{properties:local_boundedness_away_from_0:norm_bound}, we have 
				\begin{align*}
						F_{t \wedge \tau} &= 
						\E^{\Q}\left[ \int_{t \wedge \tau}^{\infty} \exp( -(H_s - H_{t \wedge \tau})) \d s \condbar \mathcal{F}_{t \wedge \tau} \right] \\ & \ge 
						\E^{\Q}\left[ \int_{t \wedge \tau}^{(t \wedge \tau) + \delta} \exp(-(H_s - H_{t \wedge \tau})) \d s \condbar \mathcal{F}_{t \wedge \tau} \right] \\& \ge 
						\E^{\Q}\left[ \mathbbm{1}_{\{\sup_{t \wedge \tau \le s \le (t \wedge \tau) + \delta} H_s - H_{t \wedge \tau} \le C\}} \int_{t \wedge \tau}^{(t \wedge \tau) + \delta} \exp(-(H_s - H_{t \wedge \tau})) \d s \condbar \mathcal{F}_{t \wedge \tau} \right] \\& \ge 
						\delta e^{-C} \Q\left( \sup_{t \wedge \tau \le s \le (t \wedge \tau) + \delta} H_s - H_{t \wedge \tau} \le C \condbar \mathcal{F}_{t \wedge \tau} \right) \ge 
						\varepsilon \delta e^{-C}
				.\end{align*}
				Similarly, in case \ref{properties:local_boundedness_away_from_0:stopping_time}, we have \[
						F_{t \wedge \tau} \ge \E^{\Q}\left[ \mathbbm{1}_{\{\sigma - \tau \ge \delta\}} \int_{t \wedge \tau}^{((t \wedge \tau) + \delta) \wedge \sigma} \exp(-(H_s - H_{t \wedge \tau})) \d s \condbar \mathcal{F}_{t \wedge \tau} \right] \ge \varepsilon \delta e^{-C}
				.\]	
				Finally, in case \ref{properties:local_boudedness_away_from_0:conditional_expectation_bound}, applying Fubini-Tonelli and Jensen's inequality yields \[
						F_{t \wedge \tau} \ge \int_{t \wedge \tau}^{(t \wedge \tau) + \delta} \exp\left( - \E^{\Q}\left[ H_s - H_{t \wedge \tau} \condbar \mathcal{F}_{t \wedge \tau} \right] \right)  \d s \ge \delta e^{-C}
				.\qedhere\] 
		\end{proof}

        \begin{remark}
				\label{properties:local_boundedness_away_from_0:examples}
				Condition \ref{properties:local_boundedness_away_from_0:norm_bound} is satisfied if $H$ (or alternatively $Y$ in the absolutely continuous setting of \cref{setting:typical_factor_model}) has sufficient path-regularity, e.g.\ Hölder-continuous with a.s.\ bounded Hölder-constant.

				Condition \ref{properties:local_boundedness_away_from_0:stopping_time} is fulfilled in Markovian models:
				Assume that we are in the absolutely continuous setting of \cref{setting:typical_factor_model} with $\eta \in L^{\infty}_{loc}$, and that $Y$ is a time-homogeneous strong Markov process with continuous sample paths. 
				Fix $T > 0$ and $N \in \N$, and set $\tau = \inf \{t: |Y_t| \ge N\} \wedge T $ and $\sigma = \inf \{t: |Y_t| \ge N+1\} \wedge (T+1)$.
				Since $\Q\left(\sigma - \tau \ge \delta \condbar \mathcal{F}_{t \wedge \tau}\right) \ge \Q_N(\sigma \ge \delta) \wedge \Q_{-N}(\sigma \ge \delta)$ (where $\Q_x$ is the measure under which $Y_0 = x$), we can for any $\varepsilon \in (0, 1)$ find a $\delta > 0$ with $\Q\left(\sigma - \tau \ge \delta \condbar \mathcal{F}_{t \wedge \tau}\right) \ge \varepsilon$ a.s. 
				Clearly, we also have $\tau \le \sigma$ and $\sup_{t \wedge \tau \le s \le ((t \wedge \tau) + \delta) \wedge \sigma} H_s - H_{t \wedge \tau} \le \delta \|\eta\|_{L^{\infty}([-N-1, N+1])}$.

				Condition \ref{properties:local_boudedness_away_from_0:conditional_expectation_bound} is used in the Volterra Heston model, see \cref{section:heston}.
		\end{remark}
        
        Using an approximation argument, one can use \cref{properties:volatility:non-positive} to show that the $Z$-component of the BSDE is non-positive in the Volterra Heston model.

		\begin{proposition}
				\label{heston:non_positive_volatility}
				Let $F$ be the solution to \eqref{hjb} from \cref{existence:uniformly_wellposed} with control process $Z^F$.
				Then $Z^F$ satisfies $Z^F \le 0$.
		\end{proposition}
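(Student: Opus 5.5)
We prove this by approximation: regularise the kernel and truncate the clock, apply \cref{properties:volatility:non-positive} to each regularised model, and pass to the limit with the stability result \cref{stability:uniformly_wellposed}. A non-increasing limit of non-increasing processes stays non-increasing.

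\emph{Regularisation.} Replace $K$ by the shifted kernel $\Delta_\varepsilon K = K(\cdot+\varepsilon)$, which by the standing assumptions satisfies the same structural conditions. Being continuously differentiable and bounded on $[0,\infty)$, it lies in $L^2_{\mathrm{loc}}$ and admits a strong solution. The corresponding integrated variance is then $U^\varepsilon_t = \int_0^t Y^\varepsilon_s \d s$ with a variance process $Y^\varepsilon \ge 0$. Under $\Q$ this solves a stochastic Volterra equation driven by a $\Q$-Brownian motion $W^\Q$, with $d\tilde M^\varepsilon_t = \sqrt{Y^\varepsilon_t}\, dW^\Q_t$, so that $\sigma = \sqrt{Y^\varepsilon} \ge 0$.

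Since $\eta^\varepsilon_t = \eta_0 + \eta_1 Y^\varepsilon_t$ is unbounded, we truncate further. Take $\eta^{\varepsilon,m}_t = \eta_0 + \eta_1 (Y^\varepsilon_t \wedge m)$, or better, use a smooth non-decreasing bounded $\phi_m$ with $\phi_m(y) \uparrow y$. Then $\eta_0 \le \eta^{\varepsilon,m} \le \eta_0 + \eta_1 m$.

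To apply \cref{properties:volatility:non-positive}, we need $\eta^{\varepsilon,m} \in \mathbb{L}^{1,2}_{\mathrm{loc}}$ with $D_u \eta^{\varepsilon,m}_t \ge 0$. By the chain rule, $D_u\eta^{\varepsilon,m}_t = \eta_1 \phi_m'(Y^\varepsilon_t)\, D_u Y^\varepsilon_t$, so it suffices to show $D_u Y^\varepsilon_t \ge 0$. Formally, $D_u Y^\varepsilon_t$ solves the linear Volterra equation
\[
D_u Y_t = K_\varepsilon(t-u)\nu\sqrt{Y_u} + \int_u^t K_\varepsilon(t-s)\bigl(-\tilde\kappa D_u Y_s\, ds + \tfrac{\nu}{2\sqrt{Y_s}} D_u Y_s\, dW^\Q_s\bigr).
\]
Its non-negativity should follow from a comparison principle for Volterra equations with non-negative, non-increasing kernels having non-negative, non-increasing resolvent, as in \textcite{AbiJaber2021}.

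The square root is not Lipschitz at $0$, so Malliavin differentiability itself is delicate. We would therefore also regularise the diffusion coefficient, replacing $\sqrt{y}$ by a smooth non-decreasing Lipschitz $\sqrt{y}$-approximation that is bounded below near zero, and remove this regularisation in the limit. Under this double regularisation, the theorem yields $Z^{\varepsilon,m} \le 0$. Equivalently, the covariation $\langle F^{\varepsilon,m}, \tilde M^{\varepsilon,m}\rangle$ is non-increasing.

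\emph{Passage to the limit.} All regularised models satisfy $H^n_s - H^n_t \ge \eta_0(s-t)$ with the same $\eta_0$. Moreover, $\sup_n \E^{\Q^n}[H^n_t] < \infty$ follows from uniform moment bounds for Volterra Heston equations, as in \cite[Lem.~3.1]{AbiJaber2021}. For weak convergence $(H^n,\tilde M^n) \to (H,\tilde M)$ we would invoke tightness together with the uniqueness in law of \eqref{heston:eq_iv}, with stability of Volterra equations under $\Delta_\varepsilon K \to K$ in $L^1_{\mathrm{loc}}$ as in \textcite{AbiJaber2021}, and truncation removal via $m\to\infty$.

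\cref{stability:uniformly_wellposed} then gives $\int Z^n \d U^n \to \int Z^F \d U$ weakly over $C$. The set of non-increasing paths is closed in $C$, so $\int Z^F\d U$ is non-increasing and $Z^F \le 0$ holds $\Q\otimes dU$-a.e.

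\emph{Main obstacle.} The hard step is the Malliavin sign condition $D_uY^\varepsilon_t \ge 0$, which requires a comparison theorem for linear stochastic Volterra equations with singular diffusion coefficient. We would attack it by first regularising the square root, then using the resolvent representation of the linear equation together with positivity of the resolvent of $\tilde\kappa K_\varepsilon$. The interchange of the three limits (diffusion smoothing, $m\to\infty$, $\varepsilon\to 0$) should be harmless, since each one only uses \cref{stability:uniformly_wellposed}.
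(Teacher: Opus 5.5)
Your architecture matches the paper's. You regularise the model (nice kernel, Lipschitz bounded diffusion coefficient, truncated $\eta$), get $Z\le 0$ from \cref{properties:volatility:non-positive}, and carry the sign through \cref{stability:uniformly_wellposed}, using that non-increasing paths form a closed set in $C$. The paper needs only two limits instead of your three. For a non-singular, locally Lipschitz kernel it couples the diffusion smoothing and the truncation into one parameter, $\sigma^\varepsilon(y)=\nu\frac{\sqrt{y+\varepsilon}-\sqrt{\varepsilon}}{1+\varepsilon(\sqrt{y+\varepsilon}-\sqrt{\varepsilon})}$ and $\eta^\varepsilon(y)=\eta_0+\eta_1(y\wedge\varepsilon^{-1})$. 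It then passes from $\Delta_\varepsilon K$ to $K$. Your sequential limits are equally fine, since each one only transfers the sign.

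The step where your plan would not go through as written is the key one, $D_uY^\varepsilon_t\ge 0$. The resolvent of $\tilde\kappa K_\varepsilon$ only removes the drift term. The stochastic convolution $\int_u^t K_\varepsilon(t-s)(\sigma^\varepsilon)'(Y^\varepsilon_s)\,D_uY^\varepsilon_s\,dW^{\Q}_s$ remains, is implicit in the unknown, and has no sign, so positivity of a deterministic resolvent cannot conclude.

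The paper instead views $(D_uY^\varepsilon_t)_{t\ge u}$ as the solution of a linear stochastic Volterra equation. Its coefficients $x\mapsto -\tilde\kappa x$ and $x\mapsto (\sigma^\varepsilon)'(Y^\varepsilon_s)x$ vanish at $x=0$. Its input curve is $K(\cdot-u)\sigma^\varepsilon(Y^\varepsilon_u)=(K*\delta_u)\,\sigma^\varepsilon(Y^\varepsilon_u)$, which is admissible. An invariance theorem for Volterra equations on $\R_+$ (\cite[Thm.~A.2]{abi2019markovian}, adapted to a random input curve and a random coefficient) then yields a non-negative solution. Pathwise uniqueness identifies that solution with $D_uY^\varepsilon$; this needs $(\sigma^\varepsilon)'$ bounded, so that the Lipschitz constant is deterministic. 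Your first suggestion, a comparison or invariance principle under the kernel and resolvent conditions, is the right tool; the resolvent-representation route is not.

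Two smaller points:
\begin{itemize}
\item Let the smoothed coefficient vanish at $0$, as $\sigma^\varepsilon$ does, rather than be bounded below by a positive constant there. Otherwise $Y^\varepsilon$ can leave $\R_+$, and you must separately show that the weak limit is the non-negative Volterra Heston solution.
\item You listed $\eta\in\mathbb{L}^{1,2}_{\mathrm{loc}}$ as a requirement but only addressed the sign. The paper obtains it from the bound $\sup_{0\le u\le t\le T}\E^{\Q}[|D_uY^\varepsilon_t|^2]<\infty$, which uses that $\sigma^\varepsilon$, and hence the input curve, is bounded.
\end{itemize}
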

		\begin{proof}
				\textbf{Step 1:}
				Assume that $K$ is non-singular (i.e.\ $K(0) < \infty$) and locally Lipschitz.
				Since $K$ is non-singular, we have $K \in L^2_{\mathrm{loc}}$.
				Hence, we can work with the spot variance process in the setting of \cref{setting:typical_factor_model} instead of with the integrated processes.
				We will approximate the Volterra Heston model by a sequence of regularised models.
				To this end, let $b(y) = -\tilde{\kappa} (y - \tilde{\theta})$, let $\sigma^{\varepsilon}(y) = \nu \frac{\sqrt{y+\varepsilon} - \sqrt{\varepsilon}}{1 + \varepsilon (\sqrt{y+\varepsilon} - \sqrt{\varepsilon})}$, and set $\eta^{\varepsilon}(y) = \eta_0 + \eta_1 (y \wedge \varepsilon^{-1})$.
                Then $\sigma^\varepsilon$ is globally Lipschitz, bounded, and satisfies $\sigma^\varepsilon \to \nu \sqrt{\cdot} $ uniformly over compact sets as $\varepsilon \to 0$.
				Consider the regularised factor process (under $\Q$) \[
						Y^{\varepsilon}_t = Y^{\varepsilon}_0 + \int_{0}^{t} K(t-s) b(Y^{\varepsilon}_s) \d s + \int_{0}^{t} K(t-s) \sigma^{\varepsilon}(Y^{\varepsilon}_s) \d B^{\Q}_s
				\] for some $\Q$-Brownian motion $W^{\Q}$.
				Since the coefficients $b, \sigma^{\varepsilon}$ are Lipschitz, this equation has a unique non-negative strong solution by \cite[Thm.~3.3,3.6]{AbiJaber2019}.
				Since the $Y^{\varepsilon}$ are strong solutions, we may assume that they are defined on a common probability space w.r.t.\ a common driving Brownian motion. 
				Moreover, the $Y^{\varepsilon}$ are adapted to the Brownian filtration.

				We consider the model with $\tilde{M}^{\varepsilon} = \int_{0}^{\cdot} \frac{1}{\nu} \sigma^{\varepsilon}(Y^{\varepsilon}_s) \d W^{\Q}_s$ and $H^{\varepsilon} = \int_{0}^{\cdot} \eta^{\varepsilon}(Y^{\varepsilon}_s) \d s$.
				Denote the solution to \eqref{hjb} in this model by $F^{\varepsilon}$ with control process $Z^{\varepsilon}$.
				Notice that $H^{\varepsilon}$ is $(\eta_0 + \eta_1 \varepsilon^{-1})$-Lipschitz.
				Hence, $F^\varepsilon$ is bounded away from $0$ (see \cref{existence:eta_bounded}).
				By \cite[Thm.~4.3]{Coutin2001}, $Y^{\varepsilon}$ is Malliavin-differentiable, and for $u \le t$ we have \[
						D_u Y^{\varepsilon}_t = K(t-u) \sigma^{\varepsilon}(Y^{\varepsilon}_u) + \int_{u}^{t} K(t-s) (-\tilde{\kappa} D_u Y^{\varepsilon}_s) \d s + \int_{u}^{t} K(t-s) (\sigma^\varepsilon)'(Y^\varepsilon_s) D_u Y^{\varepsilon}_s \d B^{\Q}_s
				.\] 
                We now show that $D_uY^\varepsilon$ is nonnegative using a straightforward adaption of invariance results for stochastic Volterra equations on $\mathbb{R}_+$ as in \cite[Theorem A.2]{abi2019markovian} . Fix $u \geq 0$. The process $(D_uY^\varepsilon_t)_{t\ge u}$ satisfies the stochastic Volterra equation
\begin{align}\label{eq:Xvolt}
    X_t = g_u(t) +\int_u^t K(t-s)b(X_s)\mathrm ds +\int_u^t K(t-s)\sigma(\omega,X_s)\mathrm dW_s^{\mathbb Q},
\end{align}
where
$$
g_u(t)=K(t-u)\sigma^\varepsilon(Y_u^\varepsilon),
\qquad b(x)=-\tilde{\kappa}x,
\qquad
\sigma(\omega,x)
=(\sigma^\varepsilon)'(Y^\varepsilon_s(\omega)) x.
$$
Observe that both coefficients vanish at the boundary $x=0$, i.e.~$b(0)=0, \sigma(\omega,0)=0$
almost surely. Moreover, the $\mathcal F_u$-measurable input curve $g_u$ is admissible in the sense of \cite[Eq.~(2.5)]{abi2019markovian}. Indeed,
$
K(\cdot-u)=K*\delta_u
$, with $\delta_u$ the Dirac mass at $u$, so that the computations of \cite[Example~2.2(ii)]{abi2019markovian} extend immediately to this setting. Therefore, the proof of \cite[Thm.~A.2]{abi2019markovian} carries over, with only minor modifications to accommodate the random input curve $g_u$ and the random coefficient $\sigma$, yielding the existence of a solution  $X$ to \eqref{eq:Xvolt} that remains nonnegative. Since the coefficients $b$ and $\sigma$ are Lipschitz continuous in $x$ (with a deterministic Lipschitz constant since the random coefficient in $\sigma$ is bounded), pathwise uniqueness holds for \eqref{eq:Xvolt}. Consequently, by strong uniqueness, the solution $(D_uY_t^\varepsilon)_{t\ge u}$ coincides with the nonnegative solution, and therefore
$ D_uY_t^\varepsilon \ge 0$, for all $t\ge u,$
almost surely.

                Moreover, adapting the proof of \cite[Lem.~3.1]{AbiJaber2019} (resp.\ \cite[Thm.~A.1]{abi2019markovian}) to accomodate a random input curve yields that \[
                        \sup_{0 \le u \le t \le T} \E^{\Q}[|D_u Y^\varepsilon_t|^2] < \infty
                \] for all $T > 0$ since $\sup_{0 \le u \le t \le T} |g_u(t)| < \infty$.
                Hence, we have $\E^\Q[\int_0^T \int_u^T |D_u \eta^\varepsilon(Y^\varepsilon_t)|^2 \d t \d u] < \infty$ and $D_u \eta^\varepsilon(Y^\varepsilon_t) \ge 0$ by the Malliavin chain rule (see \cite[Prop.~1.2.4]{Nualart2006}).
                
                Thus, we obtain $Z^{\varepsilon} \le 0$ by \cref{properties:volatility:non-positive}.

				Notice that the coefficients $b, \sigma^{\varepsilon}$ satisfy a common linear growth condition.
				Hence, the family $(Y^{\varepsilon})_{\varepsilon \in (0,1)}$ is tight over $C$ by \cite[Lem.~A.1]{AbiJaber2019}.
		By \cite[Lem.~A.2]{AbiJaber2019}, there exists a subsequence $(\varepsilon_n)_{n \in \N}$ with $\varepsilon_n \to 0$ and $(Y^{\varepsilon_n}, \tilde{M}^{\varepsilon_n}) \to (Y, \tilde{M})$ weakly over $C \times C$ for processes $(Y, M)$ that satisfy \[
						Y_t = Y_0 + \int_{0}^{t} K(t-s) b(Y_s) + \int_{0}^{t} K(t-s) \nu \sqrt{Y_s}  \d B^{\Q}_s	
				\] and $\tilde{M} = \int \sqrt{Y_s} \d B^{\Q}_s $ for some Brownian motion $B^{\Q}$, i.e.\ the limiting model is exactly the Volterra Heston model.

				It remains to verify the conditions of \cref{stability:uniformly_wellposed}.
		Note that $(H^{\varepsilon_n}, \tilde{M}^{\varepsilon_n}) \to (H, \tilde{M})$ weakly over $C \times C$ by the continuous mapping theorem.
				Moreover, notice that $H^{\varepsilon_n}_s - H^{\varepsilon_n}_t \ge \eta_0 (s-t)$ for all $n \in \N$.
				For all $t \ge 0$, there exists a constant $C_t$ (independent of $n$) s.t.\ \[
						\sup_{s \le t} \E^{\Q}[(Y^{\varepsilon_n}_s)^2] \le C_t 
				\] by \cite[Lem.~3.1]{AbiJaber2019}.
				Hence, \[
						\sup_n \E^{\Q}[H^{\varepsilon_n}_t] = \sup_n \int_{0}^{t} \E^{\Q}[\eta^{\varepsilon_n}(Y^{\varepsilon_n}_s)] \d s \le \eta_0 t + \eta_1 t \sup_n \sup_{s \le t} \E^{\Q}[Y^{\varepsilon_n}_s] \le \eta_0 t + \eta_1 t \sqrt{C_t}  < \infty
				\] for all $t \ge 0$.
				Thus, all conditions of \cref{stability:uniformly_wellposed} are satisfied, and so $\int_{0}^{\cdot} Z^{\varepsilon_n}_s \d \langle \tilde{M}^{\varepsilon_n} \rangle_s \to \int_{0}^{\cdot} Z^F_s \d U_s$ weakly over $C$.
				Since $Z^{\varepsilon_n} \le 0$ for all $n \in \N$, this implies that $Z^F \le 0$.

				\textbf{Step 2:}
				Now, we consider a general kernel $K$. 
				We will approximate $K$ by the shifted kernel $\Delta_{\varepsilon} K$, $\varepsilon \in (0, 1)$.
				Since $K$ is non-increasing and continuously differentiable on $(0, \infty)$, $\Delta_{\varepsilon} K$ is non-singular and locally Lipschitz for all $\varepsilon > 0$.
				Moreover, $\Delta_{\varepsilon} K \to K$ in $L^{1}_{loc}$.

				Let $(U^{\varepsilon}, \tilde{M}^{\varepsilon}, \mathbb{F}^{\varepsilon})$ be a weak solution to \[
						U^{\varepsilon}_t = G_0(t) + \int_{0}^{t} \Delta_\varepsilon K(t-s) (-\tilde{\kappa} U^{\varepsilon}_s + \nu \tilde{M}^{\varepsilon}_s) \d s
				\] under a measure $\Q^{\varepsilon}$, where $U^{\varepsilon}$ is continuous, non-negative, and non-decreasing, $\tilde{M}^{\varepsilon}$ is a continuous local martingale with $\langle \tilde{M}^{\varepsilon} \rangle = U^{\varepsilon}$.
				By the proof of \cite[Thm.~2.8]{AbiJaber2021}, there exists a subsequence $\varepsilon_n$ with $\varepsilon_n \to 0$ s.t.\ $(U^{\varepsilon_n}, \tilde{M}^{\varepsilon_n}) \to (U, \tilde{M})$ weakly over $C \times C$ as $n \to \infty$, where $U$ is a continuous non-decreasing solution to \[
						U_t = G_0(t) + \int_{0}^{t} K(t-s) (-\tilde{\kappa} U_s + \nu \tilde{M}_s) \d s
				\] and $\tilde{M}$ is a continuous local martingale with $U = U$.

				Set $H^{\varepsilon}_t = \eta_0 t + \eta_1 U^{\varepsilon}_t$.
				Let $F^{\varepsilon}$ be the solution to \eqref{hjb} in the model with stochastic factor $(H^{\varepsilon}, \tilde{M}^{\varepsilon})$ and control process $Z^{\varepsilon}$.

				As above, it remains to verify the conditions of \cref{stability:uniformly_wellposed}.
				Since $H$ is affine in $U$, we have $(H^{\varepsilon_n}, \tilde{M}^{\varepsilon_n}) \to (H, \tilde{M})$ weakly over $C \times C$.
				We clearly have $H^{\varepsilon_n}_s - H^{\varepsilon_n}_t \ge \eta_0 (s-t)$ for all $0 \le t \le s$.
				By \cite[Lem.~3.1]{AbiJaber2021}, there exists a constant $C_t$ (independent of $\varepsilon$) s.t.\ \[
						\E^{\Q^{\varepsilon_n}}[U_t^2] \le C_t (1 + G_0(t)^2 + \|\Delta_\varepsilon K\|_{L^{1}([0, t])}) (1 + \|R^K_{\varepsilon}\|_{L^{1}([0, t])}) 
				\] for all $t \ge 0$, where $R^K_{\varepsilon}$ is the resolvent of the second kind\footnote{The resolvent of the second kind $R^K \in L^{1}_{loc}$ is the unique solution to $R^K * K = K * R^K = R^K - K$. The resolvent of the second kind exists for any kernel $K \in L^{1}_{loc}$, see \cite[Thm.~2.3.1,2.3.5]{Gripenberg1990}. If $K \ge 0$, then $R^K \ge 0$ as well. Note that \cite{Gripenberg1990} uses the alternative convention $R^K * K = K * R^K = K - R^K$.} of the kernel $C_t \|\Delta_\varepsilon K\|_{L^{1}([0, t])} \Delta_\varepsilon K$. 
				Since the dependence of the resolvent of the second kind on the kernel is $L^{1}_{loc}$-continuous (see \cite[Thm.~2.3.1]{Gripenberg1990}) and $\Delta_{\varepsilon_n} K \to K$ in $L^{1}_{loc}$, we have $\sup_n \{ \|\Delta_{\varepsilon_n} K\|_{L^{1}([0, t])} + \|R^K_{\varepsilon_n}\|_{L^{1}([0, t])}\} < \infty$, and hence \[
						\sup_n \E^{\Q^{\varepsilon_n}}[H^{\varepsilon_n}_t] = \eta_0 t + \eta_1 \sup_n \E^{\Q^{\varepsilon_n}}[U^{\varepsilon_n}_t] < \infty
				\] for all $t \ge 0$.
				Thus, \cref{stability:uniformly_wellposed} yields $\int_{0}^{\cdot} Z^{\varepsilon_n}_s \d \langle \tilde{M}^{\varepsilon_n} \rangle_s \to \int_{0}^{\cdot} Z^F_s \d U_s$ weakly over $C$.
				Since $Z^{\varepsilon_n} \le 0$ for all $n \in \N$ by Step 1, we also have $Z^F \le 0$.
		\end{proof}
        
        The next result is a straightforward extension of the a priori estimate for the solution to an affine stochastic Volterra equation with $L^1_{\mathrm{loc}}$-kernel from \cite[Lem.~3.1]{AbiJaber2021} to subsolutions.
        
		\begin{lemma}
				\label{volterra:a_priori_estimate}

				Fix $K \in L^{1}_{loc}(\R_+, \R)$ and $G_0$ locally bounded. 
				Assume that there exists a non-decreasing non-negative and adapted process $\tilde{X}$ satisfying \[
						\tilde{X}_t \le G_0(t) + \int_{0}^{t} K(t-s) \tilde{Z}_s \d s 
				,\] where $\tilde{Z}$ is a semimartingale with characteristics $(\tilde{B}, \tilde{C}, \tilde{\nu})$ s.t.\ \[
				|\tilde{B}_t| + |\tilde{C}_t| + \int_{[0,t] \times \R} \zeta^2 \tilde{\nu}_s(\d \zeta) \le \kappa_L \tilde{X}_t \text{ a.s.} 
				\] for some constant $\kappa_L$ and all $t \ge 0$.
				Then, for all $T > 0$, \[
						\E[\sup_{t \le T} \tilde{X}_t^2] \le C(T, \kappa_L) (1 + \sup_{t \le T} |G_0(t)|^2 + \|K\|_{L^{1}([0, T])}^2) (1 + \|R^K\|_{L^{1}([0, T])}
				,\] where $C(T, \kappa_L) > 0$ depends only on $(T, \kappa_L)$, and $R^K$ is the resolvent of the second kind of the kernel $C(T, \kappa_L) \|K\|_{L^{1}([0, T])} |K|$.
		\end{lemma}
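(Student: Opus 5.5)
The plan is to follow the proof of the a priori estimate in \cite[Lem.~3.1]{AbiJaber2021}. Since that argument only uses the Volterra equation as an upper bound on a non-negative quantity, it should carry over to the inequality at hand.

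First I localise. Set $\tau_n = \inf\{t : \tilde{X}_t \ge n\} \wedge T$, so that all moments below are finite, and let $n\to\infty$ at the end by monotone convergence. Because $\tilde{X}$ is non-decreasing and non-negative, $\sup_{t\le T}\tilde{X}_t^2 = \tilde{X}_T^2$. It therefore suffices to bound $f(t) := \E[\tilde{X}_{t\wedge\tau_n}^2]$, or rather a version of $f$ in which the stochastic terms are stopped. That bound must be uniform in $n$.

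Next, write $\tilde{Z} = \tilde{Z}_0 + \tilde{B} + \tilde{M}^c + (\text{purely discontinuous martingale part})$. Taking the truncation so that the jump part is a square-integrable martingale after localisation, the martingale part $\tilde{N}$ has predictable quadratic variation $\tilde{C} + \int \zeta^2 \tilde{\nu}(d\zeta)$, which is bounded by $\kappa_L \tilde{X}$. Squaring the inequality (the left side is non-negative) and using Young's inequality for convolutions gives
\[
\tilde{X}_t^2 \le 3G_0(t)^2 + 3\Bigl(\int_0^t |K(t-s)|\,|\tilde{B}_s|\d s\Bigr)^2 + 3\Bigl(\int_0^t |K(t-s)|\,|\tilde{N}_s|\d s\Bigr)^2 .
\]
I would absorb $\tilde{Z}_0$ into $G_0$, or assume $\tilde{Z}_0 = 0$ as in the cited lemma. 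By Cauchy--Schwarz with respect to the finite measure $|K(t-s)|\d s$, each convolution term is at most $\|K\|_{L^1([0,t])}\int_0^t |K(t-s)|\,Y_s^2 \d s$, where $Y$ stands for $|\tilde{B}|$ or $|\tilde{N}|$.

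Taking expectations, I then use two bounds. For the drift, $|\tilde{B}_s|^2 \le \kappa_L^2 \tilde{X}_s^2$. For the martingale, Doob's inequality gives $\E[\sup_{u\le s}\tilde{N}_u^2] \le 4\E[\langle \tilde{N}\rangle_s] \le 4\kappa_L \E[\tilde{X}_s] \le 2\kappa_L(1+\E[\tilde{X}_s^2])$. Combining, this yields the convolution inequality
\[
f(t) \le c\bigl(\sup_{t\le T}G_0^2 + \|K\|_{L^1([0,T])}^2\bigr) + c\,\|K\|_{L^1([0,T])} \int_0^t |K(t-s)|\, f(s) \d s
\]
with $c = c(\kappa_L)$. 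The generalised Gronwall lemma for Volterra inequalities \cite[Thm.~9.8.2]{Gripenberg1990} applies with the non-negative kernel $c\|K\|_{L^1([0,T])}|K|$. Since its resolvent $R^K$ is non-negative, we obtain $f(t) \le A(1 + \|R^K\|_{L^1([0,T])})$. Letting $n\to\infty$ gives the claim at $t = T$, and hence for $\sup_{t\le T}\tilde{X}_t^2$.

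The main obstacle is localisation. Along $\tau_n$ the inequality for $\tilde{X}_{t\wedge\tau_n}$ is not of convolution type in $t\wedge\tau_n$. I would instead work with $f(t) = \E[\tilde{X}_t^2\mathbbm{1}_{\{t<\tau_n\}}]$, or bound the right-hand side via $\tilde{X}_s \le \tilde{X}_{s\wedge\tau_n}$ on $\{s<\tau_n\}$ together with the martingale stopped at $\tau_n$, as done in \cite[Lem.~3.1]{AbiJaber2021}. Monotonicity of $\tilde{X}$ is what makes this bookkeeping work.
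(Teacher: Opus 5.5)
Your proposal takes essentially the same route as the paper: the paper also just reruns \cite[Lem.~3.1]{AbiJaber2021}, noting that the Volterra relation enters that argument only through the localised bound $\tilde X_t\mathbbm{1}_{\{t<\tau_n\}}\le|G_0(t)|+\bigl|\int_0^tK(t-s)\tilde Z_s\mathbbm{1}_{\{s<\tau_n\}}\,\mathrm{d}s\bigr|$. This bound survives replacing the equation by the inequality exactly because $\tilde X\ge0$, which is your squaring step, and the paper defers everything else, including the localisation you flag, to the cited proof. If you write that localisation out yourself, use the running supremum $\E[\sup_{u\le t}\tilde X_u^2\mathbbm{1}_{\{u<\tau_n\}}]$ rather than $\E[\tilde X_t^2\mathbbm{1}_{\{t<\tau_n\}}]$: monotonicity of $\tilde X$ lets you bound the former by a convolution in $t$, whereas the latter's martingale term picks up $\tilde X$ at time $\tau_n$, and it vanishes at $t=T$ if $\tau_n$ is capped at $T$.
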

		\begin{proof}
				Notice that the proof of \cite[Lem.~3.1]{AbiJaber2021} only depends on the relation $\tilde{X}_t = G_0(t) + \int_{0}^{t} K(t-s) \tilde{Z}_s \d s $ through establishing that \[
						|\tilde{X}_t| \mathbbm{1}_{\{t < \tau_n\}} \le |G_0(t)| + \left| \int_{0}^{t} K(t-s) \tilde{Z}_s \mathbbm{1}_{\{s < \tau_n\}} \d s \right|
				,\] where $\tau_n$ is some stopping time.
				Since $\tilde{X}$ is non-negative, the latter also holds true if we only demand $\tilde{X}_t \le G_0(t) + \int_{0}^{t} K(t-s) \tilde{Z}_s \d s $.
				The rest of the proof goes through unchanged.
		\end{proof}

		\printbibliography
\end{document}